\documentclass[12pt]{article}
\usepackage[margin=1in]{geometry}
\usepackage{amsmath,amssymb,amsfonts,amsthm,mathtools}
\usepackage{booktabs}
\usepackage{array}
\usepackage{enumitem}
\usepackage{microtype}
\usepackage[colorlinks=true,linkcolor=blue,citecolor=blue,urlcolor=blue]{hyperref}

\newtheorem{theorem}{Theorem}
\newtheorem{lemma}{Lemma}
\newtheorem{corollary}{Corollary}
\newtheorem{proposition}{Proposition}
\theoremstyle{definition}
\newtheorem{assumption}{Assumption}
\newtheorem{definition}{Definition}
\theoremstyle{remark}
\newtheorem{remark}{Remark}
\newtheorem{example}{Example}
\newenvironment{examplecont}[1]{\par\medskip\noindent\textit{#1.}\enspace\ignorespaces}{\par\medskip}

\newcommand{\bY}{\mathbf{Y}}
\newcommand{\bZ}{\mathbf{Z}}
\newcommand{\bD}{\mathbf{D}}
\newcommand{\bDt}{\widetilde{\mathbf{D}}}
\newcommand{\bW}{\mathbf{W}}
\newcommand{\bX}{\mathbf{X}}
\newcommand{\bR}{\mathbf{R}}
\newcommand{\bA}{\mathbf{A}}
\newcommand{\bB}{\mathbf{B}}
\newcommand{\bC}{\mathbf{C}}
\newcommand{\bM}{\mathbf{M}}
\newcommand{\bV}{\mathbf{V}}
\newcommand{\bG}{\mathbf{G}}
\newcommand{\bH}{\mathbf{H}}
\newcommand{\bI}{\mathbf{I}}
\newcommand{\ba}{\mathbf{a}}
\newcommand{\bb}{\mathbf{b}}
\newcommand{\bk}{\mathbf{k}}
\newcommand{\bc}{\mathbf{c}}
\newcommand{\bx}{\mathbf{x}}
\newcommand{\bu}{\mathbf{u}}
\newcommand{\bv}{\mathbf{v}}
\newcommand{\bzero}{\mathbf{0}}

\newcommand{\bmu}{\boldsymbol{\mu}}
\newcommand{\bdel}{\boldsymbol{\delta}}
\newcommand{\bgam}{\boldsymbol{\gamma}}
\newcommand{\bth}{\boldsymbol{\theta}}
\newcommand{\bSig}{\boldsymbol{\Sigma}}
\newcommand{\bGam}{\boldsymbol{\Gamma}}
\newcommand{\bpsi}{\boldsymbol{\psi}}
\newcommand{\blam}{\boldsymbol{\lambda}}
\newcommand{\E}{\mathbb{E}}
\newcommand{\Var}{\operatorname{Var}}
\newcommand{\Cov}{\operatorname{Cov}}
\newcommand{\rank}{\operatorname{rank}}
\newcommand{\rowsp}{\operatorname{row}}
\newcommand{\suppp}{\operatorname{supp}}
\newcommand{\diag}{\operatorname{diag}}
\newcommand{\dto}{\overset{d}{\longrightarrow}}
\newcommand{\pto}{\overset{p}{\longrightarrow}}
\newcommand{\ind}{\perp\!\!\!\perp}
\newcommand{\cX}{\mathcal{X}}
\newcommand{\cW}{\mathcal{W}}
\newcommand{\cP}{\mathcal{P}}
\newcommand{\cI}{\mathcal{I}}
\newcommand{\cN}{\mathcal{N}}
\newcommand{\cM}{\mathcal{M}}

\title{Moments of Random Coefficients in Short Panels%
\thanks{Botosaru gratefully acknowledges financial support from the Canada Research Chairs Program.  During the preparation of this manuscript, the authors used Claude (Anthropic) to revise expository text and examples from the authors' theorems, proofs, and notes; check mathematical derivations; and check consistency of the LaTeX source. The substantive results and arguments originated with the authors.}}
\author{%
  Irene Botosaru\thanks{McMaster University, Department of Economics. Email: {\tt botosari@mcmaster.ca}},
  James L. Powell\thanks{University of Arizona, Department of Economics. Email: {\tt jlpowell@arizona.edu}}}
\date{August 31, 2026}

\begin{document}
\maketitle

\begin{abstract}
We study identification and estimation of moments of random coefficients in short linear panels, allowing the number of heterogeneous coefficients to exceed the number of equations observed for each unit.
Under moment homogeneity, different regressor histories impose restrictions on the same moment vector.
We give necessary and sufficient conditions for these restrictions to identify moments of a given order, stated in terms of the row spaces generated by the regressor support.
The results show that moments may be identified even when the coefficients cannot be recovered for any individual, and, for two full-row-rank histories, give the exact loss of independent restrictions caused by overlap of their row spaces.
When the support condition fails, we establish nonidentification in the maintained model and characterize the sharp identified set implied by these conditional moments.
At second order, the identified set is determined by positive-semidefinite covariance restrictions and is also sharp relative to the full joint distribution of outcomes and regressors.
Under the support condition, weighted minimum-distance estimators are root-$N$ asymptotically normal; under conditional nondegeneracy, oracle generalized-inverse weighting attains the Chamberlain (1987) efficiency bound for the maintained conditional-moment model.
\end{abstract}

\noindent\textbf{Keywords:} random coefficients; moment identification; panel data; partial identification; semidefinite programming; conditional moments.\\
\textbf{JEL codes:} C13, C14, C23.

\section{Introduction}\label{sec:intro}

Random coefficient models allow regression coefficients to vary across individuals.
We study identification and estimation of means, variances, covariances, and higher co-moments of random coefficients in short linear panels, allowing the number of heterogeneous coefficients to exceed the number of equations observed for each unit.
Within a unit, each equation reveals one linear combination of the coefficient vector, and the equations together reveal only coefficient combinations whose loading vectors lie in the row space of the unit's regressor matrix.
For a given regressor history, the conditional moments of the outcomes therefore restrict the moments of coefficient combinations associated with that row space.
Because regressor histories vary across units, the corresponding row spaces and restrictions can also vary.
Under moment homogeneity at a given order, the relevant conditional moments of the random coefficients are invariant across regressor histories.
The restrictions generated by different histories then apply to the same population moment vector and can be pooled.
Identification depends on whether the pooled restrictions determine all coefficient moments of that order.

Focusing on moments of a fixed order separates this finite-dimensional identification problem from recovery of the full coefficient distribution, which generally requires additional support or regularity conditions.
The panel problem is then to determine how many independent restrictions one regressor history contributes, when restrictions generated by different histories overlap, and whether variation across histories can identify population moments despite unit-level underdetermination.

Our first result gives a necessary and sufficient condition for point identification of the coefficient moments of a given order, stated in terms of the row spaces generated by the support of the regressors: the pooled restrictions have full column rank if and only if no nonzero homogeneous polynomial of that order vanishes on every row space in the support.
The condition is primitive in the sense that it refers only to the support of the regressor distribution, and it permits every unit-level system to be underdetermined: identification operates through variation of the row spaces across units, not through recovery of any unit's coefficient vector.
If the number of equations equals the number of random coefficients, one full-rank regressor history identifies moments of every order.
If every regressor matrix in the support has full row rank and the number of equations is one less than the number of random coefficients, moments of order $r$ are identified if and only if the support contains at least $r+1$ distinct row spaces.
In the cross-sectional model with a random intercept and a scalar random slope, this implies that $r+1$ distinct regressor values are necessary and sufficient for identification of the order-$r$ moments.
This recovers the sufficient support condition in Hermann and Holzmann (2025) and shows that it is also necessary.

The number of distinct regressor histories alone does not determine whether the rank condition holds.
Two full-row-rank histories may provide partly redundant restrictions if their row spaces overlap.
We derive an exact expression for the loss of independent restrictions generated by their intersection.
The identifying content of the panel therefore depends both on the dimension of each row space and on how the row spaces vary across regressor histories.
Simply counting equations or support points can overstate the information available about the coefficient moments.

The support conditions are nested across orders, so the condition at order $r$ is necessary and sufficient for joint identification of all moments through order $r$.
If the support condition fails, the conditional-moment equations leave a nontrivial affine class of candidate moment vectors: perturbing the true moments in any direction of the common null space yields a vector satisfying the same equations.
Rank failure alone does not establish nonidentification in the maintained model, because a perturbed vector must also be the moment vector of some coefficient distribution.
Theorem~\ref{thm:partialset} supplies this missing step by exhibiting coefficient distributions whose moment vectors differ in a null-space direction yet generate the same conditional moments of the outcomes through order $r$.
Failure of the support condition therefore implies nonidentification in the maintained model.
The same theorem characterizes the sharp identified set implied by the maintained conditional-moment restrictions: the affine class left unrestricted by those moments, intersected with the set of feasible moment vectors.
Linear functionals orthogonal to the unidentified directions remain identified.
At second order, feasibility is equivalent to positive semidefiniteness of the coefficient covariance matrix, so the sharp identified set is a spectrahedron, an affine slice of the positive semidefinite cone.
The resulting identified set is sharp relative to the joint distribution of outcomes and regressors, in the sense that features of the conditional distribution of the outcomes beyond its first two moments impose no additional restrictions on the mean and covariance of the random coefficients.
The second-order identified set is compact if and only if the first moments are identified, and sharp lower and upper bounds on linear functionals can be computed by semidefinite programming.
The binary-regressor variance bounds in Hermann and Holzmann (2025) arise as a closed-form special case.

Under the conditional nondegeneracy condition stated below, the same support condition is equivalent to nonsingularity of the information matrix.
When the support condition holds, estimation is a finite-dimensional conditional-moment problem: weighted minimum-distance estimators are consistent and root-$N$ asymptotically normal with heteroskedasticity-robust covariance estimators, and, under conditional nondegeneracy, weighting by the generalized inverse of the known conditional covariance matrix of the outcome moments attains the Chamberlain (1987) efficiency bound for the maintained conditional-moment model.
Feasible estimation of this optimal weight is not developed here.
Because identification at order $r$ implies identification at every lower order, the moments can be estimated jointly, which permits inference for raw moments and, through the polynomial transformation from raw to central moments, for means and central co-moments.

Moment homogeneity is weaker than independence of the coefficients and regressors, but it requires the coefficient moments under consideration to be invariant across regressor histories.
We consider two relaxations.
The first allows the conditional coefficient moments to vary linearly with known functions of the regressors.
Identification is then determined by the rank of the corresponding expanded system of conditional-moment restrictions; rich regressor support need not rule out observationally equivalent parameter values.
The second imposes moment homogeneity conditional on a control variable.
Estimation under these extensions is left for future work.

\subsection{Related literature}\label{sec:literature}

Within the econometric random-coefficient literature, Hermann and Holzmann (2025) provide the closest comparison for moment identification. 
They study a cross-sectional random coefficient regression with finitely supported regressors, derive conditions for identification of first and second moments, discuss higher-order mixed moments, and study partial identification with binary regressors. 
Their model corresponds to the cross-sectional case in which each regressor realization supplies one coefficient direction. 
In a panel or system of equations, a regressor history instead supplies an entire row space. 
This leads to additional restrictions involving the number and overlap of row spaces. 
One full-rank regressor matrix identifies moments of every fixed order when the number of equations equals the number of coefficients. 
When every regressor matrix in the support has full row rank and the number of equations is one less than the number of coefficients, $r+1$ distinct row spaces are necessary and sufficient for the order-$r$ support condition. 
Intersections among row spaces determine exactly how many independent order-$r$ restrictions are duplicated across regressor histories.
With a random intercept and a scalar regressor, our condition reduces to the $r+1$-support-point condition in Hermann and Holzmann (2025) and establishes its necessity.
Our second-order identified set also contains their binary-regressor variance bounds as a special case.
More generally, Theorem~\ref{thm:partialset} characterizes the sharp identified set when the support condition fails, and Corollary~\ref{cor:sharpdata} shows that at second order this set is sharp relative to the full observed distribution.

The paper is also related to the short-panel literature on correlated random coefficients.
Chamberlain (1992) studies average random coefficients and semiparametric efficiency while allowing unrestricted dependence between coefficients and regressors.
Graham and Powell (2012) identify average partial effects using stayers and the behavior of conditional means near singular regressor histories.
In their correlated random coefficient panel model, the semiparametric information for the average partial effect is singular even though the effect remains identified, so estimation is irregular.
By contrast, when $T<q$, each conditional information matrix in our model is singular; under conditional nondegeneracy and the support condition, aggregation across regressor histories yields a nonsingular  information matrix and regular root-$N$ estimation.
Arellano and Bonhomme (2012) identify variances and other distributional features by restricting the time-series dependence of idiosyncratic errors.
Lee (2026) studies dynamic random coefficient models with predetermined regressors and obtains partially identified means, variances, and distribution functions.
Our focus is instead on identification of fixed-order coefficient moments from cross-sectional variation in panel regressor histories under moment homogeneity.

The classical random coefficient regression literature already focused on low-order features of coefficient heterogeneity, including the mean and covariance matrix; see Hildreth and Houck (1968) and Swamy (1970). 
Later work studies nonparametric identification and estimation of the coefficient distribution; see Beran and Hall (1992), Beran, Feuerverger, and Hall (1996), Hoderlein, Klemel\"a, and Mammen (2010), Holzmann and Meister (2020), and Breunig (2021).  
These papers target nonparametric recovery of the coefficient distribution, an infinite-dimensional inverse problem, whereas our parameter is a fixed-order coefficient moment vector.

Gaillac and Gautier (2022) likewise study identification and estimation under limited regressor variation.  
They retain independence between the random coefficients and regressors and replace conventional large-support conditions with restrictions on the coefficient class, obtaining adaptive minimax density estimation.  
Gaillac and Gautier (2021) establish distributional identification under limited variation using quasi-analytic restrictions on the coefficient distribution.  
We impose no restrictions on the coefficient distribution; instead, we fix the moment order and ask whether the observed regressor histories identify those moments.  
Bounded or discrete support can therefore suffice.  
When the support condition for these moments fails, we characterize the resulting identified set rather than impose additional restrictions on the coefficient distribution to recover distributional point identification.

Masten (2018) targets the coefficient distribution in simultaneous-equation models using instruments, support conditions, and tail restrictions, whereas we target fixed-order coefficient moments under moment homogeneity and characterize identification from variation in regressor histories.

Related moment-recovery problems arise outside econometrics.
Bodmann, Ehler, and Gr\"af (2018) study reconstruction of moments of a common latent distribution from lower-dimensional linear measurements and characterize when a collection of measurement subspaces determines those moments; for a finite collection of measurements, their spanning condition is equivalent to the common-null-space condition used below.
Katsevich, Katsevich, and Singer (2015) estimate means and covariances of a latent signal from random linear projections.
The econometric problem differs in three respects.
First, the measurement matrices are realized regressor histories rather than designed projections, so identification rests on a support condition on the distribution of the regressors rather than on a chosen measurement design.
Second, the assumption of a common latent distribution across measurement matrices is replaced by moment homogeneity, which requires only the conditional coefficient moments of the maintained order to be invariant across regressor histories.
Third, when the common-null-space condition fails, establishing statistical nonidentification requires imposing moment feasibility on the affine alternatives; this yields the sharp identified-set results of Section~\ref{sec:partial}.
We also develop weighted minimum-distance estimation and an oracle efficiency benchmark for the resulting conditional-moment model.

Once the higher-order moment equations are formed, estimation is a finite-dimensional conditional-moment problem of the type studied by Chamberlain (1987). 
The identification analysis determines when the regressor support gives these equations full column rank.  
The efficiency result is therefore relative to the maintained conditional-moment model; we do not claim efficiency relative to all restrictions implied by the underlying structural random coefficient model.

The remainder of the paper is organized as follows.  
Section~\ref{sec:model} introduces the random coefficient system and develops the higher-order moment representation.  
Section~\ref{sec:identification} gives the necessary and sufficient support condition and its implications for short-panel regressor structures.  
Section~\ref{sec:estimation} develops estimation, asymptotic normality, robust inference, and oracle optimal weighting.  
Section~\ref{sec:stacked} treats joint estimation of moments through order \(r\), and Section~\ref{sec:central} develops the corresponding mean and central-moment parametrization.  
Section~\ref{sec:partial} establishes nonidentification in the maintained model when the support condition fails and characterizes the sharp identified sets.
Section~\ref{sec:discussion} discusses empirical diagnostics for regressor variation and extensions.  
Proofs and additional support results are collected in the appendices.

\section{Model and Moment Representation}\label{sec:model}

\subsection{Random coefficient system}\label{sec:setup}

We observe $\{(\bY_i,\bW_i):i=1,\ldots,N\}$ generated by
\begin{equation}\label{eq:model}
\underset{(T\times1)}{\bY_i}
=
\underset{(T\times q)}{\bW_i}
\underset{(q\times1)}{\bD_i},
\qquad T\le q,
\end{equation}
where $\bD_i$ is unobserved.
We refer to a realization $\bW_i=w$ as a \emph{regressor design}; it records the complete $T$-equation regressor history for unit $i$.
The row space $\rowsp(w)$ indexes the linear combinations of the random coefficients that can be formed from the equations observed under that regressor history.
The main case is $q>T$, in which $\bD_i$ cannot be recovered from $(\bY_i,\bW_i)$ even when $\bW_i$ has full row rank.
When $q=T$ and $\bW_i$ has full rank, $\bD_i$ is recovered exactly; this case provides a useful benchmark.
The identification analysis below concerns the distribution of $(\bY,\bW)$; the sampling scheme is imposed in Assumption~\ref{ass:sampling}.\footnote{Individual recovery of $\bD_i$ is not required for identification of its moments.
A given regressor history determines particular linear combinations of the coefficients, and different histories can determine different combinations.
Conditional-moment restrictions associated with individual histories may therefore be rank deficient while their counterparts, pooled across regressor histories, identify the coefficient moments.}

For a multi-index $\ba=(a_1,\ldots,a_T)'\in\mathbb{N}_0^T$, let
$|\ba|=\sum_{t=1}^T a_t$ and
$\bY^{\ba}=\prod_{t=1}^T Y_t^{a_t}$.  
For $\bb=(b_1,\ldots,b_q)'\in\mathbb{N}_0^q$, define analogously $|\bb|=\sum_{j=1}^q b_j$ and $\bD^{\bb}=\prod_{j=1}^qD_j^{b_j}$.  
Fix arbitrary orderings of the multi-indices of total degree $r$ and stack
\begin{align}
\bY^{[r]}&=\big(\bY^{\ba(1)},\ldots,\bY^{\ba(S_r)}\big)',
& S_r&=\binom{T+r-1}{r},\label{eq:Ylift}\\
\bD^{[r]}&=\big(\bD^{\bb(1)},\ldots,\bD^{\bb(Q_r)}\big)',
& Q_r&=\binom{q+r-1}{r},\label{eq:Dlift}\\
\bdel^{[r]}&=\E[\bD^{[r]}].\label{eq:deltar}
\end{align}
The orderings affect only the matrix representation, not the identifying content.

For $\bk=(k_1,\ldots,k_q)'\in\mathbb{N}_0^q$ with $|\bk|=a$, write $\binom{a}{\bk}=a!/(k_1!\cdots k_q!)$.  Let $\bW_t'$ denote the $t$th row of $\bW$ and $\bW_t^{\bk}=\prod_{j=1}^qW_{tj}^{k_j}$.

\subsection{Higher-order moment representation}\label{sec:representation}

\begin{lemma}[Higher-order moment representation]\label{lem:rep}
For every integer $r\ge1$ and every multi-index $\ba$ satisfying $|\ba|=r$,
\begin{equation}\label{eq:rep-entry}
\bY^{\ba}
=
\sum_{\bb:\,|\bb|=r}\bR_{r,\ba\bb}(\bW)\bD^{\bb},
\end{equation}
where
\begin{equation}\label{eq:Rentry}
\bR_{r,\ba\bb}(\bW)
=
\sum_{(\bk_1,\ldots,\bk_T)\in K_{\ba\bb}}
\prod_{t=1}^T
\binom{a_t}{\bk_t}\bW_t^{\bk_t},
\end{equation}
and
\begin{equation}\label{eq:Kab}
K_{\ba\bb}
=
\left\{(\bk_1,\ldots,\bk_T)\in(\mathbb{N}_0^q)^T:
|\bk_t|=a_t\ \forall t,\quad \sum_{t=1}^T\bk_t=\bb
\right\}.
\end{equation}
Consequently,
\begin{equation}\label{eq:lift}
\bY^{[r]}=\bR_r(\bW)\bD^{[r]},
\end{equation}
where $\bR_r(\bW)$ is an $S_r\times Q_r$ matrix whose entries are homogeneous polynomials of degree $r$ in the entries of $\bW$.
\end{lemma}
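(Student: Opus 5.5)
The identity is a purely algebraic expansion that holds pointwise for each realization of $(\bW,\bD)$. I would start from the model \eqref{eq:model}, which gives $Y_t=\bW_t'\bD=\sum_{j=1}^q W_{tj}D_j$ for each $t=1,\ldots,T$. For a fixed multi-index $\ba$ with $|\ba|=r$, the plan has three steps.

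\emph{Step 1.} Expand each factor $Y_t^{a_t}=(\sum_j W_{tj}D_j)^{a_t}$ by the multinomial theorem:
\[
Y_t^{a_t}=\sum_{\bk_t\in\mathbb{N}_0^q:\,|\bk_t|=a_t}\binom{a_t}{\bk_t}\bW_t^{\bk_t}\bD^{\bk_t}.
\]
When $a_t=0$ this reduces to the single term $\bk_t=\bzero$ with coefficient one, so no separate case is needed.

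\emph{Step 2.} Take the product over $t$ and distribute it. This gives a sum over tuples $(\bk_1,\ldots,\bk_T)$ with $|\bk_t|=a_t$ of $\prod_t\binom{a_t}{\bk_t}\bW_t^{\bk_t}$ times $\prod_t\bD^{\bk_t}=\bD^{\sum_t\bk_t}$.

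\emph{Step 3.} Group the tuples according to $\bb=\sum_t\bk_t$. Each such $\bb$ satisfies $|\bb|=\sum_t|\bk_t|=|\ba|=r$. Every tuple falls in exactly one class $K_{\ba\bb}$ as defined in \eqref{eq:Kab}, and the classes for different $\bb$ partition the full index set. Collecting the coefficient of $\bD^{\bb}$ therefore yields \eqref{eq:rep-entry} with \eqref{eq:Rentry}. If some $K_{\ba\bb}$ is empty, the corresponding entry is zero, which is consistent with the convention that an empty sum vanishes.

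To obtain \eqref{eq:lift}, I would stack \eqref{eq:rep-entry} over the fixed orderings $\ba(1),\ldots,\ba(S_r)$ and $\bb(1),\ldots,\bb(Q_r)$. The $(s,m)$ entry of $\bR_r(\bW)$ is then $\bR_{r,\ba(s)\bb(m)}(\bW)$. The dimensions $S_r$ and $Q_r$ are the standard stars-and-bars counts of multi-indices of total degree $r$ in $T$ and $q$ variables, respectively.

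For homogeneity of the entries, note that each summand $\prod_t\bW_t^{\bk_t}$ is a monomial in the entries of $\bW$ of total degree $\sum_t|\bk_t|=r$. The multinomial coefficients are constants, so every entry is a homogeneous polynomial of degree $r$ (possibly the zero polynomial).

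None of these steps is a real obstacle. The only point that needs care is the bookkeeping in Step 3: checking that the reindexing from tuples to pairs $(\bb,(\bk_t)_t)$ is a bijection onto $\bigsqcup_{\bb}K_{\ba\bb}$. This follows directly from the definition of $K_{\ba\bb}$.
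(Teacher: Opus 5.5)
Your proposal is correct and matches the paper's proof: apply the multinomial theorem to each $Y_t^{a_t}=(\bW_t'\bD)^{a_t}$, multiply over $t$, collect terms by $\bb=\sum_t\bk_t$, and stack over the orderings. Your explicit checks of the partition of tuples into the classes $K_{\ba\bb}$ and of degree-$r$ homogeneity of the entries fill in details that the paper leaves implicit.
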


The matrix $\bR_r(\bW)$ is the order-$r$ matrix induced by the linear map $d\mapsto\bW d$.
Its rank and null space determine which order-$r$ coefficient moments are restricted by a given regressor history.
Equivalently, $\bR_r(\bW)$ represents the induced linear map on homogeneous polynomials of degree $r$, a representation used below to characterize the rank condition directly in terms of the regressor support.

\begin{lemma}[Rank of the induced order-$r$ matrix $\bR_r(\bW)$]\label{lem:rank}
For every $|\ba|=|\bb|=r$, the set $K_{\ba\bb}$ is nonempty.  If $\rank(\bW)=T$, then
\begin{equation}\label{eq:Rrank}
\rank\big(\bR_r(\bW)\big)=S_r.
\end{equation}
\end{lemma}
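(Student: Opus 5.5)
The plan has two parts: a combinatorial check that $K_{\ba\bb}$ is nonempty, and an algebraic argument that $\bR_r(\bW)$ is surjective whenever $\bW$ has full row rank.

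\emph{Nonemptiness.} I will construct an element of $K_{\ba\bb}$ greedily. Write $\bb$ as a list of $r$ coordinate labels, with label $j$ repeated $b_j$ times, and list the $T$ equation slots with slot $t$ repeated $a_t$ times. Pair the two lists position by position. Then let $\bk_t$ count, for each label, how many times it is paired with slot $t$. By construction $|\bk_t|=a_t$ and $\sum_t\bk_t=\bb$, so this tuple lies in $K_{\ba\bb}$. This works because both lists have length $r$.

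\emph{Rank.} I will use the interpretation given right after Lemma~\ref{lem:rep}. Consider the polynomial map $d\mapsto\bW d$, so that $Y_t=\bW_t'd$. By Lemma~\ref{lem:rep}, row $\ba$ of $\bR_r(\bW)$ is the coefficient vector, in the monomial basis $\{d^{\bb}:|\bb|=r\}$, of the polynomial $\prod_t(\bW_t'd)^{a_t}$. Full row rank of $\bR_r(\bW)$ is therefore equivalent to linear independence of the $S_r$ degree-$r$ polynomials $\prod_t(\bW_t'd)^{a_t}$, for $|\ba|=r$, in $\mathbb{R}[d_1,\ldots,d_q]$.

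To prove that independence I will use the pullback by the linear map $\bW$. Let $\phi:\mathbb{R}[y_1,\ldots,y_T]\to\mathbb{R}[d_1,\ldots,d_q]$ send $p(y)$ to $p(\bW d)$. This is a ring homomorphism, and it maps the degree-$r$ monomial $y^{\ba}$ to the polynomial above. It therefore suffices to show $\phi$ is injective. Because $\rank(\bW)=T$, the map $d\mapsto\bW d$ from $\mathbb{R}^q$ to $\mathbb{R}^T$ is surjective. So if $p(\bW d)=0$ for all $d$, then $p$ vanishes on all of $\mathbb{R}^T$, and a real polynomial vanishing identically is the zero polynomial. Hence $\phi$ is injective on degree-$r$ forms, the images of the distinct monomials $y^{\ba}$ are linearly independent, and $\rank\bR_r(\bW)=S_r$.

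The only step needing care is matching the rows of $\bR_r(\bW)$ with coefficient vectors. I will verify this by expanding $\prod_t(\sum_j W_{tj}d_j)^{a_t}$ with the multinomial theorem, one factor per $t$, which reproduces exactly the sum over $K_{\ba\bb}$ in \eqref{eq:Rentry}. As an alternative that avoids identically vanishing polynomials, I could choose $q-T$ columns to complete $\bW$ to an invertible $q\times q$ matrix. The induced order-$r$ map is then invertible, with inverse induced by the inverse matrix. $\bR_r(\bW)$ is a submatrix of its rows (those $\ba$ supported on the first $T$ coordinates), and rows of an invertible matrix are linearly independent.
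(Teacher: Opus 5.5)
Your proposal is correct and follows the paper's proof: your position-by-position pairing is the northwest-corner array the paper uses for nonemptiness, and your injectivity-of-pullback argument is the paper's step that $c'\bR_r(\bW)=0'$ forces $\sum_{\ba}c_{\ba}y^{\ba}$ to vanish on $\bW\mathbb{R}^q=\mathbb{R}^T$, hence $c=0$. One small slip in your optional alternative: to complete the $T\times q$ matrix $\bW$ to an invertible $q\times q$ matrix you must append $q-T$ rows, not columns.
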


When $q>T$, Lemma~\ref{lem:rank} gives full \emph{row} rank but $S_r<Q_r$, so the order-$r$ system is underdetermined at every full-row-rank regressor history.
Identification therefore cannot be obtained by inverting the system at a single history.
It depends instead on whether the matrices associated with the histories in the regressor support have a common nonzero null direction.
The next section characterizes this condition in terms of the row spaces of the original regressor matrices.

\begin{example}[Computing an entry of $\bR_r$]\label{ex:entry}
Let $T=3$, $q=4$, $\ba=(2,1,0)'$, and $\bb=(0,1,2,0)'$, so $r=3$ and $\bY^{\ba}=Y_1^2Y_2$.  The set $K_{\ba\bb}$ contains exactly two elements,
\[
\big((0,0,2,0)',(0,1,0,0)',\bzero\big)
\qquad\text{and}\qquad
\big((0,1,1,0)',(0,0,1,0)',\bzero\big),
\]
with multinomial weights $\binom{2}{0,0,2,0}\binom{1}{0,1,0,0}=1$ and $\binom{2}{0,1,1,0}\binom{1}{0,0,1,0}=2$, so \eqref{eq:Rentry} gives
\begin{equation}\label{eq:entry-example}
\bR_{3,\ba\bb}(\bW)
=
W_{13}^2W_{22}
+
2\,W_{12}W_{13}W_{23},
\end{equation}
the coefficient on $\bD^{\bb}=D_2D_3^2$ in the expansion of
$Y_1^2Y_2$.
The first term results from selecting $D_3$ from both factors of $Y_1$ and
$D_2$ from $Y_2$.
The second results from selecting $D_2$ and $D_3$ from the two factors of
$Y_1$ and $D_3$ from $Y_2$, with multiplicity two.
\end{example}

\begin{remark}[Zero design entries and the identity design]\label{rem:zeroentries}
If $W_{tj}=0$, every term of \eqref{eq:Rentry} with $k_{tj}>0$ vanishes.  For $\bW=\bI_q$, the only surviving array concentrates each $\bk_t$ on coordinate $t$, forcing $\bb=\ba$, so $\bY^{\ba}=\bD^{\ba}$ and, under a common ordering of the multi-indices, $\bR_r(\bI_q)=\bI_{Q_r}$.  Rows that are coordinate vectors, as in Example~\ref{ex:runningA} below, likewise select single products $\bD^{\bb}$.
\end{remark}

\subsection{Two examples}\label{sec:examples}

\begin{example}[Scalar outcome]\label{ex:scalar}
Let $T=1$, $q=2$, and $\bW=(1,X)$.  Then $Y=D_0+XD_1$.  For $r=2$,
\begin{equation}\label{eq:scalar2}
Y^2
=
D_0^2+2X D_0D_1+X^2D_1^2,
\end{equation}
so
\[
\bR_2(X)=(1,2X,X^2),
\qquad
\bdel^{[2]}
=
\big(\E[D_0^2],\E[D_0D_1],\E[D_1^2]\big)'.
\]
With a scalar positive weight $A_2(X)$, the second-moment vector is identified
if and only if $1$, $X$, and $X^2$ are linearly independent in the weighted
$L^2$ space induced by the distribution of $X$. If $X$ has finite support, three distinct support points suffice.  More generally, for order $r$, $\bR_r(X)$ contains $1,X,\ldots,X^r$ up to multinomial coefficients, so $r+1$ distinct support points suffice to identify the $r+1$ order-$r$ mixed moments.  Hermann and Holzmann (2025) give this finite-support sufficient condition; Theorem~\ref{thm:hyper} below shows that, in this scalar case, the condition is also necessary.
\end{example}

\begin{example}[A running example: two underdetermined regressor histories]\label{ex:runningA}
For arbitrary $T\le q$, under an ordering consistent with
$\operatorname{vech}$, the representation at $r=2$ is
\begin{equation}\label{eq:vech}
\operatorname{vech}(\bY\bY')
=
\bR_2(\bW)\operatorname{vech}(\bD\bD').
\end{equation}
Thus each regressor history gives $T(T+1)/2$ equations relating
$q(q+1)/2$ distinct second-order coefficient products.
After taking conditional expectations and imposing second-order moment
homogeneity, these equations restrict the corresponding second
moments.
The following specialization recurs throughout the paper.

Let $q=3$ and $T=2$, and write $\bD=(D_1,D_2,D_3)'$.  Suppose the support of $\bW$ initially consists of
\[
\bW^{(1)}
=
\begin{pmatrix}
1&0&0\\
0&1&0
\end{pmatrix},
\qquad
\bW^{(2)}
=
\begin{pmatrix}
1&0&0\\
0&0&1
\end{pmatrix},
\]
each occurring with positive probability.  The corresponding systems are
\[
\bY
=
\begin{cases}
(D_1,D_2)', & \bW=\bW^{(1)},\\[2pt]
(D_1,D_3)', & \bW=\bW^{(2)}.
\end{cases}
\]
Each realized system contains two equations for three random coefficients, so $\bD$ cannot be recovered unit by unit.

Maintain first- and second-order moment homogeneity and write
\[
\bmu
=
\E[\bD]
=
(\mu_1,\mu_2,\mu_3)'
\]
and
\[
\E[\bD\bD']
=
\begin{pmatrix}
m_{11} & m_{12} & m_{13}\\
m_{12} & m_{22} & m_{23}\\
m_{13} & m_{23} & m_{33}
\end{pmatrix}.
\]
The conditional means identify
\[
\E[\bY\mid\bW^{(1)}]
=
\begin{pmatrix}\mu_1\\ \mu_2\end{pmatrix},
\qquad
\E[\bY\mid\bW^{(2)}]
=
\begin{pmatrix}\mu_1\\ \mu_3\end{pmatrix},
\]
so all three first moments are point identified.

At second order, use the ordering
\[
\bdel^{[2]}
=
(m_{11},m_{12},m_{13},m_{22},m_{23},m_{33})'.
\]
Then
\[
\bR_2(\bW^{(1)})
=
\begin{pmatrix}
1&0&0&0&0&0\\
0&1&0&0&0&0\\
0&0&0&1&0&0
\end{pmatrix},
\qquad
\bR_2(\bW^{(2)})
=
\begin{pmatrix}
1&0&0&0&0&0\\
0&0&1&0&0&0\\
0&0&0&0&0&1
\end{pmatrix}.
\]
Hence the first design identifies $m_{11}$, $m_{12}$, and $m_{22}$, and the second identifies $m_{11}$, $m_{13}$, and $m_{33}$.  Five of the six second moments are therefore identified.  The only moment not determined by these two systems is $m_{23}=\E[D_2D_3]$.

Thus neither regressor history identifies $\bD$ unit by unit, but together the two histories identify all first moments and five of the six second moments.
\end{example}

\section{Identification}\label{sec:identification}

Fix an integer $r\ge1$.
Under moment homogeneity, the higher-order moment representation gives
\[
\E[\bY^{[r]}\mid\bW=w]
=
\bR_r(w)\bdel^{[r]}.
\]
Identification of $\bdel^{[r]}$ therefore depends on whether the map
\[
\mathcal T_r(\delta)(w)
=
\bR_r(w)\delta,
\qquad w\in\cW,
\]
is injective over the regressor support.
We first express this requirement as a rank condition and then
characterize it directly in terms of the row spaces generated by the support
of the regressors.

\begin{assumption}[Structural equation]\label{ass:model}
Equation \eqref{eq:model} holds with $T\le q$.
\end{assumption}

\begin{assumption}[$r$th-order moment homogeneity]\label{ass:moment}
The vector $\bD^{[r]}$ is integrable and
\begin{equation}\label{eq:momenthom}
\E[\bD^{[r]}\mid\bW]
=
\E[\bD^{[r]}]
=
\bdel^{[r]}
\quad\text{a.s.}
\end{equation}
\end{assumption}

Assumption~\ref{ass:moment} is weaker than statistical independence
$\bD\ind\bW$.
It restricts only the order-$r$ conditional moments of $\bD$, not its full
conditional distribution.
Because $\bW$ contains the complete $T$-equation regressor history, the
restriction conditions on the entire history.
If instead
\[
m_r(w)=\E[\bD^{[r]}\mid\bW=w]
\]
were unrestricted in $w$, then
\[
\E[\bY^{[r]}\mid\bW=w]
=
\bR_r(w)m_r(w),
\]
so different regressor histories would generally restrict different
$Q_r$-vectors and their restrictions could not be pooled to identify a common
$\bdel^{[r]}$ when $q>T$.
Full row rank of $\bW$ is not required for the support characterization below;
it is imposed only when we count the maximum number of independent order-$r$
restrictions supplied by a single regressor history.

\begin{remark}[Finite-dimensional relaxation of moment homogeneity]
\label{rem:crc-relax}
Moment homogeneity can be relaxed without allowing $m_r(w)$ to vary
unrestrictedly.
Suppose, for a known $K$-vector of functions $z(\bW)$,
\begin{equation}\label{eq:crc-relax}
\E[\bD^{[r]}\mid\bW]=\bC_r z(\bW),
\qquad
\bC_r\in\mathbb R^{Q_r\times K}.
\end{equation}
Then
\begin{equation}\label{eq:crc-lift}
\E[\bY^{[r]}\mid\bW]
=
\bR_r(\bW)\bC_r z(\bW)
=
\bZ_r(\bW)\operatorname{vec}(\bC_r),
\qquad
\bZ_r(\bW)=z(\bW)'\otimes\bR_r(\bW).
\end{equation}
The conditional-moment model remains finite dimensional and linear in
$\operatorname{vec}(\bC_r)$.
Nonsingularity of
$\E[\bZ_r'\bA_r\bZ_r]$, for a symmetric and almost surely positive definite
weight $\bA_r$ of conformable dimension with
$\E\|\bZ_r'\bA_r\bZ_r\|<\infty$, is necessary and
sufficient for injectivity of the conditional-moment map in
$\operatorname{vec}(\bC_r)$.
The unconditional order-$r$ moment vector is then
\[
\bdel^{[r]}=\bC_r\E[z(\bW)],
\]
whenever the expectation exists, and the baseline model is the special case
$K=1$ and $z(\bW)\equiv1$.

Define the null space of the expanded conditional-moment matrix by
\[
\cN_r^{\bZ}
=
\{c:\bZ_r(w)c=\bzero\ \text{for all }w\in\suppp(\bW)\}.
\]
Its elements are perturbation directions in $\operatorname{vec}(\bC_r)$ left
unrestricted by the conditional-moment equations.
To conclude that such directions correspond to observationally equivalent
structures satisfying the maintained model, one would additionally need to verify feasibility of
the perturbed conditional-moment specifications, as is done for the baseline
model in Section~\ref{sec:partial}.

Rich support of $\bW$ need not eliminate $\cN_r^{\bZ}$.
When components of $z(\bW)$ also enter the regressor design, columns of
$\bZ_r(\bW)$ can be linearly dependent for every realization of $\bW$,
regardless of the richness of its support.
The simplest case is $T=1$, $r=1$, $\bW=(1,X)$, and
$z(\bW)=(1,X)'$, for which
\[
\bZ_1(x)
=
(1,x)\otimes(1,x)
=
(1,\;x,\;x,\;x^2).
\]
The four entries of $\operatorname{vec}(\bC_1)$ therefore multiply only three
linearly independent functions of $x$, and
$(0,1,-1,0)'\in\cN_1^{\bZ}$ for every distribution of $X$.
Only $C_{11}$, $C_{22}$, and the sum $C_{21}+C_{12}$ are identified,
however rich the support of $X$.

This specification allows the order-$r$ conditional moments of the
coefficients to vary systematically with $\bW$ while retaining a
finite-dimensional parameterization.
Identification is then governed by the rank of the expanded
conditional-moment system rather than by Theorem~\ref{thm:support}.
The expanded system can remain rank deficient even under arbitrarily rich
regressor support, so we do not pursue a primitive support characterization
for this extension.
\end{remark}

\begin{remark}[Control-variable relaxation]\label{rem:control-relax}
A different relaxation preserves the original support condition
conditionally.
Suppose an observed control variable $V$ satisfies
\begin{equation}\label{eq:control-relax}
\E[\bD^{[r]}\mid\bW,V]=m_r(V)
\end{equation}
for an unknown function $m_r$.
Then
\begin{equation}\label{eq:control-lift}
\E[\bY^{[r]}\mid\bW,V]
=
\bR_r(\bW)m_r(V).
\end{equation}
For each $v$, define the conditional effective regressor support
\begin{equation}\label{eq:control-cone}
\cX(v)
=
\bigcup_{w\in\suppp(\bW\mid V=v)}\rowsp(w).
\end{equation}
If, for almost every $v$, no nonzero degree-$r$ homogeneous polynomial
vanishes on $\cX(v)$, then the argument of Theorem~\ref{thm:support} applied
conditional on $V=v$ identifies $m_r(v)$ almost everywhere, and
\[
\bdel^{[r]}=\E[m_r(V)].
\]
This formulation allows the order-$r$ conditional moments of $\bD$ to vary
with $\bW$ through $V$.
For almost every $v$, identification is governed by the same support
condition applied to the conditional support of $\bW\mid V=v$.
Estimation of $m_r(v)$ is nonparametric when $V$ is continuous; if $V$ itself
is estimated, its first-stage error must also be accounted for.
The baseline model corresponds to a degenerate control.
\end{remark}

Call a measurable $S_r\times S_r$ matrix $\bA_r(\bW)$ \emph{admissible} if
it is symmetric and positive semidefinite almost surely, positive definite on
the column space of $\bR_r(\bW)$ almost surely, and
\[
\E\|\bR_r'\bA_r\bR_r\|<\infty
\qquad\text{and}\qquad
\E\|\bR_r'\bA_r\bY^{[r]}\|<\infty;
\]
under Assumption~\ref{ass:sampling} below, these two integrability conditions
are implied by \eqref{eq:int1}--\eqref{eq:int2}.
An almost surely positive-definite weight satisfying these integrability
conditions is admissible.
The weaker positive-definiteness requirement on
$\operatorname{col}(\bR_r(\bW))$ is needed for the oracle weight of
Section~\ref{sec:efficiency}, which is positive definite on this column space
but singular when $\bR_r(\bW)$ does not have full row rank.

For an admissible weight, define
\begin{equation}\label{eq:Mdef}
\bM_{r\bA}
=
\E\big[\bR_r(\bW)'\bA_r(\bW)\bR_r(\bW)\big].
\end{equation}
The rank condition is nonsingularity of $\bM_{r\bA}$.
Because $\bA_r(\bW)$ is positive definite on
$\operatorname{col}(\bR_r(\bW))$, Theorem~\ref{thm:support} shows that this
condition holds if and only if there is no nonzero vector $c$ such that
\[
\bR_r(w)c=\bzero
\qquad
\text{for every }w\in\suppp(\bW).
\]
The theorem characterizes this common-null-space condition directly in terms
of the row spaces generated by the regressor support.

Theorem~\ref{thm:partialset} (whose proof applies verbatim at a single order under Assumption~\ref{ass:moment}) establishes the converse identification statement:
when this common null space is nontrivial, the maintained model contains
structures with the same conditional moments and different
coefficient moments.
The rank condition is therefore necessary and sufficient for
point identification of $\bdel^{[r]}$ under the maintained
conditional-moment restrictions.
Section~\ref{sec:partial} characterizes the identified set when the condition
fails and imposes moment feasibility on the affine set left unrestricted by
the conditional moments.

\subsection{From the rank condition to a support condition}
\label{sec:support}

\begin{definition}[Effective regressor support]\label{def:cone}
Let $\cW=\suppp(\bW)\subseteq\mathbb R^{T\times q}$ and define
\begin{equation}\label{eq:cone}
\cX
=
\bigcup_{w\in\cW}\rowsp(w),
\qquad
\rowsp(w)=\{w'\blam:\blam\in\mathbb R^T\}\subseteq\mathbb R^q.
\end{equation}
The set $\cX$ contains every coefficient-loading vector that can be formed as
a linear combination of the rows of a regressor history in the support.
\end{definition}

To express the common-null-space condition directly in terms of the regressor
support, associate each $\bc\in\mathbb R^{Q_r}$ with the homogeneous
degree-$r$ polynomial
\begin{equation}\label{eq:Pc}
P_{\bc}(\bx)
=
\sum_{|\bb|=r}\binom{r}{\bb}c_{\bb}\bx^{\bb},
\qquad \bx\in\mathbb R^q,
\end{equation}
where $\binom r\bb=r!/\prod_j b_j!$.
The vector $\bc$ represents a direction in the order-$r$ coefficient-moment
parameter space.
Lemma~\ref{lem:polar} shows that
$\bR_r(w)\bc=\bzero$ if and only if $P_{\bc}$ vanishes on
$\rowsp(w)$.
The map $\bc\mapsto P_{\bc}$ is a linear bijection from
$\mathbb R^{Q_r}$ onto the space $\cP_r$ of homogeneous degree-$r$
polynomials on $\mathbb R^q$.
It therefore converts the common-null-space condition for $\bR_r(w)$ into a
condition on the row spaces generated by the regressor support.

\begin{lemma}[Polarization identity]\label{lem:polar}
For every $w\in\mathbb R^{T\times q}$, $\bc\in\mathbb R^{Q_r}$, and $\blam\in\mathbb R^T$,
\begin{equation}\label{eq:polar}
P_{\bc}(w'\blam)
=
\sum_{|\ba|=r}\binom r\ba
\big(\bR_r(w)\bc\big)_{\ba}\blam^{\ba}.
\end{equation}
Consequently, $P_{\bc}$ vanishes identically on $\rowsp(w)$ if and only if $\bR_r(w)\bc=\bzero$.
\end{lemma}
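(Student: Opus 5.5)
The identity \eqref{eq:polar} should follow from applying Lemma~\ref{lem:rep} to the deterministic ``outcome'' generated by a deterministic ``coefficient'' vector, combined with a duality between the coefficient vector $\bc$ and the polynomial $P_{\bc}$. The key observation is that the multinomial expansion defining $\bR_r(w)$ is the expansion of a power of a linear form. Fix $w$, $\blam$, and $\bc$, and set $\bx=w'\blam$, so $x_j=\sum_t\lambda_t w_{tj}$.

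\textbf{Step 1 (both sides are linear in $\bc$).} The left-hand side $P_{\bc}(\bx)$ is linear in $\bc$ by \eqref{eq:Pc}. The right-hand side is also linear in $\bc$. It therefore suffices to verify \eqref{eq:polar} coefficient by coefficient, that is, to show that for each $\bb$ with $|\bb|=r$,
\begin{equation*}
\binom r\bb\,\bx^{\bb}=\sum_{|\ba|=r}\binom r\ba\,\bR_{r,\ba\bb}(w)\,\blam^{\ba}.
\end{equation*}

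\textbf{Step 2 (expand $\bx^{\bb}$ in two ways).} Introduce an auxiliary vector $d\in\mathbb R^q$ and consider $(\blam'wd)^r=(\bx'd)^r$.

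In the first expansion, the multinomial theorem over the $q$ coordinates gives
\[
(\bx'd)^r=\sum_{|\bb|=r}\binom r\bb\bx^{\bb}d^{\bb}.
\]

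In the second expansion, write $\blam'wd=\sum_t\lambda_t y_t$ with $y=wd$. The multinomial theorem over the $T$ coordinates gives
\[
(\blam'wd)^r=\sum_{|\ba|=r}\binom r\ba\blam^{\ba}y^{\ba}.
\]
Lemma~\ref{lem:rep} is an algebraic identity valid for any deterministic $d$ in place of $\bD$, so it gives $y^{\ba}=\sum_{\bb}\bR_{r,\ba\bb}(w)d^{\bb}$. Substituting,
\[
(\blam'wd)^r=\sum_{\bb}\Big(\sum_{\ba}\binom r\ba\blam^{\ba}\bR_{r,\ba\bb}(w)\Big)d^{\bb}.
\]

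Both expansions are polynomials in $d$, and the monomials $d^{\bb}$ are linearly independent, so their coefficients agree. This is exactly the display in Step 1, which establishes \eqref{eq:polar}.

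\textbf{Step 3 (the ``consequently'' statement).} By \eqref{eq:polar}, $P_{\bc}$ restricted to $\rowsp(w)$, viewed as the function $\blam\mapsto P_{\bc}(w'\blam)$, is a polynomial in $\blam$ with coefficients $\binom r\ba(\bR_r(w)\bc)_{\ba}$.

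For the forward direction, suppose $P_{\bc}$ vanishes on $\rowsp(w)$. Then this polynomial vanishes for every $\blam\in\mathbb R^T$. A real polynomial that vanishes identically on $\mathbb R^T$ has all coefficients zero. Since each multinomial coefficient $\binom r\ba$ is nonzero, $\bR_r(w)\bc=\bzero$.

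The converse is immediate from \eqref{eq:polar}: if $\bR_r(w)\bc=\bzero$, the right-hand side is zero for every $\blam$.

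The only step needing care is the coefficient comparison in Step 2. One must use that Lemma~\ref{lem:rep} is a polynomial identity in $d$ rather than a statement about random variables, and that the multinomial weights $\binom r\ba$ and $\binom r\bb$ line up. Both follow directly from the form of \eqref{eq:Rentry}, so I expect no real obstacle.
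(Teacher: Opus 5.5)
Your proof is correct and is essentially the paper's own argument. The paper also expands $(\bx'd)^r=(\blam'(wd))^r$ in two ways, uses Lemma~\ref{lem:rep} to rewrite $(wd)^{\ba}$, and then applies the linear functional $L_{\bc}(d^{\bb})=c_{\bb}$, which is exactly your coefficient comparison in $d^{\bb}$ weighted by $c_{\bb}$; the ``consequently'' part via linear independence of the monomials $\blam^{\ba}$ is likewise the same.
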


\begin{theorem}[Primitive support characterization]\label{thm:support}
Assume the class of admissible weights is nonempty.  The following statements are equivalent:
\begin{enumerate}[label=(\roman*)]
\item $\bM_{r\bA}$ is nonsingular for every admissible $\bA_r$;
\item $\bM_{r\bA}$ is nonsingular for at least one admissible $\bA_r$;
\item the only $\bc\in\mathbb R^{Q_r}$ satisfying $\bR_r(w)\bc=\bzero$ for every $w\in\cW$ is $\bc=\bzero$;
\item no nonzero homogeneous polynomial of degree $r$ on $\mathbb R^q$ vanishes identically on $\cX$.
\end{enumerate}
Moreover, the unidentified moment space
\begin{equation}\label{eq:Nr}
\cN_r
=
\{\bc:\bR_r(w)\bc=\bzero\ \text{for every }w\in\cW\}
\end{equation}
is linearly isomorphic, through $\bc\mapsto P_{\bc}$, to the space $\cI_r(\cX)$ of degree-$r$ homogeneous polynomials that vanish on $\cX$.
\end{theorem}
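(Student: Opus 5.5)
The plan is to prove the chain (i)$\Rightarrow$(ii)$\Rightarrow$(iii)$\Rightarrow$(i), and then to get (iii)$\Leftrightarrow$(iv) together with the isomorphism statement from Lemma~\ref{lem:polar}.

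The implication (i)$\Rightarrow$(ii) is immediate because the class of admissible weights is assumed nonempty. For the remaining steps I would first establish the key identity: for any admissible $\bA_r$ and any $\bc\in\mathbb R^{Q_r}$,
\[
\bc'\bM_{r\bA}\bc=\E\big[(\bR_r(\bW)\bc)'\bA_r(\bW)(\bR_r(\bW)\bc)\big]\ge0 .
\]
This holds because $\bA_r$ is positive semidefinite almost surely and $\bM_{r\bA}$ is finite by integrability. Since $\bM_{r\bA}$ is symmetric positive semidefinite, it is nonsingular if and only if $\bc'\bM_{r\bA}\bc=0$ forces $\bc=\bzero$. The integrand vanishes exactly when $\bR_r(\bW)\bc=\bzero$, because $\bR_r(\bW)\bc$ lies in $\operatorname{col}(\bR_r(\bW))$, where $\bA_r$ is positive definite. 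Hence $\bc'\bM_{r\bA}\bc=0$ if and only if $\bR_r(\bW)\bc=\bzero$ almost surely. Equivalently, $\ker\bM_{r\bA}=\{\bc:\bR_r(\bW)\bc=\bzero\text{ a.s.}\}$ for every admissible weight.

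The main obstacle is passing from ``almost surely'' to ``for every $w\in\cW$''; this step is what delivers (ii)$\Rightarrow$(iii) and (iii)$\Rightarrow$(i). The plan is to use continuity together with the definition of the support. By Lemma~\ref{lem:rep}, $w\mapsto\bR_r(w)\bc$ is polynomial and hence continuous, so $U=\{w:\bR_r(w)\bc\ne\bzero\}$ is open. If $U$ met $\cW=\suppp(\bW)$, then by the definition of support (every open neighbourhood of a support point has positive probability) we would have $P(\bW\in U)>0$, contradicting the almost-sure statement. Conversely, $P(\bW\in\cW)=1$, since $\mathbb R^{T\times q}$ is second countable. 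Therefore
\[
\{\bc:\bR_r(\bW)\bc=\bzero\text{ a.s.}\}=\cN_r ,
\]
so $\ker\bM_{r\bA}=\cN_r$ for every admissible $\bA_r$. This one identity gives both directions at once: (ii) implies $\cN_r=\{\bzero\}$, which is (iii), and (iii) implies that every $\bM_{r\bA}$ is nonsingular, which is (i).

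Finally, for (iii)$\Leftrightarrow$(iv) and the isomorphism, Lemma~\ref{lem:polar} gives $\bR_r(w)\bc=\bzero$ if and only if $P_{\bc}$ vanishes on $\rowsp(w)$. Intersecting over $w\in\cW$ shows that $\bc\in\cN_r$ if and only if $P_{\bc}$ vanishes on $\cX=\bigcup_{w\in\cW}\rowsp(w)$, that is, $P_{\bc}\in\cI_r(\cX)$. The map $\bc\mapsto P_{\bc}$ is a linear bijection $\mathbb R^{Q_r}\to\cP_r$, because the multinomial weights $\binom r\bb$ are nonzero and the monomials $\bx^{\bb}$ with $|\bb|=r$ form a basis of $\cP_r$. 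Its restriction to $\cN_r$ is therefore a linear isomorphism onto $\cI_r(\cX)$. In particular, $\cN_r=\{\bzero\}$ if and only if $\cI_r(\cX)=\{0\}$, which is (iii)$\Leftrightarrow$(iv).
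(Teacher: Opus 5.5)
Your proposal is correct and follows essentially the same route as the paper. Both arguments use the quadratic-form identity for $\bc'\bM_{r\bA}\bc$, positive definiteness of $\bA_r$ on $\operatorname{col}(\bR_r(\bW))$, continuity of $w\mapsto\bR_r(w)\bc$ together with the definition of the support to pass between ``almost surely'' and ``for every $w\in\cW$'', and Lemma~\ref{lem:polar} for (iii)$\Leftrightarrow$(iv). Your packaging of this as $\ker\bM_{r\bA}=\cN_r$ for every admissible weight, and your explicit note that $\Pr\{\bW\in\cW\}=1$ (which the paper uses tacitly in the singularity direction), are tidy but inessential refinements.
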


Statements (i)--(iii) express the rank condition as the absence of
a nonzero direction in the null space of $\bR_r(w)$ for every regressor
history in the support.
Statement (iv) gives the corresponding characterization in terms of the
original regressor support: such a common null direction exists if and only
if the associated homogeneous degree-$r$ polynomial vanishes on every row
space contained in $\cX$.
Because these conditions are equivalent for every admissible weight, the
choice of weight does not affect identification under the maintained
conditional-moment restrictions, although it can affect the asymptotic
variance of the resulting estimator.

\begin{remark}[Full row rank is not required]\label{rem:noR1}
Theorem~\ref{thm:support} does not require
$\Pr\{\rank(\bW)=T\}=1$.
A rank-deficient regressor history contributes its lower-dimensional row
space to $\cX$.
More generally, if $\rank(w)=k$, then
\begin{equation}\label{eq:general-rank}
\rank\bR_r(w)=\binom{k+r-1}{r}.
\end{equation}
To see this, the rows of $\bR_r(w)$ span the degree-$r$ products of the
linear forms $d\mapsto w_t'd$.
If $\rank(w)=k$, these linear forms span a $k$-dimensional space.
Their degree-$r$ products therefore span the space of homogeneous
degree-$r$ polynomials in $k$ variables, whose dimension is
$\binom{k+r-1}{r}$.
The same rank calculation appears in the proof of
Theorem~\ref{thm:efficient} through Lemma~\ref{lem:polar}.
Full row rank is therefore needed only when counting the maximum number of
independent order-$r$ restrictions supplied by a single regressor history,
not for the support characterization itself.
The support may consequently include rank-deficient histories, including
histories with no within-unit regressor variation.
\end{remark}

\begin{theorem}[Identification of order-$r$ moments]\label{thm:ident}
Under Assumptions~\ref{ass:model} and \ref{ass:moment}, suppose the equivalent conditions of Theorem~\ref{thm:support} hold.  Then, for every admissible weight,
\begin{equation}\label{eq:ident}
\bdel^{[r]}
=
\Big(\E[\bR_r'\bA_r\bR_r]\Big)^{-1}
\E[\bR_r'\bA_r\bY^{[r]}].
\end{equation}
Thus $\bdel^{[r]}$ is point identified even when $q>T$ and $\bD$ is not recoverable for any observation.
\end{theorem}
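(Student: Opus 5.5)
The argument is a direct conditional-moment computation followed by inversion of the weighted normal equations, with nonsingularity supplied by Theorem~\ref{thm:support}.

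\emph{Conditional moment equation.} By Assumption~\ref{ass:model}, Lemma~\ref{lem:rep} gives the pointwise identity $\bY^{[r]}=\bR_r(\bW)\bD^{[r]}$. Since $\bR_r(\bW)$ is $\bW$-measurable, taking conditional expectations and using the moment homogeneity of Assumption~\ref{ass:moment} yields
\[
\E[\bY^{[r]}\mid\bW]=\bR_r(\bW)\E[\bD^{[r]}\mid\bW]=\bR_r(\bW)\bdel^{[r]}\quad\text{a.s.}
\]
Here $\bD^{[r]}$ is integrable by assumption. To pull $\bR_r(\bW)$ out of the conditional expectation I will use the generalized conditional expectation for products with a $\bW$-measurable factor, or equivalently conditional expectations taken coordinatewise. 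This is valid because each coordinate of $\bR_r(\bW)\bD^{[r]}$ is a finite sum of terms of the form (measurable function of $\bW$) $\times$ (integrable $D$-monomial), and conditional integrability holds a.s.

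\emph{Weighting and integration.} Premultiply by $\bR_r(\bW)'\bA_r(\bW)$, which is again $\bW$-measurable:
\[
\E[\bR_r'\bA_r\bY^{[r]}\mid\bW]=\bR_r'\bA_r\bR_r\,\bdel^{[r]}.
\]
Admissibility gives $\E\|\bR_r'\bA_r\bR_r\|<\infty$ and $\E\|\bR_r'\bA_r\bY^{[r]}\|<\infty$. Hence both sides are integrable, and the law of iterated expectations gives
\[
\E[\bR_r'\bA_r\bY^{[r]}]=\bM_{r\bA}\bdel^{[r]}.
\]
The main technical step is justifying this iterated expectation when $\bR_r'\bA_r\bD^{[r]}$ itself need not be integrable. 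To handle it, I will use the fact that $\bR_r'\bA_r\bY^{[r]}$ is integrable and apply the tower property directly to it, using the conditional identity from the previous step rather than conditioning $\bD^{[r]}$ directly.

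\emph{Inversion.} Under the equivalent conditions of Theorem~\ref{thm:support}, statement (i) says $\bM_{r\bA}$ is nonsingular for every admissible weight. Inverting gives \eqref{eq:ident}. Both expectations in \eqref{eq:ident} are functionals of the distribution of $(\bY,\bW)$ alone, so $\bdel^{[r]}$ is point identified.

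The final sentence of the theorem needs no further argument. When $q>T$, Lemma~\ref{lem:rank} together with Remark~\ref{rem:noR1} shows that each $\bR_r(w)$ has rank at most $S_r<Q_r$, so no single history, and no individual $\bD_i$, is recovered. Identification therefore comes only from pooling across the support through $\bM_{r\bA}$.
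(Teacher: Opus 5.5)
Your proposal is correct and follows the paper's proof: condition the integrable vector $\bR_r'\bA_r\bY^{[r]}=\bR_r'\bA_r\bR_r\bD^{[r]}$ on $\bW$, pull out the $\bW$-measurable factor, use moment homogeneity and iterated expectations to get $\E[\bR_r'\bA_r\bY^{[r]}]=\bM_{r\bA}\bdel^{[r]}$, and invert by Theorem~\ref{thm:support}(i). The paper conditions this product directly, so it never forms $\E[\bY^{[r]}\mid\bW]$, which needs a generalized conditional expectation because $\bY^{[r]}$ may not be integrable; your ``main technical step'' is therefore already settled by admissibility, since the relevant product is $\bR_r'\bA_r\bR_r\bD^{[r]}$ (your $\bR_r'\bA_r\bD^{[r]}$ is not conformable when $q>T$).
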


\begin{remark}[Moment versus distributional identification]\label{rem:finiteorder}
Theorem~\ref{thm:ident} identifies a fixed finite-dimensional moment vector.
It does not identify the distribution of $\bD$.
Identification of a distribution from its full moment sequence additionally
requires moment determinacy, while nonparametric density recovery generally
requires further regularity and entails an inverse problem.
These requirements are separate from the fixed-order moment-identification
problem studied here.
\end{remark}

\subsection{Implications of panel structure for identification}
\label{sec:panelgain}

Suppose a regressor history has full row rank $T$.
Lemma~\ref{lem:rank} implies that it supplies
\begin{equation}\label{eq:Sr-info}
S_r=\binom{T+r-1}{r}
\end{equation}
linearly independent order-$r$ restrictions.
When $T=1$, a full-row-rank history supplies only one such restriction.
If every support matrix has full row rank $T$ and the support contains $m$
distinct row spaces, a necessary counting condition for identification is
therefore
\begin{equation}\label{eq:count-bound}
mS_r\ge Q_r,
\qquad
Q_r=\binom{q+r-1}{r}.
\end{equation}
The condition is not sufficient because restrictions associated with
different row spaces can be linearly dependent.

\begin{corollary}[Identification from a single full-rank regressor history]
\label{cor:withinfull}
If $T=q$ and the support contains one matrix $w_0$ with $\rank(w_0)=q$, then $\rowsp(w_0)=\mathbb R^q$, so the support condition of Theorem~\ref{thm:support} holds at every order $r$.  No variation across regressor histories is required.
\end{corollary}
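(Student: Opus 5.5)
The plan is to verify condition (iv) of Theorem~\ref{thm:support} directly and then invoke the equivalence established there. No limiting argument or use of the weights is needed. The whole argument reduces to the fact that a polynomial vanishing on all of $\mathbb R^q$ is the zero polynomial.

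First, since $w_0\in\cW$ is a $q\times q$ matrix with $\rank(w_0)=q$, its transpose $w_0'$ is invertible. Hence the map $\blam\mapsto w_0'\blam$ is onto, and $\rowsp(w_0)=\mathbb R^q$. By Definition~\ref{def:cone}, $\rowsp(w_0)\subseteq\cX\subseteq\mathbb R^q$, so $\cX=\mathbb R^q$ regardless of any other support points.

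Second, suppose a homogeneous degree-$r$ polynomial $P$ vanishes identically on $\cX=\mathbb R^q$. A real polynomial that vanishes on all of $\mathbb R^q$ has all coefficients equal to zero. One way to see this is to argue by induction on the number of variables: a univariate polynomial with infinitely many roots is zero. Another is to note that all partial derivatives at the origin vanish and recover the coefficients from them. Either way $P=0$, so condition (iv) holds. Since $r\ge1$ was arbitrary, it holds at every order.

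Alternatively, one can avoid polynomials and check condition (iii) directly. By Lemma~\ref{lem:rank} with $T=q$, $\rank\bR_r(w_0)=S_r=Q_r$. So $\bR_r(w_0)$ is square and nonsingular, and $\bR_r(w_0)\bc=\bzero$ forces $\bc=\bzero$. I would present this as a one-line remark, since it also shows that one history alone suffices. Indeed, $\E[\bY^{[r]}\mid\bW=w_0]=\bR_r(w_0)\bdel^{[r]}$ can be inverted directly, which is the sense of "no variation across regressor histories is required."

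I expect no real obstacle here. The only point deserving care is that the theorem's equivalence requires the admissible class to be nonempty, and this should be noted. The identity weight satisfies this whenever the integrability conditions hold, which is implicitly maintained. With that noted, identification at every order then follows from Theorem~\ref{thm:ident}.
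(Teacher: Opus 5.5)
Your proposal is correct and follows the paper's argument. Full rank of the square matrix $w_0$ gives $\rowsp(w_0)=\mathbb R^q\subseteq\cX$, so condition (iv) of Theorem~\ref{thm:support} holds because no nonzero polynomial vanishes on all of $\mathbb R^q$; your alternative via Lemma~\ref{lem:rank} ($S_r=Q_r$ when $T=q$) is the same observation the paper uses in Corollary~\ref{cor:exactrecover}. One minor caveat: inverting $\E[\bY^{[r]}\mid\bW=w_0]$ at the single point presumes $w_0$ is an atom, whereas the support argument needs only $w_0\in\cW$.
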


Thus, when $T=q$, one full-rank regressor history suffices; variation across regressor histories is not required for identification. 
When $T=1$, by contrast, each regressor history generates a one-dimensional
row space.
Theorem~\ref{thm:support} then reduces to the cross-sectional condition that
the regressor support not lie in the zero set of a nonzero homogeneous
degree-$r$ polynomial.
With an intercept, dehomogenization gives the corresponding condition that
the support of the nonconstant regressors not lie in the zero set of a
polynomial of degree at most $r$.\footnote{The aggregation of rank-one
contributions into a nonsingular matrix is classical in linear
regression and optimal experimental design (Elfving, 1952; Karlin and
Studden, 1966).  In polynomial regression, the $T=1$ problem is closely
related to the requirement that the design support span the relevant
polynomial regression functions.  Here the focus is the system case $T>1$,
where each regressor history generates a higher-dimensional row space and
identification depends on whether the resulting row spaces have a common
order-$r$ null direction.}

\begin{theorem}[Hyperplane row spaces]\label{thm:hyper}
Suppose every support matrix has row space equal to a hyperplane of $\mathbb R^q$; in particular, this is the full-row-rank case $T=q-1$.  Then the order-$r$ support condition of
Theorem~\ref{thm:support} holds if and only if the effective regressor
support contains at least $r+1$ distinct hyperplanes.
\end{theorem}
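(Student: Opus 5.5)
By Theorem~\ref{thm:support}(iv), the claim is equivalent to the following statement. Let $H_1,\ldots,H_m$ be the distinct hyperplanes in $\cX$, so $\cX=\bigcup_{j}H_j$, and write $H_j=\{\bx:\ell_j(\bx)=0\}$ for nonzero linear forms $\ell_j$ that are pairwise non-proportional. Then no nonzero $P\in\cP_r$ vanishes on $\bigcup_j H_j$ if and only if $m\ge r+1$. The argument works with the linear forms $\ell_j$ and the fact that $\mathbb R[x_1,\ldots,x_q]$ is a unique factorization domain. Here $m$ may be infinite; "at least $r+1$" then holds trivially, and any $r+1$ of the hyperplanes suffice for the argument below.

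For necessity, suppose $m\le r$. Take $P=\ell_1\cdots\ell_m\cdot \ell_1^{\,r-m}$, which is a nonzero homogeneous polynomial of degree $r$. It vanishes on every $H_j$ because $\ell_j$ divides $P$. So the support condition fails, and $\cI_r(\cX)\neq\{0\}$.

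For sufficiency, suppose $m\ge r+1$ and $P\in\cP_r$ vanishes on $H_1,\ldots,H_{r+1}$. The key step is the following lemma: if a polynomial $P$ vanishes identically on the hyperplane $\{\ell=0\}$, then $\ell$ divides $P$. To prove it, change coordinates linearly so that $\ell=x_1$. Write $P=x_1 Q+P(0,x_2,\ldots,x_q)$. Vanishing on $\{x_1=0\}$ means $P(0,x_2,\ldots,x_q)$ is zero as a function on $\mathbb R^{q-1}$, hence it is the zero polynomial, because $\mathbb R$ is infinite. Therefore $x_1$ divides $P$.

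The forms $\ell_j$ are pairwise non-proportional irreducible elements of the UFD $\mathbb R[\bx]$, so they are pairwise non-associate primes. It follows that $\ell_1\cdots\ell_{r+1}$ divides $P$. That product has degree $r+1>r=\deg P$, which forces $P=0$. Hence $\cI_r(\cX)=\{0\}$, and the support condition holds by Theorem~\ref{thm:support}.

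I expect no serious obstacle. The only points needing care are two bookkeeping issues. First, "distinct hyperplanes" must correspond to non-proportional $\ell_j$, which is immediate since a hyperplane determines its normal up to scale. Second, the divisibility lemma must be applied to polynomials as formal objects, which the infinitude of $\mathbb R$ justifies. The full-row-rank case $T=q-1$ is covered because $\rank(w)=q-1$ gives $\dim\rowsp(w)=q-1$, so each row space is a hyperplane.
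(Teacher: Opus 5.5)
Your proof is correct and follows the paper's argument. Necessity uses the same product $\ell_1\cdots\ell_m\ell_1^{r-m}$. Sufficiency uses the same coordinate-change lemma that a polynomial vanishing on $\{\ell=0\}$ is divisible by $\ell$, and then concludes that $P$ is divisible by a product of $r+1$ non-associate linear forms. The one difference is minor: you get divisibility by the product directly from unique factorization in $\mathbb R[\bx]$, while the paper divides out one form at a time and uses a density/continuity argument to show the quotient still vanishes on the remaining hyperplanes.
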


If a homogeneous polynomial vanishes identically on a hyperplane, the linear
form defining that hyperplane divides the polynomial.
A nonzero polynomial of degree $r$ therefore cannot vanish identically on
more than $r$ distinct hyperplanes.
This gives the $r+1$ threshold in Theorem~\ref{thm:hyper}.

\begin{remark}[Hermann--Holzmann as a special case]\label{rem:HHnest}
For $q=2$ and $T=1=q-1$, the hyperplanes are the lines through the
regressor vectors $(1,x_j)$.  Theorem~\ref{thm:hyper} therefore implies
that the order-$r$ moments are identified under the maintained conditional-moment model if and only if the scalar regressor has at least $r+1$ distinct support points.  This recovers the $r+1$-point sufficient condition of Proposition~2.2 in
Hermann and Holzmann (2025) and establishes its necessity in the scalar
case: with $r$ or fewer support points the rank condition fails.
At second order, the necessity of the rank condition among full-rank
covariance matrices is already in Theorem~2.3 of Hermann and Holzmann
(2025), whose proof perturbs a positive definite covariance along a null
direction of the design; Theorem~\ref{thm:partialset} below extends this
argument to every order and to the full stacked moment vector.\footnote{
Several of their results are instances of Theorem~\ref{thm:support}(iv)
with $T=1$.  Their Example~2.1, in which some regressor $W_j$ has only two
support points $a$ and $b$, is the statement that the homogeneous quadratic
$(x_j-a\,x_0)(x_j-b\,x_0)$ vanishes on every $(1,w')'$ with
$w_j\in\{a,b\}$, so $\cN_2\neq\{0\}$ whatever the support of the remaining
regressors.  Their Theorem~2.4, by which a Cartesian product of three points
in each coordinate identifies means and covariances, is the $r=2$ case of
(iv): a polynomial of degree at most two in each coordinate that vanishes
on such a grid is zero.  Their Theorem~2.6 gives a sufficient support
condition for higher-order moments, whereas condition (iv) is necessary
and sufficient.}
\end{remark}

For two full-row-rank regressor histories, the number of independent
restrictions they supply jointly also depends on the dimension of the
intersection of their row spaces.

\begin{proposition}[Overlap between two full-row-rank histories]\label{prop:twopoint}
Let $w^{(1)}$ and $w^{(2)}$ have full row rank $T$, with row spaces $L_1$ and $L_2$, and let $t=\dim(L_1\cap L_2)$.  Then
\begin{equation}\label{eq:twopoint}
\rank
\begin{pmatrix}
\bR_r(w^{(1)})\\
\bR_r(w^{(2)})
\end{pmatrix}
=
2\binom{T+r-1}{r}
-
\binom{t+r-1}{r},
\end{equation}
where $\binom{r-1}{r}=0$ when $t=0$.  The final term is the exact redundancy generated by the intersection of the two row spaces.
\end{proposition}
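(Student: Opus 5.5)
The plan is to compute the rank through null spaces and Lemma~\ref{lem:polar}. By rank--nullity, the rank of the stacked matrix equals $Q_r-\dim(\ker\bR_r(w^{(1)})\cap\ker\bR_r(w^{(2)}))$. By Lemma~\ref{lem:polar}, the bijection $\bc\mapsto P_{\bc}$ maps $\ker\bR_r(w^{(j)})$ onto $\cI_r(L_j)$, the space of degree-$r$ homogeneous polynomials vanishing on $L_j$. It therefore maps the intersection of the two kernels onto $\cI_r(L_1)\cap\cI_r(L_2)=\cI_r(L_1\cup L_2)$. The rank equals $\dim\cP_r-\dim(\cI_r(L_1)\cap\cI_r(L_2))$. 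Equivalently, it is the dimension of the image of $\cP_r$ under the restriction map $\rho:P\mapsto(P|_{L_1},P|_{L_2})$ into $\cP_r(L_1)\times\cP_r(L_2)$, since $\ker\rho=\cI_r(L_1\cup L_2)$.

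Next I identify that image. Let $M=L_1\cap L_2$, of dimension $t$. Any pair in the image agrees on $M$, so the image lies in the fibered product $F=\{(p_1,p_2)\in\cP_r(L_1)\times\cP_r(L_2):p_1|_M=p_2|_M\}$. Restriction $\cP_r(L_j)\to\cP_r(M)$ is surjective: extend a basis of $M$ to one of $L_j$ and regard a polynomial in the $M$-coordinates as one on $L_j$. Hence $\dim F=2\binom{T+r-1}{r}-\binom{t+r-1}{r}$, using $\dim L_j=T$ by full row rank, which is also Lemma~\ref{lem:rank}. The identity $\dim\cP_r(L_j)=S_r$ follows from Lemma~\ref{lem:rank} together with Lemma~\ref{lem:polar}, or directly.

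The main step, and the expected obstacle, is showing $\rho$ maps onto $F$. First choose a basis adapted to the decomposition. Take linear coordinates on $\mathbb R^q$ as follows: $u=(u_1,\dots,u_t)$ for $M$, $v=(v_1,\dots,v_{T-t})$ for a complement of $M$ in $L_1$, and $z=(z_1,\dots,z_{T-t})$ for a complement of $M$ in $L_2$. This is possible because $L_1+L_2$ is a direct sum $M\oplus V_1\oplus V_2$ of dimension $2T-t\le q$, which can be extended by further coordinates $s$. In these coordinates, $L_1=\{z=0,s=0\}$ and $L_2=\{v=0,s=0\}$.

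Given $(p_1,p_2)\in F$, write $p_1(u,v)=g(u)+h_1(u,v)$, where every monomial of $h_1$ involves some $v$. Similarly write $p_2(u,z)=g(u)+h_2(u,z)$, where the common part $g$ is the restriction to $M$. Now set $P(u,v,z,s)=g(u)+h_1(u,v)+h_2(u,z)$. On $L_1$ the $h_2$ term vanishes because each of its monomials contains a $z$, so $P|_{L_1}=p_1$. By symmetry $P|_{L_2}=p_2$. Thus $\rho$ is onto $F$, and the rank equals $\dim F$, which gives \eqref{eq:twopoint}.

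When $t=0$, $g$ is a constant of degree $r\ge1$ and hence zero, which matches the convention $\binom{r-1}{r}=0$. The redundancy term is $\dim\cP_r(M)$, which is exactly the loss coming from the intersection. I would double-check only that the linear change of coordinates preserves homogeneity and degree, which is routine.
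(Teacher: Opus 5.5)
Your proof is correct, but it is the dual of the paper's argument rather than the same one.

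\textbf{What is shared.} Both arguments rest on Lemma~\ref{lem:polar}; the paper reaches it through Theorem~\ref{thm:support} applied to the two-point support. Both also choose a basis adapted to $L_1\cap L_2\subseteq L_1,L_2$, so that the two row spaces become coordinate subspaces.

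\textbf{The paper's route (kernel side).} In the adapted coordinates, each $\cI_r(L_j)$ is spanned by the monomials that involve some variable outside the coordinates of $L_j$. Hence $\cI_r(L_1\cup L_2)$ is spanned by the monomials supported on neither coordinate set. Inclusion--exclusion then gives the nullity $Q_r-2S_r+\binom{t+r-1}{r}$ in one line.

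\textbf{Your route (image side).} You compute the rank as the dimension of the image of the restriction map $\rho$. You show that this image is exactly the fibered product of $\cP_r(L_1)$ and $\cP_r(L_2)$ over $\cP_r(L_1\cap L_2)$. Surjectivity needs the extra explicit lift $P=g+h_1+h_2$.

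\textbf{What each buys.} The paper's count is shorter: two monomial subspaces intersect in the span of their common monomials, so no lifting step is needed. Your version costs one more step but proves a slightly stronger structural statement. It shows that the only linear dependencies between the two blocks of restrictions are the agreements on $L_1\cap L_2$, which is exactly the ``exact redundancy'' interpretation the proposition states informally.

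Neither argument extends verbatim to three or more histories. Three general subspaces cannot in general be made coordinate simultaneously. Pairwise compatibility also no longer characterizes the image: three lines in $\mathbb R^2$ with $r=1$ give a $3$-dimensional compatible space but only a $2$-dimensional image.
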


The intersection has dimension $t$ and therefore generates
$\binom{t+r-1}{r}$ order-$r$ restrictions common to the two histories.
These restrictions are counted twice in $2S_r$, producing the correction
term in \eqref{eq:twopoint}.

\begin{examplecont}{Example~\ref{ex:runningA} (continued: overlap and the unidentified direction)}
Since $q=3$, $T=2$, and $r=2$,
\[
Q_2=\binom{4}{2}=6,
\qquad
S_2=\binom{3}{2}=3.
\]
The counting condition \eqref{eq:count-bound} therefore allows for the
possibility that two regressor histories identify the six second moments.
For the two histories in Example~\ref{ex:runningA}, however, the row spaces
are
\[
L_1
=
\rowsp(\bW^{(1)})
=
\{\bx\in\mathbb R^3:x_3=0\},
\qquad
L_2
=
\rowsp(\bW^{(2)})
=
\{\bx\in\mathbb R^3:x_2=0\}.
\]
Their intersection is the line spanned by $(1,0,0)'$, so $t=1$ and Proposition~\ref{prop:twopoint} gives
\[
\rank
\begin{pmatrix}
\bR_2(\bW^{(1)})\\
\bR_2(\bW^{(2)})
\end{pmatrix}
=
2S_2-\binom{1+2-1}{2}
=
6-1
=
5.
\]
This rank deficiency is not specific to the particular orientation of the
two row spaces.
Any two planes in $\mathbb R^3$ intersect in at least a line, so $t\ge1$
and two full-row-rank histories cannot satisfy the second-order rank
condition in this configuration.

The same rank failure follows from Theorem~\ref{thm:support}.
The nonzero quadratic
\[
P(\bx)=x_2x_3
\]
vanishes on both $L_1$ and $L_2$.
Because the $m_{23}$ column of both matrices
$\bR_2(\bW^{(1)})$ and $\bR_2(\bW^{(2)})$ in
Example~\ref{ex:runningA} is identically zero, the unidentified moment
space is exactly
\[
\cN_2
=
\operatorname{span}\{(0,0,0,0,1,0)'\},
\]
and under the convention \eqref{eq:Pc} the vanishing polynomial associated with this coefficient direction is $P_{\bc}(\bx)=2x_2x_3$, proportional to $P$.

In this example, neither regressor history reveals $D_2$ and $D_3$ jointly.
The only second moment left unrestricted by the conditional-moment equations
is therefore
\[
\E[D_2D_3].
\]
The counting condition fails to detect this restriction because the
one-dimensional intersection of the two row spaces generates one duplicated
order-$2$ restriction.
\end{examplecont}

The role of the panel dimension can therefore be stated directly.
When $T=1$, each regressor history supplies one order-$r$ restriction, and
identification depends entirely on variation across the resulting
one-dimensional row spaces.
When $1<T<q$, each history supplies multiple restrictions, but identification
also depends on how its row space differs from those generated by other
histories.
When $T=q$, a single full-rank history supplies enough restrictions to
identify the moment vector at every fixed order.
In all three cases, the necessary and sufficient condition is the support
condition in Theorem~\ref{thm:support}.

\subsection{Panel structure with an intercept}\label{sec:intercept}

Suppose the rows take the form $\bW_t'=(1,\bX_t')$, with
$\bX_t\in\mathbb R^{q-1}$.
For a realization $\omega$, let
\[
U(\omega)
=
\operatorname{span}
\{\bX_t(\omega)-\bX_1(\omega):t=2,\ldots,T\}
\]
and let $\bar\bx$ be any point of
$\operatorname{aff}\{\bX_1(\omega),\ldots,\bX_T(\omega)\}$.
Then
\begin{equation}\label{eq:row-intercept}
\rowsp(\bW(\omega))
=
\{(s,s\bar\bx+\bu):s\in\mathbb R,\ \bu\in U(\omega)\}.
\end{equation}
For $s\neq0$, normalizing the first coordinate to one gives
$(1,\bar\bx+\bu/s)$, so the nonconstant regressor component ranges over
the affine hull of the within-unit regressor history.
For $s=0$, the row-space elements are $(0,\bu)$ with
$\bu\in U(\omega)$ and are generated by within-unit regressor differences.

\begin{proposition}[Intercept decomposition]\label{prop:intercept}
Let $P$ be a homogeneous polynomial of degree $r$ on $\mathbb R^q$ and let
$p(\bx)=P(1,\bx)$ be its dehomogenization, with highest-degree homogeneous
component $p_r$.
Then $P$ vanishes on the effective regressor support if and only if, for
every support realization $\omega$,
\begin{equation}\label{eq:intercept-cond}
p\equiv0
\quad\text{on}\quad
\operatorname{aff}\{\bX_1(\omega),\ldots,\bX_T(\omega)\},
\end{equation}
in which case the $s=0$ directions also satisfy
$p_r\equiv0$ on $U(\omega)$.
The condition on $p_r$ is therefore implied by the affine-hull condition
and is not an additional restriction.
Thus the support condition can be stated directly in terms of the affine
hulls generated by the within-unit regressor histories.
\end{proposition}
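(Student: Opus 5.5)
The plan is to reduce the statement to a fact about homogeneous polynomials restricted to a single row space, using the parametrization \eqref{eq:row-intercept}, and then take the union over support realizations $\omega$. By Definition~\ref{def:cone}, $P$ vanishes on $\cX$ if and only if $P$ vanishes on $\rowsp(\bW(\omega))$ for every support realization. It therefore suffices to fix $\omega$, write $A=\operatorname{aff}\{\bX_1(\omega),\ldots,\bX_T(\omega)\}=\bar\bx+U(\omega)$, and prove that $P\equiv0$ on $\rowsp(\bW(\omega))$ if and only if $p\equiv0$ on $A$. Along the way we also show that $p\equiv0$ on $A$ forces $p_r\equiv0$ on $U(\omega)$.

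\emph{Forward direction.} If $P$ vanishes on the row space, take $s=1$ in \eqref{eq:row-intercept}. Then $P(1,\bar\bx+\bu)=p(\bar\bx+\bu)=0$ for every $\bu\in U(\omega)$, so $p\equiv0$ on $A$.

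\emph{Converse, $s\neq0$.} By homogeneity of degree $r$,
\[
P(s,s\bar\bx+\bu)=s^rP(1,\bar\bx+\bu/s)=s^rp(\bar\bx+\bu/s),
\]
and $\bar\bx+\bu/s\in A$. Hence $P$ vanishes at every row-space point with $s\neq0$.

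\emph{Converse, $s=0$.} These points are $(0,\bu)$ with $\bu\in U(\omega)$, and $P(0,\bu)=p_r(\bu)$. This holds because setting the first coordinate to zero kills every monomial of $P$ containing $x_1$, and the surviving monomials are exactly the top-degree part of the dehomogenization. Two arguments give $p_r(\bu)=0$. The short one is continuity: $(0,\bu)=\lim_{s\to0}(s,s\bar\bx+\bu)$, and $P$ is continuous and vanishes along the approximating points. The more explicit one, which I will include because it is what shows the $p_r$ condition is implied rather than extra, works as follows. Fix $\bu$ and consider the univariate polynomial $\tau\mapsto p(\bar\bx+\tau\bu)$, which is identically zero by hypothesis. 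Its coefficient on $\tau^r$ is $p_r(\bu)$, so $p_r(\bu)=0$. This holds for every $\bu\in U(\omega)$, giving $p_r\equiv0$ on $U(\omega)$. Combined with the $s\neq0$ case, $P$ vanishes on the whole row space.

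Taking the union over $\omega$ then gives the stated equivalence and the remark that the $p_r$ condition is redundant. The only points needing care are the identity $P(0,\bu)=p_r(\bu)$ and the coefficient-extraction step. A minor subtlety is the degenerate case $T=1$ or $U(\omega)=\{0\}$. There $A$ is the single point $\bX_1(\omega)$, the $s=0$ slice is just the origin, and $P(0)=0$ holds trivially when $r\ge1$. So no step is a real obstacle; the main thing is to state the homogeneity scaling cleanly.
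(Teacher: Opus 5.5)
Your proof is correct and follows the paper's argument essentially step for step: homogeneity scaling $P(s,s\bar\bx+\bu)=s^rp(\bar\bx+\bu/s)$ handles $s\neq0$, and extracting the $\tau^r$ coefficient of $p(\bar\bx+\tau\bu)$ gives $p_r\equiv0$ on $U(\omega)$ at $s=0$; your continuity argument is a valid shortcut for that last step. The one piece the paper's proof includes that you take as given is the verification of the parametrization \eqref{eq:row-intercept} itself, which the paper does by writing $\lambda_t=s\alpha_t+\beta_t$ with $\sum_t\alpha_t=1$ and $\sum_t\beta_t=0$, and conversely by setting $s=\sum_t\lambda_t$.
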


\begin{corollary}[Intercept plus scalar regressor]\label{cor:scalar}
Let $q=T=2$, so $\bW_t'=(1,X_t)$.
Under full row rank, $X_1\neq X_2$ and
Corollary~\ref{cor:withinfull} gives identification of moments of every
order from a single full-rank regressor history.
More generally, because Theorem~\ref{thm:support} does not require full row
rank almost surely, moments of every order are identified whenever
\begin{equation}\label{eq:positive-fullrank}
\Pr\{X_1\neq X_2\}>0.
\end{equation}
In particular, a binary scalar regressor with positive probability of
within-unit movement identifies moments of every fixed order of
$(D_0,D_1)$, whereas in the $T=1$ cross section a binary regressor does not
identify the full second-moment vector.
\end{corollary}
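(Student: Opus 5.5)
The plan is to verify condition (iv) of Theorem~\ref{thm:support} directly: under \eqref{eq:positive-fullrank}, no nonzero homogeneous polynomial of degree $r$ on $\mathbb R^2$ vanishes on $\cX$. Theorem~\ref{thm:ident} then delivers point identification at every order $r$. The first sentence of the corollary is the special case in which $\Pr\{X_1\neq X_2\}=1$. There, every support matrix $w=\begin{pmatrix}1&x_1\\1&x_2\end{pmatrix}$ has $\det w=x_2-x_1\neq0$, so $\rank(w)=2=q$, and Corollary~\ref{cor:withinfull} applies immediately.

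For the general case, I would first show that \eqref{eq:positive-fullrank} implies that the support $\cW$ contains at least one matrix $w_0$ with $x_1\neq x_2$. The event $\{X_1\neq X_2\}$ is open in the entries of $\bW$ and has positive probability. It must therefore meet $\suppp(\bW)$, because the complement of the support is a null set. Any $w_0$ in the support with $x_1\neq x_2$ has full rank $2$, so $\rowsp(w_0)=\mathbb R^2\subseteq\cX$. A homogeneous polynomial vanishing on $\cX$ then vanishes on all of $\mathbb R^2$ and is the zero polynomial. This gives (iv), and Theorem~\ref{thm:support} gives nonsingularity of $\bM_{r\bA}$. Remark~\ref{rem:noR1} confirms that other support points, including rank-one histories with $X_1=X_2$, do not affect this argument; they only add their lines to $\cX$. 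I expect the only delicate step to be the passage from positive probability to a support point. I would handle it with the standard fact that $\Pr\{\bW\notin\suppp(\bW)\}=0$ on the separable space $\mathbb R^{2\times2}$, so $\Pr\{\bW\in\suppp(\bW),\,X_1\neq X_2\}>0$.

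For the contrast with the cross section, take $T=1$, $q=2$, and a binary $X$ with support $\{a,b\}$. Then $\cX$ is the union of the two lines spanned by $(1,a)$ and $(1,b)$. The nonzero quadratic $(x_1-a x_0)(x_1-b x_0)$ vanishes on both lines, so (iv) fails at $r=2$ and $\cN_2\neq\{\bzero\}$. This is also the $r=2$ case of Theorem~\ref{thm:hyper}, since two hyperplanes are fewer than $r+1=3$. In the panel case with a binary regressor, within-unit movement $X_1\neq X_2$ with positive probability produces a full-rank support point, and the argument of the previous paragraph applies.
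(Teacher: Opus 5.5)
Your argument is correct and follows the paper's own reasoning. Positive probability of $\{X_1\neq X_2\}$ forces a nonsingular support matrix $w_0$, so $\rowsp(w_0)=\mathbb R^2\subseteq\cX$ and condition (iv) of Theorem~\ref{thm:support} holds at every order. The cross-sectional failure is the two-line quadratic $(x_1-ax_0)(x_1-bx_0)$ of Remark~\ref{rem:HHnest}, that is, Theorem~\ref{thm:hyper} at $r=2$.

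There is one small slip. If only $\Pr\{X_1\neq X_2\}=1$ holds, it need not be true that every support matrix is nonsingular. The support is closed and can contain limit points with $x_1=x_2$; for example, take $X_1\equiv0$ and $X_2\sim U(0,1)$. This is harmless, because Corollary~\ref{cor:withinfull} needs only one full-rank support point, and your general-case argument already supplies one.
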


\begin{corollary}[No within variation: cross-sectional benchmark]
\label{cor:nowithin}
Suppose, explicitly relaxing full row rank, that
$\bX_1=\cdots=\bX_T$ almost surely.
Then $\rank(\bW)=1$ and each row space is the one-dimensional subspace
spanned by $(1,\bX_1')'$.
The effective regressor support therefore coincides with the $T=1$
effective support, and the identification condition reduces exactly to the
corresponding cross-sectional support condition.
Repeated observations with no within-unit regressor variation therefore add
no identifying restrictions beyond those available in the corresponding
cross section.
\end{corollary}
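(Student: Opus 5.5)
The statement to prove is Corollary~\ref{cor:nowithin}. The plan is to compute the row space of each support matrix directly, then invoke Theorem~\ref{thm:support}(iv), which makes identification depend on $\cW$ only through the effective support $\cX$.

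First, fix a support realization $\omega$ with $\bX_1(\omega)=\cdots=\bX_T(\omega)=\bx$. Every row of $\bW(\omega)$ equals $(1,\bx')$. Since this row is nonzero (its first entry is one), $\rank(\bW(\omega))=1$ and
\[
\rowsp(\bW(\omega))=\{s(1,\bx')':s\in\mathbb R\}.
\]
Equivalently, in the notation of \eqref{eq:row-intercept}, $U(\omega)=\{\bzero\}$ and the affine hull reduces to the single point $\bx$.

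One technical point needs care. The condition holds almost surely, whereas the support is a closed set. I will therefore observe that the set $\{w:\text{all rows of }w\text{ are equal with first entry }1\}$ is closed and has probability one. Hence it contains $\cW$, and the description of the row space above applies to every $w\in\cW$, not just almost every realization.

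Next, compare with the $T=1$ cross section. Let $\cW_1$ be the support of the single row $(1,\bX_1')$. The map $w\mapsto$ (first row of $w$) is continuous and restricts to a bijection from the closed set above onto $\{1\}\times\mathbb R^{q-1}$, with continuous inverse (replicate the row $T$ times). A homeomorphism carries supports to supports, so it sends $\cW$ onto $\cW_1$. Therefore
\[
\cX=\bigcup_{w\in\cW}\rowsp(w)=\bigcup_{v\in\cW_1}\operatorname{span}\{v\},
\]
which is exactly the effective regressor support of the cross-sectional model.

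Finally, apply Theorem~\ref{thm:support}. Its condition (iv), and the space $\cI_r(\cX)\cong\cN_r$, depend on the design only through $\cX$, which is the same in both models. So the panel and cross-sectional identification conditions coincide, and the unidentified moment spaces coincide as well. This gives the claim that repeated observations without within-unit variation add no identifying restrictions. If desired, Proposition~\ref{prop:intercept} gives the same conclusion through dehomogenization: the affine hull is the point $\bX_1$, so the condition reads $p(\bX_1)=0$ on the support, and the $s=0$ directions are trivial.

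The main obstacle is the support-transfer step: making sure the almost-sure equality translates into an equality of closed supports rather than only of almost-sure events. The closed-set argument and the homeomorphism above handle it, so the remaining steps are bookkeeping.
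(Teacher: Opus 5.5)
Your proposal is correct and follows the paper's own argument, which the paper gives inline in the corollary statement: every row equals $(1,\bX_1')$, so each row space is $\operatorname{span}\{(1,\bX_1')'\}$, the effective support $\cX$ coincides with the $T=1$ effective support, and Theorem~\ref{thm:support}(iv) depends on the design only through $\cX$. Your closed-set and homeomorphism step, which turns the almost-sure equality into a statement about every point of $\cW$ and identifies $\cW$ with the support of the single row, fills in a detail the paper leaves implicit; it does not change the route.
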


\subsection{Identification through several moment orders}
\label{sec:toporder}

The support conditions are nested across moment orders.

\begin{proposition}[Top-order dominance]\label{prop:toporder}
If the support condition of Theorem~\ref{thm:support} holds at order $r$, it
holds at every order $s\le r$.
Consequently, for the block-diagonal stacked matrix
$\bR_{\le r}=\diag(\bR_1,\ldots,\bR_r)$, the stacked Gram matrix
is nonsingular for every admissible stacked weight (admissibility being
defined as in Section~\ref{sec:identification} with $\bR_{\le r}$ in place
of $\bR_r$) if and only if the support condition holds at the single highest
order $r$.
\end{proposition}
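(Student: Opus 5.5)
We prove the nesting claim through condition (iv) of Theorem~\ref{thm:support} and then reduce the stacked statement to the single-order case. The argument is a multiplication trick. Let $s\le r$ and suppose a nonzero homogeneous polynomial $P$ of degree $s$ vanishes identically on $\cX$. Since $\cX$ is a union of row spaces, it is nonempty and contains $\bzero$. We multiply $P$ by the power $\ell^{r-s}$ of any nonzero linear form $\ell$ on $\mathbb R^q$, for instance $\ell(\bx)=x_1$. The product $\ell^{r-s}P$ is homogeneous of degree $r$. It is nonzero because the polynomial ring $\mathbb R[x_1,\ldots,x_q]$ is an integral domain. It vanishes on $\cX$ because $P$ does. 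Hence failure of (iv) at order $s$ implies failure of (iv) at order $r$, and by contraposition the support condition at order $r$ implies it at every order $s\le r$. Via the isomorphism $\cN_s\cong\cI_s(\cX)$ of Theorem~\ref{thm:support}, the same argument gives an injective linear map $\cN_s\to\cN_r$.

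For the stacked statement, we first check that the proof of Theorem~\ref{thm:support} applies with $\bR_{\le r}$ in place of $\bR_r$. Its equivalence of (i)--(iii) uses only two facts: the admissible weight is positive definite on the column space of the matrix almost surely, and integrability holds. Neither depends on the special structure of $\bR_r$. The argument runs as follows. For $c$ in the stacked parameter space,
\[
c'\E[\bR_{\le r}'\bA\bR_{\le r}]c=\E[(\bR_{\le r}c)'\bA(\bR_{\le r}c)].
\]
This is zero if and only if $\bR_{\le r}(\bW)c=\bzero$ almost surely. By continuity of $w\mapsto\bR_{\le r}(w)c$, which is polynomial, this holds if and only if $\bR_{\le r}(w)c=\bzero$ for every $w\in\cW$. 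Thus the stacked Gram matrix is nonsingular for every admissible stacked weight, or equivalently for one, if and only if the stacked common null space is trivial.

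Because $\bR_{\le r}$ is block diagonal, the stacked common null space is the direct product $\cN_1\times\cdots\times\cN_r$. It is trivial if and only if each $\cN_s=\{\bzero\}$. By the first step, this holds if and only if $\cN_r=\{\bzero\}$: the forward direction is immediate, and the converse follows from the nesting. Since $\cN_r=\{\bzero\}$ is exactly the order-$r$ support condition, the stacked Gram matrix is nonsingular if and only if the support condition holds at order $r$.

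The main obstacle is only one of bookkeeping. We must verify that nonemptiness of the admissible stacked class is available; it is implicitly part of the statement, since otherwise "for every admissible weight" is vacuous. We must also verify that the continuity step used in Theorem~\ref{thm:support} carries over to the stacked matrix, which it does because each block is polynomial in $w$. No step is substantively harder than this.
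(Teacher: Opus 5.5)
Your proposal is correct and follows essentially the same route as the paper's proof: you multiply a vanishing degree-$s$ polynomial by $\ell^{r-s}$ to obtain the nesting, and then use the positive-definiteness argument of Theorem~\ref{thm:support} to reduce nonsingularity of the stacked Gram matrix to triviality of the block-diagonal common null space $\cN_1\times\cdots\times\cN_r$. Your two additions are correct refinements that the paper leaves implicit: that the admissible stacked class must be nonempty for the ``only if'' direction, and that multiplication by $\ell^{r-s}$ gives an injection $\cN_s\to\cN_r$.
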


Proposition~\ref{prop:toporder} implies that joint identification of moments
through order $r$ requires no support condition beyond the condition at
order $r$.
In particular, identification at order $r$ implies identification of the
lower-order moments needed to form means and central moments.
To see the nesting directly, suppose a nonzero homogeneous polynomial of
degree $s<r$ vanishes on the effective regressor support.
Multiplying it by a suitable power of any nonzero linear form produces a
nonzero homogeneous polynomial of degree $r$ that also vanishes on the
effective regressor support.
Failure of the support condition at any lower order therefore implies
failure at every higher order.

\section{Estimation and inference}\label{sec:estimation}

Let $\bA_{r,i}=\bA_r(\bW_i)$ and
$\bR_{r,i}=\bR_r(\bW_i)$.
Throughout this section, the weight function is treated as known.
This includes fixed choices such as $\bA_r=\bI_{S_r}$; the oracle optimal
weight is considered in Section~\ref{sec:efficiency}.

\begin{assumption}[Sampling and integrability]\label{ass:sampling}
The observations $(\bY_i,\bW_i)$ are i.i.d.  In addition,
\begin{align}
\E\|\bR_r'\bA_r\bR_r\|&<\infty,\label{eq:int1}\\
\E\left\|\bR_r'\bA_r\big(\bY^{[r]}-\bR_r\bdel^{[r]}\big)\right\|^2&<\infty.\label{eq:int2}
\end{align}
\end{assumption}

Condition~\eqref{eq:int1} gives the law of large numbers for the sample
Gram matrix.
Together with the finite first moment of
$\bR_r'\bA_r(\bY^{[r]}-\bR_r\bdel^{[r]})$, which follows from
\eqref{eq:int2}, it also gives the integrability required for consistency.
Condition~\eqref{eq:int2} imposes a finite second moment on the summand
entering the central limit theorem.
The sandwich covariance estimator requires an additional second-moment
condition, imposed only in Section~\ref{sec:robust}.
For bounded weights, Appendix~\ref{app:moments} gives primitive sufficient
conditions separately for consistency and for asymptotic normality and
sandwich covariance estimation: \eqref{eq:primitive-cons} implies the
consistency requirements, while \eqref{eq:primitive} implies those used for
the central limit theorem and the sandwich estimator.
Stating the required moments directly allows the dependence between $\bD$
and $\bW$ permitted by Assumption~\ref{ass:moment}.

On the event that $\widehat{\bM}_{r\bA}$ is nonsingular, define the weighted
minimum-distance estimator
\begin{equation}\label{eq:estimator}
\widehat{\bdel}^{[r]}
=
\widehat{\bM}_{r\bA}^{-1}
\frac1N\sum_{i=1}^N\bR_{r,i}'\bA_{r,i}\bY_i^{[r]},
\qquad
\widehat{\bM}_{r\bA}
=
\frac1N\sum_{i=1}^N\bR_{r,i}'\bA_{r,i}\bR_{r,i}.
\end{equation}
Under the conditions of Theorem~\ref{thm:asynorm}, this event has
probability approaching one.

At first and second order, closely related estimators appear in the
statistics literature on random linear projections.
Katsevich, Katsevich, and Singer (2015) estimate the mean and covariance of
a latent random vector from noisy random projections.
Their mean estimator coincides with \eqref{eq:estimator} at $r=1$ under
identity weighting.
After subtraction of the known noise variance, their covariance estimator
is the corresponding second-order minimum-distance estimator under the
Frobenius metric.
Their consistency analysis requires invertibility of the relevant
projection operators.
In the notation used here, the corresponding issue is nonsingularity of the
Gram matrix, for which Theorem~\ref{thm:support} gives the
support characterization.

\begin{remark}[Nonrandom designs]\label{rem:fixeddesign}
If the regressor matrices $w_1,\ldots,w_N$ are fixed rather than sampled,
the analysis conditions on the design sequence.
The estimator \eqref{eq:estimator} is unchanged.
If the coefficient vectors are independent across units and
$\E[\bD_i^{[r]}]=\bdel^{[r]}$, the moment contributions
\[
\bpsi_{r,i}
=
\bR_r(w_i)'\bA_{r,i}\bR_r(w_i)
\big(\bD_i^{[r]}-\bdel^{[r]}\big)
\]
are independent but need not be identically distributed.
Consistency and asymptotic normality are then triangular-array results.
They follow under the corresponding laws of large numbers and central limit
theorems, including convergence of the averaged Gram and moment-contribution
covariance matrices and a Lindeberg condition on $\{\bpsi_{r,i}\}$, together
with the design-sequence rank condition
\begin{equation}\label{eq:fixeddesign-rank}
\liminf_{N\to\infty}
\lambda_{\min}\!\left(
\frac1N\sum_{i=1}^N\bR_r(w_i)'\bA_{r,i}\bR_r(w_i)
\right)>0.
\end{equation}

Condition~\eqref{eq:fixeddesign-rank} replaces the support
condition of Theorem~\ref{thm:support}.
It fails if and only if there exist a subsequence $N_j$ and unit vectors
$\bc_{N_j}$ such that
\[
\bc_{N_j}'
\left(
\frac{1}{N_j}
\sum_{i=1}^{N_j}
\bR_r(w_i)'\bA_{r,i}\bR_r(w_i)
\right)
\bc_{N_j}
\longrightarrow0.
\]
Through the correspondence $\bc\mapsto P_{\bc}$ in
Section~\ref{sec:support}, these vectors correspond to degree-$r$
polynomials whose weighted evaluations over the design sequence become
arbitrarily small.
If a finite set of regressor histories is repeated with stable positive
frequencies, \eqref{eq:fixeddesign-rank} reduces to the corresponding
rank condition.

Choosing the $w_i$ to maximize the minimum eigenvalue in
\eqref{eq:fixeddesign-rank} is an $E$-optimal design problem (Ehrenfeld,
1955; Pukelsheim, 1993).
After fixing the normalization of the regressor histories and the weighting
rule, their row spaces summarize the rank component of the design problem;
Appendix~\ref{app:geometry} relates it to explicit fixed-design
constructions in the moment-recovery literature.
We do not develop the fixed-design asymptotic theory further.
\end{remark}

\begin{theorem}[Consistency and asymptotic normality]\label{thm:asynorm}
Let $\bA_r$ be a known admissible weight.  Under Assumptions~\ref{ass:model}, \ref{ass:moment}, and \ref{ass:sampling}, and the equivalent support conditions of Theorem~\ref{thm:support},
\begin{equation}\label{eq:consistency}
\widehat{\bdel}^{[r]}\pto\bdel^{[r]},
\end{equation}
and
\begin{equation}\label{eq:clt}
\sqrt{N}\big(\widehat{\bdel}^{[r]}-\bdel^{[r]}\big)
\dto
N\big(\bzero,\bM_{r\bA}^{-1}\bV_{r\bA}\bM_{r\bA}^{-1}\big),
\end{equation}
where
\begin{align}
\bpsi_{r}
&=
\bR_r'\bA_r\big(\bY^{[r]}-\bR_r\bdel^{[r]}\big),\label{eq:psi}\\
\bV_{r\bA}
&=
\E[\bpsi_r\bpsi_r'].
\label{eq:Vpsi}
\end{align}
Equivalently, letting
\begin{equation}\label{eq:Gamma}
\bGam_r(\bW)
=
\Var\big(\bY^{[r]}\mid\bW\big),
\end{equation}
which is finite almost surely under \eqref{eq:int2}, we have
\begin{equation}\label{eq:V}
\bV_{r\bA}
=
\E\big[\bR_r'\bA_r\bGam_r\bA_r\bR_r\big].
\end{equation}
Whenever $\bSig_r(\bW)=\Var(\bD^{[r]}\mid\bW)$ exists as a finite
matrix, the structural equation implies
$\bGam_r(\bW)=\bR_r\bSig_r(\bW)\bR_r'$; \eqref{eq:int2} controls
only the conditional second moments of the observed combinations
$\bR_r(\bW)\bD^{[r]}$ and does not require finite conditional second
moments of $\bD^{[r]}$ in the null space of $\bR_r(\bW)$.
Consistency \eqref{eq:consistency} does not require the full strength of \eqref{eq:int2}: it holds with \eqref{eq:int2} weakened to $\E\|\bpsi_r\|<\infty$.
\end{theorem}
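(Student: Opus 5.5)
The argument is the standard one for a linear minimum-distance estimator. I would first write down the exact estimation error and then apply the law of large numbers and the central limit theorem to its two factors.

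\textbf{Error decomposition.} By Lemma~\ref{lem:rep}, $\bY_i^{[r]}=\bR_{r,i}\bD_i^{[r]}$, and Assumption~\ref{ass:moment} gives $\E[\bY^{[r]}\mid\bW]=\bR_r\bdel^{[r]}$. On the event that $\widehat{\bM}_{r\bA}$ is nonsingular,
\[
\widehat{\bdel}^{[r]}-\bdel^{[r]}=\widehat{\bM}_{r\bA}^{-1}\,\frac1N\sum_{i=1}^N\bpsi_{r,i},
\qquad
\bpsi_{r,i}=\bR_{r,i}'\bA_{r,i}\big(\bY_i^{[r]}-\bR_{r,i}\bdel^{[r]}\big).
\]

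\textbf{Gram matrix and the nonsingularity event.} By \eqref{eq:int1} and the i.i.d.\ part of Assumption~\ref{ass:sampling}, the strong law gives $\widehat{\bM}_{r\bA}\to\bM_{r\bA}$ almost surely. Theorem~\ref{thm:support}(i) makes $\bM_{r\bA}$ nonsingular. Since the determinant is continuous, $\Pr\{\widehat{\bM}_{r\bA}\text{ nonsingular}\}\to1$, and the continuous mapping theorem gives $\widehat{\bM}_{r\bA}^{-1}\pto\bM_{r\bA}^{-1}$.

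\textbf{Mean-zero score.} This is the step needing care, because $\bA_r$ is only positive semidefinite and integrability is imposed on products, not on $\bY^{[r]}$ separately. Since $\bR_r'\bA_r$ is $\bW$-measurable and $\E\|\bpsi_r\|<\infty$, I would condition on $\bW$:
\[
\E[\bpsi_r\mid\bW]=\bR_r'\bA_r\big(\E[\bY^{[r]}\mid\bW]-\bR_r\bdel^{[r]}\big)=\bzero.
\]
To avoid needing $\E\|\bY^{[r]}\|<\infty$ outright, I would work with generalized conditional expectations, or truncate on $\{\|\bR_r'\bA_r\|\le K\}$ and let $K\to\infty$ by dominated convergence. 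Only $\E\bpsi_r=\bzero$ and $\E\|\bpsi_r\|<\infty$ are needed. The strong law then gives $N^{-1}\sum_i\bpsi_{r,i}\to\bzero$ almost surely, and consistency follows from the decomposition with Slutsky's theorem. This uses only the weakened condition $\E\|\bpsi_r\|<\infty$, as the last claim of the theorem states.

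\textbf{Asymptotic normality.} Under \eqref{eq:int2}, $\bV_{r\bA}=\E[\bpsi_r\bpsi_r']$ is finite. The multivariate Lindeberg--L\'evy central limit theorem gives $N^{-1/2}\sum_i\bpsi_{r,i}\dto N(\bzero,\bV_{r\bA})$. Combining with $\widehat{\bM}_{r\bA}^{-1}\pto\bM_{r\bA}^{-1}$ by Slutsky's theorem yields \eqref{eq:clt}; the nonsingularity event has probability tending to one, so the estimator's definition off that event does not matter.

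\textbf{Alternative form of $\bV_{r\bA}$.} Write $\bu=\bY^{[r]}-\bR_r\bdel^{[r]}$, so $\bpsi_r\bpsi_r'=\bR_r'\bA_r\bu\bu'\bA_r\bR_r$. Conditioning on $\bW$ gives \eqref{eq:V}, because $\E[\bu\bu'\mid\bW]=\bGam_r(\bW)$ when $\E[\bu\mid\bW]=\bzero$.

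The one subtlety is that $\bGam_r$ must be finite only on the directions seen by $\bA_r\bR_r$. I would handle it by noting that $\E[\|\bR_r'\bA_r\bu\|^2\mid\bW]<\infty$ almost surely, and that $\bGam_r$ enters only through $\bA_r\bR_r$. The claim that $\bGam_r$ is finite almost surely would be read in that sense, or justified via the positive definiteness of $\bA_r$ on $\operatorname{col}(\bR_r)$. Since $\bu=\bR_r(\bD^{[r]}-\bdel^{[r]})$ lies in $\operatorname{col}(\bR_r)$, this controls the full conditional variance.

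Finally, when $\bSig_r(\bW)$ is finite, $\bu=\bR_r(\bD^{[r]}-\bdel^{[r]})$ and moment homogeneity give $\bGam_r=\bR_r\bSig_r\bR_r'$ directly.
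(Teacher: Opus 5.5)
Your proposal is correct and is essentially the paper's proof: the same exact linear expansion, laws of large numbers for $\widehat{\bM}_{r\bA}$ and the score, the Lindeberg--L\'evy CLT under \eqref{eq:int2}, Slutsky's theorem, and iterated expectations for \eqref{eq:V}. Your treatment of the finiteness of $\bGam_r$ is also the paper's argument. The paper spells out the step you leave implicit: $e\mapsto\bR_r'\bA_re$ is injective on $\operatorname{col}(\bR_r(w))$, so $\|e\|\le C(w)\|\bR_r'\bA_re\|$ there, and hence $\E[\|\bu\|^2\mid\bW]\le C(\bW)^2\,\E[\|\bpsi_r\|^2\mid\bW]<\infty$ almost surely.
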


\subsection{Robust covariance estimation}\label{sec:robust}

To establish consistency of the covariance estimator based on the estimated
moment contributions, impose the additional moment condition
\begin{equation}\label{eq:int3}
\E\|\bR_r'\bA_r\bR_r\|^2<\infty.
\end{equation}
Condition~\eqref{eq:int3} is not required for consistency or asymptotic
normality in Theorem~\ref{thm:asynorm}.
It controls the terms involving
$\widehat{\bdel}^{[r]}-\bdel^{[r]}$ that arise when the estimated residuals
are substituted into the empirical covariance matrix.

Define the estimated residual and moment contribution by
\begin{align}
\widehat e_{r,i}
&=
\bY_i^{[r]}-\bR_{r,i}\widehat{\bdel}^{[r]},\label{eq:resid}\\
\widehat\bpsi_{r,i}
&=
\bR_{r,i}'\bA_{r,i}\widehat e_{r,i},\label{eq:psihat}
\end{align}
and let
\begin{equation}\label{eq:Vhat}
\widehat\bV_{r\bA}
=
\frac1N\sum_{i=1}^N
\widehat\bpsi_{r,i}\widehat\bpsi_{r,i}'.
\end{equation}

\begin{theorem}[Robust covariance estimator]\label{thm:robust}
Under the assumptions of Theorem~\ref{thm:asynorm} and the additional moment condition \eqref{eq:int3},
\begin{equation}\label{eq:Vhatcons}
\widehat\bV_{r\bA}\pto\bV_{r\bA},
\end{equation}
and therefore
\begin{equation}\label{eq:sandwichhat}
\widehat\bM_{r\bA}^{-1}
\widehat\bV_{r\bA}
\widehat\bM_{r\bA}^{-1}
\pto
\bM_{r\bA}^{-1}\bV_{r\bA}\bM_{r\bA}^{-1}.
\end{equation}
\end{theorem}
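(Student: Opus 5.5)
The plan is to expand $\widehat\bV_{r\bA}$ around the infeasible average $\widetilde\bV=N^{-1}\sum_i\bpsi_{r,i}\bpsi_{r,i}'$, where $\bpsi_{r,i}=\bR_{r,i}'\bA_{r,i}(\bY_i^{[r]}-\bR_{r,i}\bdel^{[r]})$. Write $\bB_i=\bR_{r,i}'\bA_{r,i}\bR_{r,i}$ and $\Delta=\widehat{\bdel}^{[r]}-\bdel^{[r]}$. Since $\widehat e_{r,i}=(\bY_i^{[r]}-\bR_{r,i}\bdel^{[r]})-\bR_{r,i}\Delta$, we get
\[
\widehat\bpsi_{r,i}=\bpsi_{r,i}-\bB_i\Delta .
\]
Hence
\[
\widehat\bV_{r\bA}
=\widetilde\bV
-\frac1N\sum_{i}\bpsi_{r,i}\Delta'\bB_i
-\frac1N\sum_i\bB_i\Delta\bpsi_{r,i}'
+\frac1N\sum_i\bB_i\Delta\Delta'\bB_i .
\]
The first term converges to $\bV_{r\bA}$ by the weak law of large numbers, since the $\bpsi_{r,i}\bpsi_{r,i}'$ are i.i.d.\ with $\E\|\bpsi_r\bpsi_r'\|\le\E\|\bpsi_r\|^2<\infty$ by \eqref{eq:int2}.

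The remaining terms are handled using $\Delta\pto\bzero$ from Theorem~\ref{thm:asynorm}. For the last term, take norms:
\[
\Big\|\frac1N\sum_i\bB_i\Delta\Delta'\bB_i\Big\|\le\|\Delta\|^2\,\frac1N\sum_i\|\bB_i\|^2 .
\]
The average is $O_p(1)$ by the law of large numbers under \eqref{eq:int3}, so this term is $o_p(1)$. For the cross terms, the bound is
\[
\|\Delta\|\,\frac1N\sum_i\|\bpsi_{r,i}\|\|\bB_i\|\le\|\Delta\|\Big(\frac1N\sum_i\|\bpsi_{r,i}\|^2\Big)^{1/2}\Big(\frac1N\sum_i\|\bB_i\|^2\Big)^{1/2}
\]
by Cauchy--Schwarz. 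Both averages are $O_p(1)$ by \eqref{eq:int2} and \eqref{eq:int3}, so the cross terms are $o_p(1)$ as well. This gives \eqref{eq:Vhatcons}.

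The only delicate step is bounding the cross terms. A direct law of large numbers for $\bpsi_{r,i}\otimes\bB_i$ would need $\E\|\bpsi_r\|\|\bB\|<\infty$. This follows from Cauchy--Schwarz under \eqref{eq:int2}--\eqref{eq:int3}, which is precisely why \eqref{eq:int3} is imposed. Working with the sample Cauchy--Schwarz bound avoids even that step.

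For \eqref{eq:sandwichhat}, note that $\widehat\bM_{r\bA}\pto\bM_{r\bA}$ by the law of large numbers under \eqref{eq:int1}. The limit $\bM_{r\bA}$ is nonsingular by the support condition and Theorem~\ref{thm:support}. Matrix inversion is continuous at nonsingular matrices, and $\widehat\bM_{r\bA}$ is invertible with probability approaching one. The continuous mapping theorem, applied to $(\widehat\bM_{r\bA},\widehat\bV_{r\bA})\mapsto\widehat\bM_{r\bA}^{-1}\widehat\bV_{r\bA}\widehat\bM_{r\bA}^{-1}$, then yields the result.
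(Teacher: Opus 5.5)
Your proposal is correct and follows the paper's proof essentially step for step. Both use the expansion $\widehat\bpsi_{r,i}=\bpsi_{r,i}-\bR_{r,i}'\bA_{r,i}\bR_{r,i}(\widehat{\bdel}^{[r]}-\bdel^{[r]})$, the WLLN for the leading term under \eqref{eq:int2}, the sample Cauchy--Schwarz bound on the cross terms via \eqref{eq:int2} and \eqref{eq:int3}, the $\|\Delta\|^2$ bound on the last term, and continuous mapping with nonsingular $\bM_{r\bA}$ for the sandwich. The paper writes $\bH_i$ where you write $\bB_i$, which avoids a clash with the information matrix $\bB_r$ defined in the efficiency subsection.
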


\subsection{Oracle efficiency benchmark}\label{sec:efficiency}

We use the conditional covariance matrix
\[
\bGam_r(\bW)
=
\Var\big(\bY^{[r]}\mid\bW\big)
\]
to define an oracle efficiency benchmark for estimation of
$\bdel^{[r]}$ from the maintained conditional-moment restriction.
If $\bGam_r(\bW)$ were known, the corresponding optimal weight is its
Moore--Penrose generalized inverse $\bGam_r(\bW)^+$; when
$\bGam_r(\bW)$ is nonsingular, this reduces to the ordinary inverse.
This weight is infeasible because $\bGam_r(\bW)$ is unknown.
A feasible version would require estimation of the conditional covariance
matrix and additional regularity conditions, which we do not develop here.
The fixed-weight estimator and robust covariance estimator of the preceding
sections do not require this first step.

\begin{assumption}[Conditional nondegeneracy]\label{ass:nondeg}
$\bSig_r(\bW)=\Var(\bD^{[r]}\mid\bW)$ is positive definite almost surely, and
\begin{equation}\label{eq:Bdef}
\bB_r
=
\E\big[\bR_r'\bGam_r^{+}\bR_r\big]
\end{equation}
is finite, where $\bGam_r^{+}=\bGam_r(\bW)^{+}$ denotes the Moore--Penrose generalized inverse of $\bGam_r(\bW)=\bR_r\bSig_r(\bW)\bR_r'$.
\end{assumption}

Assumption~\ref{ass:nondeg} imposes no rank restriction on $\bW$.
For almost every $w$ with $\rank(w)=k$,
\[
\rank\bGam_r(w)
=
\rank\bR_r(w)
=
\binom{k+r-1}{r}
\]
by \eqref{eq:general-rank}.
Thus $\bGam_r(w)$ is nonsingular almost surely on the event
$\{\rank(\bW)=T\}$, and on that event
$\bGam_r(w)^+=\bGam_r(w)^{-1}$.
Positive definiteness of $\bSig_r(\bW)$ also implies
\[
\operatorname{col}\big(\bGam_r(\bW)\big)
=
\operatorname{col}\big(\bR_r(\bW)\big)
\quad\text{a.s.}
\]
Hence
\[
\bGam_r\bGam_r^+\bR_r=\bR_r
\quad\text{a.s.},
\]
so the generalized inverse acts as an inverse on the column space containing
the residual
\[
\bY^{[r]}-\bR_r\bdel^{[r]}
=
\bR_r\big(\bD^{[r]}-\bdel^{[r]}\big).
\]
Moreover, $\bGam_r^+$ is positive definite on
$\operatorname{col}(\bR_r(\bW))$.
Under the support condition of Theorem~\ref{thm:support}, for every
$\bc\neq\bzero$,
\[
\bc'\bB_r\bc
=
\E\big[(\bR_r\bc)'\bGam_r^+(\bR_r\bc)\big]
>0,
\]
so $\bB_r$ is nonsingular.  Nonsingularity of $\bB_r$ is therefore not an
additional identification restriction.  If, in addition, the integrability
conditions in the definition of admissibility hold for $\bGam_r^+$, then
$\bA_r^*=\bGam_r^+$ is an admissible weight for estimation.

\begin{theorem}[Oracle optimal weight and conditional-moment efficiency bound]\label{thm:efficient}
Suppose Assumptions~\ref{ass:model}, \ref{ass:moment}, and \ref{ass:nondeg} and the support conditions of Theorem~\ref{thm:support} hold.  For any admissible weight $\bA_r$ satisfying \eqref{eq:int1}--\eqref{eq:int2},
\begin{equation}\label{eq:effineq}
\bM_{r\bA}^{-1}\bV_{r\bA}\bM_{r\bA}^{-1}
\succeq
\bB_r^{-1}.
\end{equation}
Equality is attained by the oracle weight
\begin{equation}\label{eq:Astar}
\bA_r^*(\bW)=\bGam_r(\bW)^{+},
\end{equation}
which equals $\bGam_r(\bW)^{-1}$ almost surely on
$\{\rank(\bW)=T\}$.
If, in addition, the observations are i.i.d.\ as in Assumption~\ref{ass:sampling} and the oracle weight itself satisfies \eqref{eq:int1}--\eqref{eq:int2}, then for this choice,
\begin{equation}\label{eq:effclt}
\sqrt{N}\big(\widehat{\bdel}^{[r]}-\bdel^{[r]}\big)
\dto
N(\bzero,\bB_r^{-1}).
\end{equation}
Moreover, $\bB_r^{-1}$ is the Chamberlain (1987) semiparametric efficiency bound for $\bdel^{[r]}$ in the conditional moment model
\begin{equation}\label{eq:cmr}
\E\big[\bY^{[r]}-\bR_r(\bW)\bdel^{[r]}\mid\bW\big]=\bzero.
\end{equation}
\end{theorem}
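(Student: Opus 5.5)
The argument has three parts: a generalized-inverse Gauss--Markov inequality for \eqref{eq:effineq}, a direct evaluation of the sandwich at the oracle weight (together with Theorem~\ref{thm:asynorm} for \eqref{eq:effclt}), and a reduction of the Chamberlain (1987) bound to $\bB_r^{-1}$ using the column-space facts listed after Assumption~\ref{ass:nondeg}.

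\textbf{Inequality \eqref{eq:effineq}.} Fix an admissible $\bA_r$. By Theorem~\ref{thm:support}, $\bM_{r\bA}$ is nonsingular, and, as shown after Assumption~\ref{ass:nondeg}, so is $\bB_r$. Write $\bR=\bR_r(\bW)$, $\bGam=\bGam_r(\bW)$, and set $\bG=\bA_r\bR$, so that $\bM_{r\bA}=\E[\bG'\bR]$ and $\bV_{r\bA}=\E[\bG'\bGam\bG]$ by \eqref{eq:V}.

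The key identity is $\bR=\bGam\bGam^+\bR$ almost surely, which holds because $\operatorname{col}(\bGam)=\operatorname{col}(\bR)$ under Assumption~\ref{ass:nondeg}. It gives $\bM_{r\bA}=\E[\bG'\bGam\bGam^+\bR]$.

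Now use the positive semidefinite matrix
\[
\E\Big[\begin{pmatrix}\bG'\bGam^{1/2}\\ \bR'\bGam^{+1/2}\end{pmatrix}\begin{pmatrix}\bGam^{1/2}\bG & \bGam^{+1/2}\bR\end{pmatrix}\Big]
=\begin{pmatrix}\bV_{r\bA} & \bM_{r\bA}\\ \bM_{r\bA}' & \bB_r\end{pmatrix}\succeq 0 .
\]
For the off-diagonal block, $\bGam^{1/2}\bGam^{+1/2}=\bGam\bGam^+$, the orthogonal projector onto $\operatorname{col}(\bGam)$; together with the key identity this yields $\E[\bG'\bGam^{1/2}\bGam^{+1/2}\bR]=\bM_{r\bA}$.

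Since $\bB_r$ is nonsingular, the Schur complement gives $\bV_{r\bA}\succeq\bM_{r\bA}\bB_r^{-1}\bM_{r\bA}'$. Conjugating by $\bM_{r\bA}^{-1}$ and using symmetry of $\bM_{r\bA}$ gives \eqref{eq:effineq}. The integrability in \eqref{eq:int1}--\eqref{eq:int2} and finiteness of $\bB_r$ make every block finite.

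\textbf{Equality at the oracle weight and \eqref{eq:effclt}.} With $\bA_r^*=\bGam^+$ we get $\bM=\bB_r$ and $\bV=\E[\bR'\bGam^+\bGam\bGam^+\bR]=\bB_r$, so the sandwich collapses to $\bB_r^{-1}$. The weight $\bGam^+$ is admissible because it is positive definite on $\operatorname{col}(\bR)$. Its equality with $\bGam^{-1}$ on $\{\rank(\bW)=T\}$ was already noted after Assumption~\ref{ass:nondeg}. Given admissibility, \eqref{eq:effclt} is Theorem~\ref{thm:asynorm} with $\bA_r=\bA_r^*$.

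\textbf{Chamberlain bound.} For the conditional moment model $\E[\rho(\bY,\bW,\delta)\mid\bW]=0$ with $\rho=\bY^{[r]}-\bR_r\delta$, Chamberlain (1987) gives the bound
\[
\big(\E[D'\Omega^{-1}D]\big)^{-1},\qquad D=\E[\partial\rho/\partial\delta'\mid\bW]=-\bR_r,\quad \Omega=\bGam_r .
\]
Chamberlain's result assumes $\Omega$ nonsingular, while here $\bGam_r$ is singular whenever $\bR_r$ lacks full row rank. This singularity is the main obstacle.

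The plan is to reduce to a nonsingular model. Pick a measurable $H(\bW)$ whose rows form an orthonormal basis of $\operatorname{col}(\bR_r(\bW))$. The residual lies in $\operatorname{col}(\bR_r)$ almost surely, so the components of $\rho$ orthogonal to this column space vanish identically. Hence the original model is equivalent to the reduced conditional moment model $\E[H\rho\mid\bW]=0$.

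In the reduced model, $\Var(H\rho\mid\bW)=H\bGam H'$ is nonsingular because $\bSig_r\succ0$, and its Jacobian is $-H\bR_r$. Chamberlain's bound then equals
\[
\big(\E[\bR_r'H'(H\bGam H')^{-1}H\bR_r]\big)^{-1}.
\]
Since $H'(H\bGam H')^{-1}H=\bGam^+$ (a standard identity for a symmetric matrix with that column space), this is $\bB_r^{-1}$. Measurability of $H$ and of the varying rank of the column space should be handled by partitioning $\cW$ according to $\rank(\bW)$.
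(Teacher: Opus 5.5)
Your proposal is correct and follows the paper's proof in all three parts: your block matrix is exactly the conditional second-moment matrix of the paper's pair $u=\bR_r'\bA_r e$, $v=\bR_r'\bGam_r^{+}e$, so your Schur-complement step is the same Cauchy--Schwarz argument, and your oracle evaluation is identical. The only difference is in the Chamberlain step, where you reduce to nonsingular restrictions by projecting onto an orthonormal basis $H$ of $\operatorname{col}(\bR_r(\bW))$ and using $H'(H\bGam_r H')^{-1}H=\bGam_r^{+}$. The paper instead selects $k$ linearly independent rows $w_\circ$ of $w$ and factors $\bR_r(w)=\bR_r(\mathbf F)\bR_r(w_\circ)$, with $\bR_r(\mathbf F)$ of full column rank; your version avoids that factorization, while the paper's keeps the reduced residual as a subvector of the original residual.
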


\begin{remark}[Scope of the efficiency statement]\label{rem:effscope}
The efficiency claim in Theorem~\ref{thm:efficient} is relative to the
semiparametric model defined by the conditional-moment restriction
\eqref{eq:cmr}.
The structural random coefficient model may impose additional restrictions
on the conditional distribution of $\bY^{[r]}$ given $\bW$.
We therefore do not claim efficiency relative to all restrictions implied
by the structural model.
\end{remark}

The oracle calculation identifies the minimum asymptotic covariance within
the weighted minimum-distance class, and Chamberlain's result establishes
that the same covariance matrix is the semiparametric efficiency bound for
the conditional-moment model \eqref{eq:cmr}.
This is an oracle benchmark rather than a feasible efficiency result because
we do not develop an estimator of $\bGam_r(\bW)$.
The matrix $\bB_r$ also provides the information representation of the
support condition.

For a regressor history $w$, define
\begin{equation}\label{eq:Bcond}
\bB_r(w)
=
\bR_r(w)'\bGam_r(w)^{+}\bR_r(w),
\end{equation}
so that
\[
\bB_r=\E[\bB_r(\bW)].
\]
In the conditional-moment model \eqref{eq:cmr},
$\bB_r(w)$ is the conditional information matrix for
$\bdel^{[r]}$ at the regressor history $w$.

\begin{corollary}[Information and regressor rank]\label{cor:infogeom}
Suppose Assumption~\ref{ass:nondeg} holds.
\begin{enumerate}[label=(\roman*)]
\item \(\ker\bB_r=\bigcap_{w\in\cW}\ker\bR_r(w)=\cN_r\); through \(\bc\mapsto P_{\bc}\), this space corresponds to the degree-\(r\) homogeneous polynomials \(\cI_r(\cX)\) that vanish on \(\cX\).  Consequently, \(\bB_r\) is nonsingular if and only if the support condition of Theorem~\ref{thm:support} holds.
\item For every \(w\) at which \(\bSig_r(w)\) is positive definite,
\[
\ker\bB_r(w)=\ker\bR_r(w),
\qquad
\rank\bB_r(w)=\binom{\rank(w)+r-1}{r}.
\]
In particular, in the intercept specification, a regressor history with no within-unit regressor variation has \(\rank(w)=1\) and hence a rank-one conditional information matrix.  If \(T<q\), then \(\rank\bB_r(w)\le S_r<Q_r\), so the conditional information matrix is singular at every realization, while \(\bB_r\) is nonsingular whenever the support condition holds.
\end{enumerate}
\end{corollary}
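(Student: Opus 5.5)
Part (ii) comes first, because part (i) follows from it. Fix $w$ with $\bSig_r(w)\succ0$ and write $\bR=\bR_r(w)$, $\bGam=\bR\bSig_r(w)\bR'$. The key linear-algebra fact is that $\operatorname{col}(\bGam)=\operatorname{col}(\bR)$. One inclusion is immediate. For the other, if $v\perp\operatorname{col}(\bGam)$ then $v'\bGam v=(\bR'v)'\bSig_r(\bR'v)=0$, so $\bR'v=\bzero$ and $v\perp\operatorname{col}(\bR)$.

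Since $\bGam^+$ is positive definite on $\operatorname{col}(\bGam)=\operatorname{col}(\bR)$, we get $\bc'\bB_r(w)\bc=(\bR\bc)'\bGam^+(\bR\bc)=0$ exactly when $\bR\bc=\bzero$. Because $\bB_r(w)$ is positive semidefinite, this gives $\ker\bB_r(w)=\ker\bR_r(w)$. The rank formula then follows from $\rank\bB_r(w)=Q_r-\dim\ker\bR_r(w)=\rank\bR_r(w)$ together with \eqref{eq:general-rank} in Remark~\ref{rem:noR1}. I would spell out that calculation briefly: the rows of $\bR_r(w)$ correspond, via Lemma~\ref{lem:polar}, to the degree-$r$ products of the linear forms $w_t'd$, and these span the degree-$r$ forms in $k=\rank(w)$ variables.

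The two special cases of (ii) are then direct. A history with no within-unit variation has all rows equal to $(1,\bX_1')$, so $\rank(w)=1$ and $\rank\bB_r(w)=\binom{r}{r}=1$. If $T<q$, then $\rank(w)\le T$ gives $\rank\bB_r(w)\le S_r<Q_r$.

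For (i), positive semidefiniteness gives $\bc'\bB_r\bc=\E[\bc'\bB_r(\bW)\bc]$ with a nonnegative integrand. So $\bc\in\ker\bB_r$ if and only if $\bc'\bB_r(\bW)\bc=0$ almost surely. By (ii) and Assumption~\ref{ass:nondeg}, this holds if and only if $\bR_r(\bW)\bc=\bzero$ almost surely.

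The main obstacle is the step from ``almost surely'' to ``for every $w\in\cW$.'' The map $w\mapsto\bR_r(w)\bc$ is polynomial, hence continuous. Its zero set is therefore closed, and an almost-sure event of this form contains the support. Thus $\bR_r(w)\bc=\bzero$ on $\cW$, and conversely the event $\{\bW\in\cW\}$ has probability one. This gives $\ker\bB_r=\bigcap_{w\in\cW}\ker\bR_r(w)=\cN_r$.

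The correspondence with $\cI_r(\cX)$ is the last statement of Theorem~\ref{thm:support}, which rests on Lemma~\ref{lem:polar}. Nonsingularity of $\bB_r$ is equivalent to $\cN_r=\{\bzero\}$, which is condition (iii) of Theorem~\ref{thm:support}. Finiteness of $\bB_r$ is assumed in Assumption~\ref{ass:nondeg}, so the expectation is well defined throughout.
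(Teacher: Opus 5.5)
Your proposal is correct and follows the paper's argument. The column-space equality makes $\bGam_r^{+}$ positive definite on $\operatorname{col}(\bR_r)$, positive semidefiniteness reduces kernels to zero sets of quadratic forms, and the closed-zero-set argument passes from ``almost surely'' to every $w\in\cW$, with Theorem~\ref{thm:support} and \eqref{eq:general-rank} finishing both parts. The differences are cosmetic: you prove (ii) first and derive (i) from it pointwise, and you verify $\operatorname{col}(\bGam_r)=\operatorname{col}(\bR_r)$ explicitly, whereas the paper takes it from the discussion after Assumption~\ref{ass:nondeg}.
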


\begin{remark}[Singular conditional information and regular estimation]
\label{rem:infogeom}
Corollary~\ref{cor:infogeom} gives an information interpretation of the
support condition for the conditional-moment model \eqref{eq:cmr}.
When $T<q$, the conditional information matrix $\bB_r(w)$ is singular at
every regressor history.
Its null space satisfies
\[
\ker\bB_r(w)=\ker\bR_r(w)
\]
and, through $\bc\mapsto P_{\bc}$, corresponds to the degree-$r$
homogeneous polynomials that vanish on $\rowsp(w)$.
These null spaces can differ across regressor histories.
The information matrix is nonsingular exactly when their
intersection is trivial.

Because Assumption~\ref{ass:nondeg} does not impose full row rank, this
argument also covers populations containing regressor histories of different
ranks.
For example, in the intercept-plus-scalar-regressor setting of
Corollary~\ref{cor:scalar}, a stayer has a rank-one conditional information
matrix, while a mover has conditional information matrix of rank $S_r$.
Both enter the matrix $\bB_r$.
When the support condition holds, Theorem~\ref{thm:asynorm} gives
$\sqrt N$-consistent and asymptotically normal estimation of the moment
vector, and the infeasible oracle estimator of
Theorem~\ref{thm:efficient} attains the conditional-moment efficiency bound
$\bB_r^{-1}$.

This differs from Graham and Powell (2012).
In their correlated random coefficient panel model, with the time dimension
equal to the number of random coefficients, the semiparametric information
for the average partial effect is singular even though the effect remains
identified through stayers and the behavior of conditional means near
singular regressor histories.
Estimation is therefore irregular and proceeds at a slower-than-$\sqrt N$
rate.
Here the conditional information matrix is singular at every regressor
history, but aggregation across histories can yield a nonsingular information matrix.

If $\bB_r$ is singular, then
$\ker\bB_r=\cN_r\neq\{\bzero\}$.
Linear functionals $\ba'\bdel^{[r]}$ with
$\ba\not\perp\cN_r$ are not identified under the maintained
conditional-moment restrictions by Theorem~\ref{thm:partialset}, and
Section~\ref{sec:partial} characterizes the resulting identified set.
We make no irregular-identification claim in this case.
As in Remark~\ref{rem:effscope}, these statements concern the
conditional-moment model \eqref{eq:cmr}, not all restrictions implied by the
structural random coefficient model.
\end{remark}

\begin{examplecont}{Example~\ref{ex:runningA} (continued: information)}
The two-history support in Example~\ref{ex:runningA} illustrates
Corollary~\ref{cor:infogeom}.
Suppose $\bSig_2(w)$ is positive definite at both support points.
Since both regressor histories have full row rank,
$\bGam_2(w)$ is positive definite and
$\bGam_2(w)^+=\bGam_2(w)^{-1}$ at those support points.
For $j=1,2$, \eqref{eq:Bcond} gives
\[
\rank \bB_2(\bW^{(j)})
=
\rank \bR_2(\bW^{(j)})
=
3<6.
\]
Thus each conditional information matrix has a three-dimensional null space.

Let
\[
\pi=\Pr\{\bW=\bW^{(1)}\}\in(0,1).
\]
The information matrix is
\[
\bB_2
=
\pi\bB_2(\bW^{(1)})
+
(1-\pi)\bB_2(\bW^{(2)}).
\]
Its rank is
\[
\rank(\bB_2)=5,
\qquad
\ker(\bB_2)
=
\operatorname{span}\{(0,0,0,0,1,0)'\}.
\]
The remaining null direction corresponds to the second moment
$\E[D_2D_3]$.

The three characterizations of the failure of second-order identification
therefore agree in this example.
The polynomial $x_2x_3$ vanishes on both row spaces,
$\E[D_2D_3]$ is the unique unidentified second moment, and its coefficient
direction spans the null space of the information matrix.
\end{examplecont}

\subsection{Exact recovery and the oracle estimator}\label{sec:exact}

When $q=T$ and $\bW$ is nonsingular,
\[
\bD=\bW^{-1}\bY,
\]
so the coefficient vector is recovered unit by unit.
The oracle estimator then has a direct interpretation in terms of averages
of the recovered coefficient products.

\begin{corollary}[Exact recovery when $q=T$]\label{cor:exactrecover}
Suppose $q=T$ and $\Pr\{\rank(\bW)=T\}=1$.  Then $\bR_r(\bW)$ is square and nonsingular and
\begin{equation}\label{eq:Drecover}
\bD_i^{[r]}=\bR_{r,i}^{-1}\bY_i^{[r]}.
\end{equation}
If, in addition, $\Var(\bD^{[r]}\mid\bW)=\bSig_r$ is constant and positive definite, and the oracle weight is used, then
\begin{equation}\label{eq:qTmean}
\widehat{\bdel}^{[r]}
=
\frac1N\sum_{i=1}^N\bD_i^{[r]},
\qquad
\sqrt{N}(\widehat{\bdel}^{[r]}-\bdel^{[r]})
\dto
N(\bzero,\bSig_r).
\end{equation}
If instead $\bSig_r(\bW)=\Var(\bD^{[r]}\mid\bW)$ is positive definite almost surely, $\E\|\bSig_r(\bW)\|<\infty$, $\E\|\bSig_r(\bW)^{-1}\|<\infty$, and the oracle weight satisfies \eqref{eq:int1}--\eqref{eq:int2}, so that the left-hand side below is the asymptotic covariance in \eqref{eq:effclt}, the oracle weight yields
\begin{equation}\label{eq:qTharmonic}
\bM_{r\bA^*}^{-1}\bV_{r\bA^*}\bM_{r\bA^*}^{-1}
=
\big(\E[\bSig_r(\bW)^{-1}]\big)^{-1}
\preceq
\E[\bSig_r(\bW)]
=
\Var(\bD^{[r]}),
\end{equation}
with equality if and only if $\bSig_r(\bW)$ is constant almost surely; the right-hand side is the asymptotic covariance of the unweighted average $\overline{\bD^{[r]}}=N^{-1}\sum_{i=1}^N\bD_i^{[r]}$.
\end{corollary}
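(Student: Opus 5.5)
Since $q=T$ and $\bW$ is nonsingular almost surely, Lemma~\ref{lem:rank} gives $\rank\bR_r(\bW)=S_r=Q_r$ almost surely. Hence $\bR_r(\bW)$ is square and invertible, and \eqref{eq:Drecover} follows by premultiplying \eqref{eq:lift} by $\bR_{r,i}^{-1}$. Because $\bR_r$ is invertible, $\bGam_r(\bW)=\bR_r\bSig_r(\bW)\bR_r'$ is nonsingular whenever $\bSig_r(\bW)$ is positive definite. The oracle weight is therefore
\[
\bA_r^*=\bGam_r^{-1}=\bR_r'^{-1}\bSig_r(\bW)^{-1}\bR_r^{-1}.
\]
Substituting this into the estimator \eqref{eq:estimator} gives
\[
\bR_{r,i}'\bA^*_{r,i}\bR_{r,i}=\bSig_r(\bW_i)^{-1},
\qquad
\bR_{r,i}'\bA^*_{r,i}\bY_i^{[r]}=\bSig_r(\bW_i)^{-1}\bD_i^{[r]}.
\]
So in general the oracle estimator is the precision-weighted average
\[
\Big(\sum_i\bSig_r(\bW_i)^{-1}\Big)^{-1}\sum_i\bSig_r(\bW_i)^{-1}\bD_i^{[r]}.
\]
All later statements follow from this identity.

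\emph{Constant-variance case.} If $\bSig_r(\bW)=\bSig_r$ is constant, the weights cancel and $\widehat\bdel^{[r]}=N^{-1}\sum_i\bD_i^{[r]}$. The $\bD_i^{[r]}$ are i.i.d.\ with mean $\bdel^{[r]}$; this uses Assumption~\ref{ass:moment} and the law of iterated expectations. Their variance is
\[
\Var(\bD^{[r]})=\E[\bSig_r]+\Var(\E[\bD^{[r]}\mid\bW])=\bSig_r,
\]
since the conditional mean is constant under moment homogeneity. The ordinary CLT then gives \eqref{eq:qTmean}. Alternatively, Theorem~\ref{thm:efficient} applies directly: $\bB_r=\E[\bSig_r^{-1}]=\bSig_r^{-1}$, so the limit variance is $\bB_r^{-1}=\bSig_r$.

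\emph{Heterogeneous-variance case.} Here I would use Theorem~\ref{thm:efficient}, which gives the asymptotic covariance $\bB_r^{-1}$ with $\bB_r=\E[\bR_r'\bGam_r^{-1}\bR_r]=\E[\bSig_r(\bW)^{-1}]$. That establishes the equality in \eqref{eq:qTharmonic}. The variance identity $\Var(\bD^{[r]})=\E[\bSig_r(\bW)]$ again follows from the law of total variance, because the conditional mean is constant.

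The matrix inequality $(\E[\bSig^{-1}])^{-1}\preceq\E[\bSig]$ is the step needing care. I would prove it with a Schur-complement argument. The block matrix
\[
\begin{pmatrix}\bSig&\bI\\ \bI&\bSig^{-1}\end{pmatrix}
\]
is positive semidefinite pointwise, since its Schur complement $\bSig^{-1}-\bI\bSig^{-1}\bI$ is zero. Taking expectations preserves semidefiniteness, and this is where the integrability assumptions $\E\|\bSig_r\|<\infty$ and $\E\|\bSig_r^{-1}\|<\infty$ are used. The Schur complement of the averaged matrix with respect to the invertible block $\E[\bSig^{-1}]$ then gives $\E[\bSig]-(\E[\bSig^{-1}])^{-1}\succeq0$.

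For the equality case I would use a direct variance representation. Set $H=(\E[\bSig^{-1}])^{-1}$ and $X=\bSig^{1/2}-\bSig^{-1/2}H$. Expanding gives
\[
\E[X'X]=\E\bSig-2H+H\,\E[\bSig^{-1}]\,H=\E\bSig-H.
\]
So $\E\bSig-H=0$ forces $X=0$ almost surely, that is, $\bSig=H$ almost surely, which is constant. The converse, that a constant $\bSig$ gives equality, is immediate.

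Finally, the identification of the right-hand side of \eqref{eq:qTharmonic} with the asymptotic covariance of the unweighted average $\overline{\bD^{[r]}}$ follows from the i.i.d.\ CLT with variance $\Var(\bD^{[r]})$. Equivalently, it is the identity-weight case of Theorem~\ref{thm:asynorm} after the invertible change of variables $\bA_r=\bR_r'^{-1}\bR_r^{-1}$.
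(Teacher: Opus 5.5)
Your proof is correct. It follows the paper's skeleton: recovery via Lemma~\ref{lem:rank}, the identity $\bR_r'\bGam_r^{-1}\bR_r=\bSig_r(\bW)^{-1}$, reduction to a sample mean plus the CLT in the homoskedastic case, and $\bM_{r\bA^*}=\bV_{r\bA^*}=\E[\bSig_r(\bW)^{-1}]$ with the law of total variance in general. The one genuine difference is the matrix harmonic--arithmetic mean step.

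\textbf{The paper's route.} It completes the square with $\bB=(\E[\bSig_r])^{-1}$. This gives $\E[\bSig_r^{-1}]\succeq(\E[\bSig_r])^{-1}$, so the paper then needs the order-reversing property of matrix inversion to reach the stated direction. Its equality case uses the same identity: equality of the inverses forces the expected square to vanish.

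\textbf{Your route.} You complete the square with the dual choice $H=(\E[\bSig_r^{-1}])^{-1}$. Your identity $\E[X'X]=\E[\bSig_r]-H$ is exact, so it gives the inequality in the stated direction and the equality characterization at once, with no inversion step. It also exhibits the efficiency loss of unweighted averaging directly as an expected Gram matrix. Given that identity, your Schur-complement argument on the averaged block matrix is a valid but redundant second proof of the inequality; it does not by itself settle the equality case, which is why you needed the identity anyway.

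\textbf{A small completeness point.} When you cite Theorem~\ref{thm:efficient}, say why its hypotheses hold. Here $\cX=\mathbb R^q$ gives the support condition, and $\E\|\bSig_r(\bW)^{-1}\|<\infty$ makes $\bB_r=\E[\bSig_r(\bW)^{-1}]$ finite. Alternatively, you can read off $\bM_{r\bA^*}=\bV_{r\bA^*}$ directly, as the paper does.
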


Without conditional homoskedasticity of $\bD^{[r]}$, exact recovery does not
imply equal weighting of the recovered coefficient products.
The oracle estimator weights the recovered $\bD_i^{[r]}$ according to their
conditional covariance matrices.
The matrix harmonic--arithmetic mean inequality
\eqref{eq:qTharmonic} gives the resulting reduction in asymptotic covariance
relative to the unweighted average.
The two covariance matrices coincide if and only if
$\bSig_r(\bW)$ is constant almost surely.

\section{Joint estimation of moments through order \texorpdfstring{$r$}{r}}\label{sec:stacked}

Covariances and higher central co-moments depend jointly on raw moments of
several orders.
To estimate these moments jointly, define
\begin{equation}\label{eq:stack}
\bY^{[\le r]}
=
\begin{pmatrix}
\bY^{[1]}\\
\vdots\\
\bY^{[r]}
\end{pmatrix},
\qquad
\bD^{[\le r]}
=
\begin{pmatrix}
\bD^{[1]}\\
\vdots\\
\bD^{[r]}
\end{pmatrix},
\qquad
\bR_{\le r}
=
\diag(\bR_1,\ldots,\bR_r),
\end{equation}
so that
\begin{equation}\label{eq:stackrep}
\bY^{[\le r]}=\bR_{\le r}\bD^{[\le r]}.
\end{equation}
Let
\begin{equation}\label{eq:stackdelta}
\bdel^{[\le r]}=\E[\bD^{[\le r]}].
\end{equation}

\begin{lemma}[Dimensions and rank of the stacked matrix]\label{lem:dims}
The dimensions are
\begin{equation}\label{eq:dims}
S_{\le r}
=
\sum_{s=1}^r\binom{T+s-1}{s}
=
\binom{T+r}{r}-1,
\qquad
Q_{\le r}
=
\sum_{s=1}^r\binom{q+s-1}{s}
=
\binom{q+r}{r}-1.
\end{equation}
If $\Pr\{\rank(\bW)=T\}=1$, then
\begin{equation}\label{eq:stackrank}
\rank(\bR_{\le r})=S_{\le r}
\quad\text{a.s.}
\end{equation}
\end{lemma}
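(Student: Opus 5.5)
The plan has two parts, one for each claim in Lemma~\ref{lem:dims}.

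\textbf{Dimension formulas.} These follow from the hockey-stick identity. I will show that $\sum_{s=0}^r\binom{n+s-1}{s}=\binom{n+r}{r}$ for any $n\ge1$, by induction on $r$ using Pascal's rule
\[
\binom{n+r}{r}=\binom{n+r-1}{r}+\binom{n+r-1}{r-1}.
\]
Combinatorially, this identity says that the monomials of degree at most $r$ in $n$ variables correspond to monomials of degree exactly $r$ in $n+1$ variables, obtained by adding a slack variable. Subtracting the $s=0$ term, which equals $1$, then gives both $S_{\le r}=\binom{T+r}{r}-1$ (with $n=T$) and $Q_{\le r}=\binom{q+r}{r}-1$ (with $n=q$).

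\textbf{Rank statement.} I will use the block-diagonal structure of $\bR_{\le r}=\diag(\bR_1,\ldots,\bR_r)$. The rank of a block-diagonal matrix is the sum of the ranks of its blocks, so $\rank(\bR_{\le r})=\sum_{s=1}^r\rank\bR_s(\bW)$.

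On the event $\{\rank(\bW)=T\}$, which has probability one, Lemma~\ref{lem:rank} gives $\rank\bR_s(\bW)=S_s=\binom{T+s-1}{s}$ for each $s=1,\ldots,r$. Summing over $s$ gives $S_{\le r}$.

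Since $\bR_{\le r}$ has $S_{\le r}$ rows, this is full row rank. The conclusion holds almost surely because a finite intersection of probability-one events has probability one; here there is really only one event, namely $\{\rank(\bW)=T\}$.

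\textbf{Main obstacle.} There is no substantive obstacle. The only point needing care is the off-by-one in the hockey-stick sum, since the lemma's sum starts at $s=1$ rather than $s=0$. I will verify it at $r=1$, where both sides equal $T$ (respectively $q$), and at $r=2$, where $T+\binom{T+1}{2}=\binom{T+2}{2}-1$.
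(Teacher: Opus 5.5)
Your proposal is correct and follows essentially the same route as the paper. The paper counts monomials of degree at most $r$ through the slack-variable bijection, which you give as the combinatorial reading of your Pascal's-rule induction, and then sums the block ranks $S_s$ from Lemma~\ref{lem:rank} using the block-diagonal structure of $\bR_{\le r}$.
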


For the stacked system, replace Assumption~\ref{ass:moment} by the following
restriction.

\begin{assumption}[Moment homogeneity through order $r$]\label{ass:momentstack}
For $s=1,\ldots,r$,
\begin{equation}\label{eq:momentstack}
\E[\bD^{[s]}\mid\bW]=\bdel^{[s]}
\quad\text{a.s.}
\end{equation}
\end{assumption}

Let $\bA_{\le r}(\bW)$ be a known measurable admissible
$S_{\le r}\times S_{\le r}$ weight, with admissibility defined as in
Section~\ref{sec:identification} after replacing $\bR_r$ by $\bR_{\le r}$,
and define
\begin{equation}\label{eq:Mstack}
\bM_{\le r,\bA}
=
\E[\bR_{\le r}'\bA_{\le r}\bR_{\le r}].
\end{equation}
By Proposition~\ref{prop:toporder}, the order-$r$ support condition of
Theorem~\ref{thm:support} implies that $\bM_{\le r,\bA}$ is nonsingular for
every admissible stacked weight.
For asymptotic normality, impose the stacked analogues of
\eqref{eq:int1}--\eqref{eq:int2}.
Consistency of the covariance estimator based on the estimated moment
contributions additionally requires the stacked analogue of
\eqref{eq:int3}.

On the event that the sample Gram matrix is nonsingular, define
\begin{equation}\label{eq:stackest}
\widehat{\bdel}^{[\le r]}
=
\left(
\frac1N\sum_{i=1}^N\bR_{\le r,i}'\bA_{\le r,i}\bR_{\le r,i}
\right)^{-1}
\frac1N\sum_{i=1}^N\bR_{\le r,i}'\bA_{\le r,i}\bY_i^{[\le r]}.
\end{equation}
Under the conditions of Theorem~\ref{thm:stacked}, this event has probability
approaching one.

\begin{theorem}[Stacked moments]\label{thm:stacked}
Under Assumptions~\ref{ass:model}, \ref{ass:momentstack}, and
\ref{ass:sampling}, the order-$r$ support condition of
Theorem~\ref{thm:support}, and the stacked analogues of
\eqref{eq:int1}--\eqref{eq:int2},
\begin{equation}\label{eq:stackclt}
\sqrt{N}\big(\widehat{\bdel}^{[\le r]}-\bdel^{[\le r]}\big)
\dto
N\left(
\bzero,
\bM_{\le r,\bA}^{-1}
\bV_{\le r,\bA}
\bM_{\le r,\bA}^{-1}
\right),
\end{equation}
where
\begin{align}
\bGam_{\le r}(\bW)
&=
\Var(\bY^{[\le r]}\mid\bW),\label{eq:Gammastack}\\
\bV_{\le r,\bA}
&=
\E\big[
\bR_{\le r}'\bA_{\le r}\bGam_{\le r}\bA_{\le r}\bR_{\le r}
\big].\label{eq:Vstack}
\end{align}
Whenever $\bSig_{\le r}(\bW)=\Var(\bD^{[\le r]}\mid\bW)$ exists as a
finite matrix, $\bGam_{\le r}(\bW)=\bR_{\le r}\bSig_{\le r}(\bW)\bR_{\le r}'$.
If, in addition, the stacked analogue of \eqref{eq:int3} holds, the
sandwich covariance estimator obtained by replacing the order-$r$ objects in
Theorem~\ref{thm:robust} by their stacked counterparts is consistent.

If
$\bSig_{\le r}(\bW)=\Var(\bD^{[\le r]}\mid\bW)$ is positive definite
almost surely and
\begin{equation}\label{eq:Bstack}
\bB_{\le r}
=
\E[\bR_{\le r}'\bGam_{\le r}^{+}\bR_{\le r}]
\end{equation}
is finite, where $\bGam_{\le r}^{+}$ is the Moore--Penrose generalized
inverse of $\bGam_{\le r}(\bW)$, then $\bB_{\le r}$ is nonsingular under
the order-$r$ support condition.
Indeed, positive definiteness of $\bSig_{\le r}(\bW)$ implies that
$\bGam_{\le r}^{+}$ is positive definite on the column space of
$\bR_{\le r}(\bW)$, and Proposition~\ref{prop:toporder} together with
Theorem~\ref{thm:support} gives the corresponding common-null-space
condition.
The Chamberlain efficiency bound for $\bdel^{[\le r]}$ in the stacked
conditional-moment model is then
\begin{equation}\label{eq:stackbound}
\bB_{\le r}^{-1},
\end{equation}
and the oracle weight is
$\bA_{\le r}^*=\bGam_{\le r}^{+}$.
Application of the oracle estimator additionally requires the corresponding
integrability conditions.  Feasible implementation would also require an
estimator of $\bGam_{\le r}(\bW)$, which we do not develop here.
\end{theorem}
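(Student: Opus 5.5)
The plan is to reduce the stacked theorem to the single-order arguments already given, exploiting the block-diagonal structure of $\bR_{\le r}$. First, identification and nonsingularity. Under Assumption~\ref{ass:momentstack}, the representation \eqref{eq:stackrep} gives $\E[\bY^{[\le r]}\mid\bW]=\bR_{\le r}(\bW)\bdel^{[\le r]}$. For an admissible stacked weight, $\bc'\bM_{\le r,\bA}\bc=0$ forces $\bR_{\le r}(\bW)\bc=\bzero$ almost surely, because $\bA_{\le r}$ is positive definite on $\operatorname{col}(\bR_{\le r})$. This is extended to every $w\in\cW$ by continuity of the polynomial entries and the definition of the support, as in Theorem~\ref{thm:support}. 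Writing $\bc=(\bc_1',\ldots,\bc_r')'$, the block-diagonal form gives $\bR_s(w)\bc_s=\bzero$ for all $w\in\cW$ and each $s$. By Proposition~\ref{prop:toporder}, the order-$r$ support condition implies the order-$s$ condition, so each $\bc_s=\bzero$ and $\bM_{\le r,\bA}$ is nonsingular.

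Second, asymptotics. Write $\widehat{\bdel}^{[\le r]}-\bdel^{[\le r]}=\widehat{\bM}^{-1}N^{-1}\sum_i\bpsi_{\le r,i}$ with $\bpsi_{\le r}=\bR_{\le r}'\bA_{\le r}(\bY^{[\le r]}-\bR_{\le r}\bdel^{[\le r]})$. Conditional mean zero of the residual gives $\E\bpsi_{\le r}=\bzero$. The stacked \eqref{eq:int1} yields $\widehat{\bM}\pto\bM_{\le r,\bA}$ by the LLN, so the sample Gram matrix is invertible with probability approaching one. The stacked \eqref{eq:int2} and the Lindeberg--L\'evy CLT, combined with Slutsky, give \eqref{eq:stackclt}. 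Iterated expectations give \eqref{eq:Vstack}, and $\bY^{[\le r]}-\E[\cdot\mid\bW]=\bR_{\le r}(\bD^{[\le r]}-\bdel^{[\le r]})$ gives $\bGam_{\le r}=\bR_{\le r}\bSig_{\le r}\bR_{\le r}'$ whenever $\bSig_{\le r}$ is finite. The sandwich consistency statement is the proof of Theorem~\ref{thm:robust} verbatim with stacked objects. That proof expands $\widehat\bpsi_i=\bpsi_i-\bR_i'\bA_i\bR_i(\widehat\bdel-\bdel)$ and bounds the cross and quadratic terms using the stacked \eqref{eq:int3} and Cauchy--Schwarz.

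Third, the efficiency part, which I expect to be the only step needing care, since $\bGam_{\le r}^+$ may be singular and the blocks are correlated. Positive definiteness of $\bSig_{\le r}(w)$ gives $\operatorname{col}(\bGam_{\le r}(w))=\operatorname{col}(\bR_{\le r}(w))$, because $\bR_{\le r}\bSig\bR_{\le r}'x=0$ implies $\bR_{\le r}'x=0$. Hence $\bGam_{\le r}^+$ is positive definite on that column space, and $\bc'\bB_{\le r}\bc=0$ forces $\bR_{\le r}(\bW)\bc=\bzero$ almost surely. The first step then shows $\bc=\bzero$. The bound $\bB_{\le r}^{-1}$ and the oracle weight $\bGam_{\le r}^+$ follow by repeating the argument of Theorem~\ref{thm:efficient}. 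That argument is a matrix Cauchy--Schwarz inequality applied to $\E[\bR'\bA\bGam\bA\bR]$ against $\E[\bR'\bGam^+\bR]$, using $\bGam\bGam^+\bR=\bR$, followed by Chamberlain's (1987) formula for the conditional moment model with residual covariance $\bGam_{\le r}$. None of these steps uses the block structure, so the order-$r$ proof transfers directly.
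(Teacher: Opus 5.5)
Your proposal is correct and follows the paper's proof, which likewise transfers the single-order arguments to the stacked system using the block diagonality of $\bR_{\le r}$, Proposition~\ref{prop:toporder} for nonsingularity of $\bM_{\le r,\bA}$ and $\bB_{\le r}$, and $\operatorname{col}(\bGam_{\le r})=\operatorname{col}(\bR_{\le r})$ for the oracle weight. The paper makes explicit two transfers that you pass over, and both should be added. First, \eqref{eq:Vstack} presupposes that $\bGam_{\le r}(\bW)$ is finite almost surely; as in Theorem~\ref{thm:asynorm}, this follows from injectivity of $e\mapsto\bR_{\le r}'\bA_{\le r}e$ on $\operatorname{col}(\bR_{\le r})$. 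Second, at rank-deficient histories $\bGam_{\le r}(w)$ is singular, so Chamberlain's formula is applied only after the reduction $\bR_{\le r}(w)=\diag(\bR_1(\mathbf F),\ldots,\bR_r(\mathbf F))\,\bR_{\le r}(w_\circ)$; this left factor has full column rank because each block does, so the block structure is in fact used, contrary to your closing remark.
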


\begin{remark}[Joint versus separate estimation]\label{rem:stack}
Estimating each order separately remains consistent under the corresponding
single-order conditions.
Joint estimation permits the weight matrix $\bA_{\le r}$ to use covariance
between outcome-product vectors of different orders; a positive-definite
stacked weight need not be block diagonal.
In particular, the oracle weight uses the full conditional covariance matrix
$\bGam_{\le r}(\bW)$, including its off-diagonal blocks across moment
orders.
\end{remark}

\section{Mean and central co-moments}\label{sec:central}

The identification and estimation results above are stated in terms of raw
moments because these enter linearly in the higher-order moment
representation.
For interpretation, it is often more useful to report the mean, covariance,
and higher central co-moments.
Let
\begin{equation}\label{eq:mu}
\bmu=\bdel^{[1]}=\E[\bD],
\qquad
\bDt=\bD-\bmu,
\end{equation}
and, for $s\ge2$,
\begin{equation}\label{eq:gamma}
\bgam^{[s]}
=
\E[(\bD-\bmu)^{[s]}].
\end{equation}
Define
\begin{equation}\label{eq:theta}
\bth^{[\le r]}
=
\big(
\bmu',
(\bgam^{[2]})',
\ldots,
(\bgam^{[r]})'
\big)'.
\end{equation}
The dimension of $\bth^{[\le r]}$ is $Q_{\le r}$.

For a multi-index $\ba\in\mathbb{N}_0^q$ with $|\ba|\ge2$, write
$\gamma_{\ba}=\E[(\bD-\bmu)^{\ba}]$ and
$\delta_{\bb}=\E[\bD^{\bb}]$, with $\delta_{\bzero}=1$.
Let
\[
\binom{\ba}{\bb}=\prod_{j=1}^q\binom{a_j}{b_j},
\qquad
c(\ba,\bb)=(-1)^{|\ba-\bb|}\binom{\ba}{\bb}.
\]

\begin{lemma}[Raw--central conversion]\label{lem:conversion}
For $|\ba|\ge2$,
\begin{equation}\label{eq:rawcentral}
\gamma_{\ba}
=
\sum_{\bb:\,\bb\le\ba}
c(\ba,\bb)\delta_{\bb}\bmu^{\ba-\bb}.
\end{equation}
Conversely,
\begin{equation}\label{eq:centralraw}
\delta_{\ba}
=
\sum_{\bb:\,\bb\le\ba}
\binom{\ba}{\bb}\gamma_{\bb}\bmu^{\ba-\bb},
\end{equation}
where $\gamma_{\bzero}=1$ and $\gamma_{e_j}=0$ for the coordinate vectors $e_j$.
\end{lemma}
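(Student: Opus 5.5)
The plan is to prove both identities with the multivariate binomial theorem, applied coordinatewise, and then take expectations term by term.

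For \eqref{eq:rawcentral}, write $(\bD-\bmu)^{\ba}=\prod_{j=1}^q(D_j-\mu_j)^{a_j}$. Expanding each factor gives
\[
(D_j-\mu_j)^{a_j}=\sum_{b_j=0}^{a_j}\binom{a_j}{b_j}D_j^{b_j}(-\mu_j)^{a_j-b_j}.
\]
Multiplying over $j$ produces a sum over multi-indices $\bb\le\ba$ of $\binom{\ba}{\bb}(-1)^{|\ba-\bb|}\bD^{\bb}\bmu^{\ba-\bb}$. Taking expectations, with $\bmu$ nonrandom, turns each $\bD^{\bb}$ into $\delta_{\bb}$ and yields \eqref{eq:rawcentral}.

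Integrability needs a short check. Every $\bD^{\bb}$ with $|\bb|\le|\ba|$ must be integrable whenever $\E|\bD^{\ba}|$ terms up to order $|\ba|$ are finite. This holds under the maintained moment assumptions through order $|\ba|$, for example when $\E\|\bD\|^{|\ba|}<\infty$, since $|\bD^{\bb}|\le 1+\|\bD\|^{|\ba|}$. The case $\bb=\bzero$ uses the convention $\delta_{\bzero}=1$.

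For \eqref{eq:centralraw}, write $\bD=\bDt+\bmu$ and expand $\prod_j(\widetilde D_j+\mu_j)^{a_j}$ the same way. This gives $\sum_{\bb\le\ba}\binom{\ba}{\bb}\bDt^{\bb}\bmu^{\ba-\bb}$, now with no signs. Taking expectations gives $\E[\bDt^{\bb}]=\gamma_{\bb}$ for $|\bb|\ge2$. The two low-order terms need the stated conventions. For $\bb=\bzero$, $\E[\bDt^{\bzero}]=1=\gamma_{\bzero}$. For $\bb=e_j$, $\E[\widetilde D_j]=0=\gamma_{e_j}$.

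The only point needing care is this bookkeeping of the degenerate indices $\bb=\bzero$ and $\bb=e_j$, together with confirming the integrability that justifies linearity of expectation. The combinatorics is just the product of univariate binomial expansions.
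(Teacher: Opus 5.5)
Your proposal is correct and follows the paper's proof essentially verbatim: you expand $(\bD-\bmu)^{\ba}$ and $(\bDt+\bmu)^{\ba}$ by the multivariate binomial theorem, take expectations, and handle the degenerate indices with $\delta_{\bzero}=\gamma_{\bzero}=1$ and $\E[\widetilde D_j]=0$. Your added integrability check via $|\bD^{\bb}|\le 1+\|\bD\|^{|\ba|}$ is a valid detail that the paper leaves implicit.
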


To construct the Jacobian of the transformation from raw to central moments,
it is convenient to collect the terms in \eqref{eq:rawcentral} involving
raw moments of orders zero and one.

\begin{lemma}[Collected form]\label{lem:collected}
For $|\ba|\ge2$,
\begin{equation}\label{eq:collected}
\gamma_{\ba}
=
(-1)^{|\ba|}(1-|\ba|)\bmu^{\ba}
+
\sum_{\substack{\bb:\,\bb\le\ba\\|\bb|>1}}
c(\ba,\bb)\delta_{\bb}\bmu^{\ba-\bb}.
\end{equation}
\end{lemma}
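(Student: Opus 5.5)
The plan is to start from the expansion \eqref{eq:rawcentral} of Lemma~\ref{lem:conversion} and split the sum over $\bb\le\ba$ by the total degree $|\bb|$. The terms with $|\bb|>1$ are exactly the sum on the right of \eqref{eq:collected}, so they are kept unchanged. What remains is to show that the terms with $|\bb|=0$ and $|\bb|=1$ together equal $(-1)^{|\ba|}(1-|\ba|)\bmu^{\ba}$.

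For $|\bb|=0$ there is a single term, $\bb=\bzero$. Since $\delta_{\bzero}=1$ and $\binom{\ba}{\bzero}=1$, it equals $(-1)^{|\ba|}\bmu^{\ba}$.

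For $|\bb|=1$ the terms are $\bb=e_j$ with $a_j\ge1$. Here $\delta_{e_j}=\mu_j$, $\binom{\ba}{e_j}=a_j$, and $|\ba-e_j|=|\ba|-1$. Each term is therefore
\[
(-1)^{|\ba|-1}a_j\mu_j\bmu^{\ba-e_j}=(-1)^{|\ba|-1}a_j\bmu^{\ba}.
\]
This uses $\mu_j\bmu^{\ba-e_j}=\bmu^{\ba}$. Indices $j$ with $a_j=0$ contribute nothing, because $\binom{0}{1}=0$ (equivalently, $e_j\not\le\ba$), so no case split on $\mu_j=0$ is needed. Summing over $j$ gives $(-1)^{|\ba|-1}|\ba|\bmu^{\ba}$.

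Adding the two pieces gives
\[
(-1)^{|\ba|}\bmu^{\ba}-(-1)^{|\ba|}|\ba|\bmu^{\ba}=(-1)^{|\ba|}(1-|\ba|)\bmu^{\ba},
\]
which is the claimed term. The hypothesis $|\ba|\ge2$ is needed only so that the orders $0$, $1$, and $>1$ are distinct and the decomposition matches the statement. There is no real obstacle. The only care point is the bookkeeping for coordinates with $a_j=0$ and the identity $\mu_j\bmu^{\ba-e_j}=\bmu^{\ba}$, which holds coordinatewise.
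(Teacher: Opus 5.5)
Your proposal is correct and follows the paper's proof essentially verbatim: split the sum in \eqref{eq:rawcentral} by $|\bb|=0$, $|\bb|=1$, and $|\bb|>1$, then use $\delta_{\bzero}=1$ and $\delta_{e_j}=\mu_j$ to combine the first two groups into $(-1)^{|\ba|}(1-|\ba|)\bmu^{\ba}$. Your added bookkeeping for coordinates with $a_j=0$ is a harmless refinement of the same argument.
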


\begin{lemma}[Jacobian entries]\label{lem:derivs}
For $|\ba|\ge2$ and $k=1,\ldots,q$,
\begin{align}
\frac{\partial\gamma_{\ba}}{\partial\mu_k}
&=
1\{\ba\ge e_k\}a_k(-1)^{|\ba|}(1-|\ba|)\bmu^{\ba-e_k}
\nonumber\\
&\quad+
\sum_{\substack{\bb:\,\bb\le\ba-e_k\\|\bb|>1}}
(a_k-b_k)c(\ba,\bb)\delta_{\bb}\bmu^{\ba-\bb-e_k},
\label{eq:dgdm}\\
\frac{\partial\gamma_{\ba}}{\partial\delta_{\bb}}
&=
1\{\ba\ge\bb\}c(\ba,\bb)\bmu^{\ba-\bb},
\qquad |\bb|>1.
\label{eq:dgdd}
\end{align}
\end{lemma}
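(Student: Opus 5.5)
The plan is to treat \eqref{eq:dgdm} and \eqref{eq:dgdd} as partial derivatives of the collected form in Lemma~\ref{lem:collected}. In that form, $\gamma_{\ba}$ is a polynomial in the coordinates $(\bmu,\{\delta_{\bb}\}_{|\bb|>1})$. These coordinates are the entries of the raw-moment vector $\bdel^{[\le r]}$ with $\bdel^{[1]}=\bmu$, and we treat them as independent variables. The first step is to justify this choice. In \eqref{eq:rawcentral}, the terms with $|\bb|\le1$ involve $\delta_{\bzero}=1$ and $\delta_{e_j}=\mu_j$, so they are not free parameters. Lemma~\ref{lem:collected} has already absorbed them into the single monomial $(-1)^{|\ba|}(1-|\ba|)\bmu^{\ba}$. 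Differentiating \eqref{eq:collected} rather than \eqref{eq:rawcentral} therefore avoids double counting the dependence on $\mu_k$ through $\delta_{e_k}$.

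For \eqref{eq:dgdd}, fix $\bb$ with $|\bb|>1$. The only term of \eqref{eq:collected} containing $\delta_{\bb}$ is the summand indexed by $\bb$ itself, and it is present only when $\bb\le\ba$. That summand is linear in $\delta_{\bb}$ with coefficient $c(\ba,\bb)\bmu^{\ba-\bb}$, which gives the formula with the indicator $1\{\ba\ge\bb\}$.

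For \eqref{eq:dgdm}, I will use the elementary identity $\partial\bmu^{\bm m}/\partial\mu_k=m_k\bmu^{\bm m-e_k}$, valid for every $\bm m\in\mathbb N_0^q$. When $m_k=0$ the right-hand side is zero, so we need not make sense of $\bmu^{\bm m-e_k}$ in that case. Applying the identity to the first term of \eqref{eq:collected} gives $a_k(-1)^{|\ba|}(1-|\ba|)\bmu^{\ba-e_k}$, and we write the factor $1\{\ba\ge e_k\}$ so that the expression is well defined. Applying it to the summand indexed by $\bb$ gives $(a_k-b_k)c(\ba,\bb)\delta_{\bb}\bmu^{\ba-\bb-e_k}$. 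This summand vanishes when $b_k=a_k$, and the remaining summands are exactly those with $\bb\le\ba-e_k$, which explains the restricted summation range.

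There is no real obstacle here, since the argument is term-by-term differentiation of a polynomial. The only delicate point is bookkeeping: stating clearly that $\delta_{\bb}$ with $|\bb|\le1$ are not separate arguments, so that $\partial/\partial\mu_k$ is taken with all $\delta_{\bb}$, $|\bb|>1$, held fixed; and noting that \eqref{eq:dgdd} is asserted only for $|\bb|>1$.
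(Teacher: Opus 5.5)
Your proposal is correct and matches the paper's proof: both differentiate the collected form \eqref{eq:collected} term by term, treating $\bmu$ and the $\delta_{\bb}$ with $|\bb|>1$ as free coordinates. Both use $\partial\bmu^{\mathbf{m}}/\partial\mu_k=m_k\bmu^{\mathbf{m}-e_k}$ for \eqref{eq:dgdm} and read off the coefficient of $\delta_{\bb}$ for \eqref{eq:dgdd}, and your check that the surviving summands are exactly those with $\bb\le\ba-e_k$ spells out the paper's remark that the derivative vanishes unless $b_k\le a_k-1$.
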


Let
\begin{equation}\label{eq:G}
\bG(\bdel^{[\le r]})
=
\frac{\partial\bth^{[\le r]}}
{\partial(\bdel^{[\le r]})'}.
\end{equation}

\begin{lemma}[Triangular Jacobian and identification]\label{lem:jacobian}
Order the raw and central moment vectors by total degree.  Then $\bG$ is block lower triangular with diagonal blocks
\begin{equation}\label{eq:Gdiag}
\frac{\partial\bmu}{\partial(\bdel^{[1]})'}=\bI_q,
\qquad
\frac{\partial\bgam^{[s]}}{\partial(\bdel^{[s]})'}=\bI_{Q_s},
\qquad s=2,\ldots,r.
\end{equation}
Hence $\det(\bG)=1$ and $\bG$ is nonsingular everywhere.  The map from raw moments $\bdel^{[\le r]}$ to $(\bmu,\bgam^{[2]},\ldots,\bgam^{[r]})$ is therefore a polynomial bijection with polynomial inverse.
\end{lemma}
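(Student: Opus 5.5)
The plan is to read off the block structure of $\bG$ directly from Lemma~\ref{lem:derivs}, ordering both $\bdel^{[\le r]}$ and $\bth^{[\le r]}$ by total degree $1,2,\ldots,r$.

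First, the top-left block. We have $\bmu=\bdel^{[1]}$ identically, so $\partial\bmu/\partial(\bdel^{[1]})'=\bI_q$. Also $\bmu$ depends on no raw moment of degree $s\ge2$, so the first block row vanishes to the right of the diagonal.

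Next, fix $\ba$ with $|\ba|=s\ge2$ and apply \eqref{eq:dgdd}. The derivative $\partial\gamma_{\ba}/\partial\delta_{\bb}$ is nonzero only if $\bb\le\ba$. That forces $|\bb|\le s$, so every block with $|\bb|>s$ is zero, which gives the block lower-triangular form. If in addition $|\bb|=s$, then $\bb\le\ba$ forces $\bb=\ba$, and the entry equals $c(\ba,\ba)\bmu^{\bzero}=1$. Hence the diagonal block $\partial\bgam^{[s]}/\partial(\bdel^{[s]})'$ is $\bI_{Q_s}$ under a common ordering of multi-indices within each degree. The entries \eqref{eq:dgdm} with respect to $\bmu$ sit in the first block column, which lies below the diagonal for $s\ge2$, so their values do not matter. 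Since $\bG$ is block lower triangular with identity diagonal blocks, $\det\bG=1$.

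For the bijection claim, I will not rely on a Jacobian or inverse-function argument, because that would give only local invertibility. Instead I will use the explicit formulas of Lemma~\ref{lem:conversion}:
\begin{itemize}
\item Equation \eqref{eq:rawcentral} writes $\bth^{[\le r]}$ as a polynomial in $\bdel^{[\le r]}$.
\item Equation \eqref{eq:centralraw}, with $\gamma_{\bzero}=1$ and $\gamma_{e_j}=0$, writes each $\delta_{\ba}$ as a polynomial in $\bmu$ and in $\gamma_{\bb}$ with $|\bb|\le|\ba|$.
\end{itemize}
Both maps are polynomial and defined on all of $\mathbb R^{Q_{\le r}}$. Each is the two-sided inverse of the other, because both identities hold for every distribution with the relevant moments.

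To make that last step a purely algebraic identity, the cleanest route is to note that the binomial expansions of $(\bD-\bmu)^{\ba}$ and $\bD^{\ba}=((\bD-\bmu)+\bmu)^{\ba}$ are formal polynomial identities. Composing the two maps therefore gives the identity as polynomial maps in the indeterminates $(\bmu,\delta_{\bb})$. Alternatively, one can argue by induction on degree, using the unit triangular structure: the degree-$s$ component of each map equals its argument plus a polynomial in lower-degree components, so it can be solved recursively.

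The main obstacle is this final point, namely justifying \emph{global} polynomial invertibility rather than just a nonvanishing Jacobian. The triangular recursion settles it: given $\bth^{[\le r]}$, we set $\bdel^{[1]}=\bmu$, and then for each $s$ we solve $\bdel^{[s]}=\bgam^{[s]}-(\text{polynomial in }\bmu,\bdel^{[2]},\ldots,\bdel^{[s-1]})$. Each step is uniquely solvable and polynomial.
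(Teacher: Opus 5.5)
Your proposal is correct and follows essentially the same route as the paper. Block lower triangularity comes from $\gamma_{\ba}$ involving only $\delta_{\bb}$ with $\bb\le\ba$, the diagonal blocks are identities because $\bb\le\ba$ and $|\bb|=|\ba|$ force $\bb=\ba$, and the inverse is given explicitly by \eqref{eq:centralraw}; your extra argument that this inverse is global (the composition being a formal polynomial identity, or equivalently degree-by-degree back-substitution) just spells out what the paper asserts in one line.
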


Thus identification of the raw moments through order $r$ is equivalent to
identification of the mean and central co-moments through the same order.
Inference follows by applying the delta method to the stacked raw-moment
estimator.

\begin{theorem}[Inference for mean and central co-moments]\label{thm:delta}
Let $\widehat\bth^{[\le r]}$ be obtained by applying \eqref{eq:rawcentral} to $\widehat\bdel^{[\le r]}$.  Under the assumptions of Theorem~\ref{thm:stacked},
\begin{equation}\label{eq:deltaclt}
\sqrt N\big(\widehat\bth^{[\le r]}-\bth^{[\le r]}\big)
\dto
N\left(
\bzero,
\bG
\bM_{\le r,\bA}^{-1}
\bV_{\le r,\bA}
\bM_{\le r,\bA}^{-1}
\bG'
\right),
\end{equation}
where $\bG=\bG(\bdel^{[\le r]})$.  Moreover,
\begin{equation}\label{eq:Ghat}
\widehat\bG
=
\bG(\widehat\bdel^{[\le r]})
\pto
\bG.
\end{equation}
If, in addition, the stacked analogue of \eqref{eq:int3} holds, replacing $\bG$, $\bM_{\le r,\bA}$, and $\bV_{\le r,\bA}$ by their sample analogues yields a consistent covariance estimator.
\end{theorem}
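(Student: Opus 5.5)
The proof is a delta-method argument built on the stacked result. I would apply Theorem~\ref{thm:stacked} and then transport its asymptotic normality through the smooth map $\bdel^{[\le r]}\mapsto\bth^{[\le r]}$.

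First, let $g$ denote the map defined componentwise by \eqref{eq:rawcentral}, with $\bmu=\bdel^{[1]}$ passed through unchanged. Each component of $g$ is a polynomial in the entries of $\bdel^{[\le r]}$, since $\bmu$ is itself a block of $\bdel^{[\le r]}$ and $\delta_{\bzero}=1$. Hence $g$ is $C^\infty$ on all of $\mathbb R^{Q_{\le r}}$. Its Jacobian is $\bG$ from \eqref{eq:G}, with entries given explicitly by Lemma~\ref{lem:derivs}. For the first-order block the Jacobian is the identity, as recorded in Lemma~\ref{lem:jacobian}.

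Second, I would check that $\bth^{[\le r]}=g(\bdel^{[\le r]})$ and $\widehat\bth^{[\le r]}=g(\widehat\bdel^{[\le r]})$. The first identity is Lemma~\ref{lem:conversion}; the second holds by construction. Theorem~\ref{thm:stacked} gives
\[
\sqrt N\big(\widehat\bdel^{[\le r]}-\bdel^{[\le r]}\big)\dto N(\bzero,\bM_{\le r,\bA}^{-1}\bV_{\le r,\bA}\bM_{\le r,\bA}^{-1}).
\]
The estimator is defined only on the event that the sample Gram matrix is nonsingular. This event has probability approaching one, so an arbitrary definition off it does not affect the limit. The standard delta method, whose only requirement is differentiability of $g$ at $\bdel^{[\le r]}$, then yields \eqref{eq:deltaclt}.

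Third, Theorem~\ref{thm:stacked} implies $\widehat\bdel^{[\le r]}\pto\bdel^{[\le r]}$. The entries of $\bG(\cdot)$ are polynomials and therefore continuous, so the continuous mapping theorem gives \eqref{eq:Ghat}. For the covariance estimator, the stacked analogue of \eqref{eq:int3} lets Theorem~\ref{thm:stacked} (via Theorem~\ref{thm:robust}) deliver consistency of the stacked sandwich $\widehat\bM^{-1}\widehat\bV\widehat\bM^{-1}$. Combining this with $\widehat\bG\pto\bG$ through Slutsky's theorem and continuity of matrix multiplication gives consistency of $\widehat\bG\widehat\bM^{-1}\widehat\bV\widehat\bM^{-1}\widehat\bG'$.

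There is no substantive obstacle, because all the analytic work is carried by Theorem~\ref{thm:stacked} and the polynomial structure of $g$. The only point needing care is the bookkeeping: $g$ must be read as a function of the full stacked vector, including the lower-order raw moments that appear through $\bmu^{\ba-\bb}$. The nondegeneracy of $\bG$ (Lemma~\ref{lem:jacobian}) is not needed for the limit itself. It matters only for interpretation, since it makes the limiting covariance nonsingular whenever the raw-moment covariance is.
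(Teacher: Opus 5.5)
Your proposal is correct and follows essentially the same route as the paper: the multivariate delta method applied to Theorem~\ref{thm:stacked} through the polynomial (hence smooth) raw-to-central map, continuous mapping for $\widehat\bG$, and combination with the consistent stacked sandwich estimator for the plug-in covariance. Your added remarks, that the estimator is defined only on an event of probability approaching one and that nonsingularity of $\bG$ is not needed for the limit, are correct refinements of the paper's brief argument.
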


\begin{remark}[Two-step centering]\label{rem:twostep}
The identity
\[
\bY-\bW\bmu
=
\bW(\bD-\bmu)
\]
suggests an alternative two-step estimator.
One could first estimate $\bmu$, form the centered outcomes
$\bY_i-\bW_i\widehat\bmu$, and then apply the moment representation to
estimate the central moments.
The stacked raw-moment estimator avoids a separate centering step, and the
Jacobian in Theorem~\ref{thm:delta} accounts directly for estimation
uncertainty in $\widehat\bmu$ and for covariance across moment orders.
\end{remark}

\section{Failure of the support condition: nonidentification and sharp identified sets}\label{sec:partial}

Suppose the support condition of Theorem~\ref{thm:support} fails.
Then the conditional-moment map is not injective and
$\cN_{\le r}$ contains nonzero directions.
Under the maintained conditional-moment restrictions, the model therefore
admits observationally equivalent structures with different coefficient
moments.
The results in this section serve two purposes.
They establish that failure of the rank condition implies
nonidentification in the maintained model, characterize the sharp
identified set for the coefficient moments, and determine which linear
functionals remain identified when the condition fails.

Maintain Assumptions~\ref{ass:model} and \ref{ass:momentstack}.
The conditional moments used for identification are
\begin{equation}\label{eq:info}
m_s(w)=\E[\bY^{[s]}\mid\bW=w]=\bR_s(w)\bdel_0^{[s]},
\qquad s=1,\ldots,r,\quad F_W\text{-a.e. }w,
\end{equation}
where $\bdel_0^{[\le r]}$ is the true moment vector.
A candidate $\bdel^{[\le r]}\in\mathbb R^{Q_{\le r}}$ is
\emph{observationally equivalent to $\bdel_0^{[\le r]}$ relative to
\eqref{eq:info}} if there exists a conditional distribution of $\bD$ given
$\bW$ that satisfies Assumption~\ref{ass:momentstack}, has moment vector
$\bdel^{[\le r]}$, and reproduces $m_s(w)$ for every $s=1,\ldots,r$ and
$F_W$-almost every $w$.\footnote{For $r\ge3$, the full joint distribution of
$(\bY,\bW)$ may impose restrictions beyond the conditional moments
\eqref{eq:info}; Corollary~\ref{cor:sharpdata} shows that it imposes no
additional restrictions when $r=2$.  Exploiting such information at higher
orders would require distributional inversion, which we do not pursue.
Conditional moments above order $r$ may also restrict moments through order
$r$ through truncated-moment extension conditions.  Unless otherwise
stated, the identified-set results in this section are sharp with respect
to the information in \eqref{eq:info}, which is also the information used
by the estimators in Sections~\ref{sec:estimation}--\ref{sec:stacked}.}

By block diagonality of $\bR_{\le r}$, the unidentified moment space is
\begin{equation}\label{eq:Nstack}
\cN_{\le r}
=
\cN_1\times\cdots\times\cN_r,
\qquad
\cN_s=\{\bc:\bR_s(w)\bc=\bzero\ \forall w\in\cW\},
\end{equation}
with each $\cN_s$ isomorphic to $\cI_s(\cX)$ by
Theorem~\ref{thm:support}.
For finite support, $\cN_s$ is the null space of the matrix obtained by
stacking $\bR_s(w)$ over the support points, as described in
Appendix~\ref{app:geometry}.
For general support, the equivalent characterization follows
from the Gram-matrix condition in Theorem~\ref{thm:support}.

Define the truncated moment set
\begin{equation}\label{eq:momentcone}
\cM_{\le r}
=
\Big\{
\mathbf{m}\in\mathbb R^{Q_{\le r}}:
\exists\ \text{a distribution } F \text{ on }\mathbb R^q
\text{ with }
{\textstyle\int} \bx^{[s]}\,dF=\mathbf{m}^{[s]},\ s=1,\ldots,r
\Big\}.
\end{equation}
The set $\cM_{\le r}$ is convex and imposes the requirement that a candidate
vector be the collection of moments through order $r$ of some probability
distribution on $\mathbb R^q$.

\begin{theorem}[Model-level nonidentification and sharp identified set]\label{thm:partialset}
Under Assumptions~\ref{ass:model} and \ref{ass:momentstack}, the identified set for $\bdel^{[\le r]}$ relative to the information \eqref{eq:info} is 
\begin{equation}\label{eq:idset}
\Theta_I
=
\big(\bdel_0^{[\le r]}+\cN_{\le r}\big)\cap\cM_{\le r},
\end{equation}
a convex set.
Every element of $\Theta_I$ is attained by a structure in which $\bD$ is
independent of $\bW$.
Consequently, strengthening Assumption~\ref{ass:momentstack} to full
independence does not shrink the identified set relative to
\eqref{eq:info}.
Such a structure reproduces the conditional moments \eqref{eq:info} but
need not reproduce the conditional distribution of $\bY$ given $\bW$.
At $r=2$, Corollary~\ref{cor:sharpdata} constructs a conditionally
heterogeneous structure that also reproduces that distribution.

If $\ba\perp\cN_{\le r}$, the linear functional
$\ba'\bdel^{[\le r]}$ has the same value throughout every affine class
$\bdel_0^{[\le r]}+\cN_{\le r}$ and is therefore identified under the
maintained conditional-moment restrictions.
Conversely, if $\ba\not\perp\cN_{\le r}$, the maintained model contains
structures generating the same conditional moments and different
values of the functional, so the functional is not identified as a
parameter of the maintained model.
The dimension of the space of linear functionals identified as parameters
of the maintained conditional-moment model is given by
\eqref{eq:hilbertstack} in Appendix~\ref{app:geometry}.
\end{theorem}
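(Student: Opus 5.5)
The plan is to establish the two inclusions defining $\Theta_I$ and then read off the statements about linear functionals. Convexity is immediate once \eqref{eq:idset} is proved: $\bdel_0^{[\le r]}+\cN_{\le r}$ is an affine subspace, and $\cM_{\le r}$ is convex because a mixture $\alpha F_1+(1-\alpha)F_2$ of two distributions has moment vector $\alpha\mathbf m_1+(1-\alpha)\mathbf m_2$.

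\emph{Step 1 (necessity, $\Theta_I\subseteq(\bdel_0^{[\le r]}+\cN_{\le r})\cap\cM_{\le r}$).} Take $\bdel^{[\le r]}$ observationally equivalent to $\bdel_0^{[\le r]}$. By definition, some conditional law of $\bD$ given $\bW$ satisfies Assumption~\ref{ass:momentstack} with moments $\bdel^{[\le r]}$. Integrating this law over $F_W$ gives an unconditional distribution on $\mathbb R^q$ whose moments through order $r$ are $\bdel^{[\le r]}$, so $\bdel^{[\le r]}\in\cM_{\le r}$. By Lemma~\ref{lem:rep} and conditional moment homogeneity, reproducing \eqref{eq:info} means $\bR_s(w)(\bdel^{[s]}-\bdel_0^{[s]})=\bzero$ for $F_W$-a.e.\ $w$ and each $s$. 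To pass from ``a.e.\ $w$'' to ``every $w\in\cW$'', I will use that $\bR_s(w)$ is polynomial, hence continuous, in $w$. The set of $w$ where $\bR_s(w)\bc=\bzero$ is therefore closed and has full $F_W$-measure, so it contains the support $\cW$. Hence $\bdel^{[\le r]}-\bdel_0^{[\le r]}\in\cN_{\le r}$, using the block structure \eqref{eq:Nstack}.

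\emph{Step 2 (sufficiency, with $\bD\ind\bW$).} Take $\bdel^{[\le r]}=\bdel_0^{[\le r]}+\bc\in\cM_{\le r}$ with $\bc\in\cN_{\le r}$, and let $F$ be a distribution whose moments through order $r$ equal $\bdel^{[\le r]}$. Define the structure with $\bD\sim F$ independent of $\bW$, with $\bW\sim F_W$ unchanged. Assumption~\ref{ass:momentstack} then holds trivially, and Assumption~\ref{ass:model} holds by setting $\bY=\bW\bD$. Integrability of $\bD^{[s]}$ is automatic, since membership in $\cM_{\le r}$ presupposes finite moments through order $r$. The implied conditional moments are $\bR_s(w)\bdel^{[s]}=\bR_s(w)\bdel_0^{[s]}+\bR_s(w)\bc^{[s]}=m_s(w)$ for every $w\in\cW$, so $\bdel^{[\le r]}$ is observationally equivalent. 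This step also yields the claims that every element of $\Theta_I$ is attained under independence and that imposing independence does not shrink the set. The step I expect to need the most care is making sure the definition of $\cM_{\le r}$ requires only finite moments through order $r$, and that no hidden requirement beyond \eqref{eq:info} is imposed, such as reproducing the full law of $\bY\mid\bW$. The statement explicitly disclaims that requirement, so nothing more is needed here.

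\emph{Step 3 (linear functionals).} If $\ba\perp\cN_{\le r}$, then $\ba'\bdel^{[\le r]}$ is constant on $\bdel_0^{[\le r]}+\cN_{\le r}\supseteq\Theta_I$, so the functional is identified. For the converse I need a nondegenerate feasible perturbation, and this is the main obstacle: $\Theta_I$ could in principle reduce to $\{\bdel_0^{[\le r]}\}$ if $\bdel_0^{[\le r]}$ lay on the boundary of $\cM_{\le r}$. The statement says ``the maintained model contains structures'', so I may pick the true structure. The plan is to take $\bD_0$ with a distribution whose truncated moment vector lies in the interior of $\cM_{\le r}$. For example, a law with a density has moments in the interior, since $\cM_{\le r}$ has nonempty interior and interior points are exactly the moment vectors of laws not supported on an algebraic set of degree $\le r$. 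Then $\bdel_0^{[\le r]}\pm\epsilon\bc\in\cM_{\le r}$ for small $\epsilon$, where $\bc\in\cN_{\le r}$ has $\ba'\bc\neq0$. By Step 2 these two points yield independent structures with the same conditional moments and different values $\ba'\bdel_0^{[\le r]}\pm\epsilon\ba'\bc$. Alternatively, to avoid citing the interior characterization, I will perturb $F_0$ directly as the mixture $(1-\epsilon)F_0+\epsilon G_\pm$. Here $G_\pm$ is chosen, via a finite-atom construction, so that the mixture's moment vector shifts by $\pm\epsilon'\bc$; this works because the moment vectors of finitely supported laws span the affine hull of $\cM_{\le r}$. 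The dimension count is deferred to Appendix~\ref{app:geometry}.
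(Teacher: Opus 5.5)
Your proposal is correct and follows the paper's proof essentially step for step. It uses the same continuity argument to pass from $F_W$-almost every $w$ to all of $\cW$, and the same construction with $\bD$ independent of $\bW$ for the reverse inclusion, which also gives the independence-attainability claim. Convexity is obtained via mixtures, as in the paper. For $\ba\not\perp\cN_{\le r}$, you use the same device of choosing the true structure so that $\bdel_0^{[\le r]}$ is interior to $\cM_{\le r}$ and perturbing by $\pm\varepsilon\mathbf n$.

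The only difference is how interiority is certified. The paper takes a strictly positive convex combination of $Q_{\le r}+1$ affinely independent point-mass moment vectors, which needs only the affine-spanning fact. Your density route (which also needs finite moments through order $r$) instead relies on the supporting-hyperplane fact that a boundary moment vector is carried by the zero set of a nonzero nonnegative polynomial of degree at most $r$. Your mixture fallback, as worded, does not work from affine spanning alone. Realizing the shifted targets by finite-atom laws $G_\pm$ requires them to lie in the convex hull of point-mass moment vectors, and the paper's simplex construction is exactly what supplies this.
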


The two restrictions in \eqref{eq:idset} have distinct roles.
The affine space
$\bdel_0^{[\le r]}+\cN_{\le r}$ contains the moment vectors that cannot be
distinguished by the conditional moments \eqref{eq:info}.
Intersecting this space with $\cM_{\le r}$ removes vectors that are not
moments of any probability distribution.

\begin{remark}[Identification at a distribution and in the model]
\label{rem:fixedvsmodel}
Theorem~\ref{thm:partialset} involves two notions of identification.
For a fixed observed conditional-moment function, generated by
$\bdel_0^{[\le r]}$, a linear functional $\ba'\bdel^{[\le r]}$ is point
identified at that distribution if and only if it is constant on
$\Theta_I$.
The functional claims in Theorem~\ref{thm:partialset} concern the
maintained model: $\ba\perp\cN_{\le r}$ makes the functional constant on
the identified set for every $\bdel_0^{[\le r]}\in\cM_{\le r}$, whereas
$\ba\not\perp\cN_{\le r}$ produces some
$\bdel_0^{[\le r]}\in\cM_{\le r}$ at which it is not constant.
The two notions coincide away from the boundary of the truncated moment
set: if $\bdel_0^{[\le r]}$ lies in the interior of $\cM_{\le r}$, then,
for every fixed $\mathbf n\in\cN_{\le r}$,
$\bdel_0^{[\le r]}\pm\varepsilon\mathbf n\in\cM_{\le r}$ for all
sufficiently small $\varepsilon>0$, so
$\Theta_I$ is a singleton if and only if $\cN_{\le r}=\{\bzero\}$.
Collapse of $\Theta_I$ to a singleton when
$\cN_{\le r}\neq\{\bzero\}$ therefore requires
$\bdel_0^{[\le r]}$ to lie on the boundary of $\cM_{\le r}$; at $r=2$,
by Theorem~\ref{thm:spectra}, it requires a singular coefficient
covariance matrix.
The zero-residual-variance cases in Example~\ref{ex:runningA} and
Proposition~\ref{prop:binarybounds} are of this type.
Boundary position is necessary for collapse, not sufficient.
\end{remark}

At second order, moment feasibility has an exact positive-semidefinite
representation.
Write $\operatorname{mat}(\bdel^{[2]})$ for the symmetric $q\times q$
matrix with entries $\E[D_jD_k]$, and define
\begin{equation}\label{eq:momentmatrix}
M(\bdel^{[\le2]})
=
\begin{pmatrix}
1 & (\bdel^{[1]})'\\
\bdel^{[1]} & \operatorname{mat}(\bdel^{[2]})
\end{pmatrix},
\end{equation}
which is affine in $\bdel^{[\le2]}$.

\begin{theorem}[Second-order positive-semidefinite characterization]\label{thm:spectra}
For $r=2$, $\cM_{\le2}=\{\bdel^{[\le2]}:M(\bdel^{[\le2]})\succeq0\}$, so
\begin{equation}\label{eq:spectra}
\Theta_I
=
\big\{\bdel^{[\le2]}\in\bdel_0^{[\le2]}+\cN_{\le2}:M(\bdel^{[\le2]})\succeq0\big\},
\end{equation}
the intersection of an affine subspace with a linear matrix inequality, hence a spectrahedron.
The set $\Theta_I$ is closed, being the intersection of an affine subspace
with the preimage of the closed positive-semidefinite cone under the affine
map
$\bdel^{[\le2]}\mapsto M(\bdel^{[\le2]})$.
Sharp lower and upper bounds on any linear functional
$\ba'\bdel^{[\le2]}$ are therefore the infimum and supremum of a pair of
semidefinite programs.
Every feasible moment vector is attained by a possibly singular Gaussian
distribution.
Corollary~\ref{cor:compact} gives the exact condition under which
$\Theta_I$ is compact; under that condition, every SDP optimum is attained
and the corresponding endpoint is attained by such a Gaussian structure.
Equivalently, by the Schur complement, feasibility is
$\operatorname{mat}(\bdel^{[2]})
-\bdel^{[1]}(\bdel^{[1]})'\succeq0$:
the candidate second moments must imply a positive-semidefinite covariance
matrix.
\end{theorem}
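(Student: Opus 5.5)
The argument runs in two stages: first the characterization $\cM_{\le2}=\{\bdel^{[\le2]}:M(\bdel^{[\le2]})\succeq0\}$, then everything else by substituting it into \eqref{eq:idset} of Theorem~\ref{thm:partialset}.

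\emph{Necessity.} Take $\bdel^{[\le2]}\in\cM_{\le2}$, realized by a distribution $F$ with finite second moments. For any $(a,\bv)\in\mathbb R\times\mathbb R^q$,
\[
(a,\bv')\,M(\bdel^{[\le2]})\,(a,\bv')'=\int (a+\bv'\bx)^2\,dF(\bx)\ge0,
\]
so $M\succeq0$.

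\emph{Sufficiency.} Suppose $M\succeq0$. The Schur complement with respect to the leading entry $1>0$ shows that this is equivalent to
\[
\bSig=\operatorname{mat}(\bdel^{[2]})-\bdel^{[1]}(\bdel^{[1]})'\succeq0 .
\]
I would verify that equivalence directly with the congruence
\[
\begin{pmatrix}1&0\\-\bdel^{[1]}&\bI_q\end{pmatrix}
M
\begin{pmatrix}1&0\\-\bdel^{[1]}&\bI_q\end{pmatrix}'
=\diag(1,\bSig).
\]
Now let $F=N(\bdel^{[1]},\bSig)$, a possibly singular Gaussian, for instance the law of $\bdel^{[1]}+\bSig^{1/2}\bZ$ with $\bZ$ standard normal. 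Its first moment is $\bdel^{[1]}$ and its second-moment matrix is $\bSig+\bdel^{[1]}\bdel^{[1]\prime}=\operatorname{mat}(\bdel^{[2]})$. Hence $\bdel^{[\le2]}\in\cM_{\le2}$, and the construction also gives the claim that every feasible moment vector is attained by a Gaussian.

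\emph{The identified set.} Substituting the characterization into \eqref{eq:idset} yields \eqref{eq:spectra}. The map $\bdel^{[\le2]}\mapsto M(\bdel^{[\le2]})$ is affine and its image is symmetric, so restricting it to the affine space $\bdel_0^{[\le2]}+\cN_{\le2}$ gives a linear matrix inequality in the free coordinates of $\cN_{\le2}$. By definition, then, $\Theta_I$ is a spectrahedron. It is closed because the PSD cone is closed, the map is continuous, and affine subspaces are closed.

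\emph{Bounds and attainment.} Sharp bounds on $\ba'\bdel^{[\le2]}$ are $\inf$ and $\sup$ of a linear objective over $\Theta_I$, and these are semidefinite programs. When $\Theta_I$ is compact (the condition deferred to Corollary~\ref{cor:compact}), the continuous objective attains its extremes on the closed bounded set. The optimizer lies in $\cM_{\le2}$, so by the Gaussian construction it is realized by a Gaussian $\bD$ independent of $\bW$. By Theorem~\ref{thm:partialset}, such a structure reproduces \eqref{eq:info}.

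No step is a real obstacle. The point needing the most care is that the Gaussian representative is allowed to be degenerate when $\bSig$ is singular, so that boundary points of the cone are covered. A secondary check is that integrability is automatic, since Gaussians have all moments, so Assumption~\ref{ass:momentstack} holds for the constructed structure.
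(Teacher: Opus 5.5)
Your proposal is correct and follows the paper's proof. Necessity comes from $M(\bdel^{[\le2]})=\E[(1,\bD')'(1,\bD')]\succeq0$. Sufficiency comes from the Schur complement and a possibly singular Gaussian with the implied mean and covariance. Substituting into Theorem~\ref{thm:partialset} then gives the spectrahedron, closedness, the SDP bounds, and attainment under compactness. Your explicit congruence to $\diag(1,\bSig)$ just writes out the Schur-complement step that the paper invokes directly.
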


At second order, the identified set in
Theorem~\ref{thm:spectra} is also sharp relative to the full joint
distribution of $(\bY,\bW)$.
For $w\in\cW$ with $\rank(w)=k$, let $\mathbf{O}_w$ be a $q\times k$
matrix with orthonormal columns spanning $\rowsp(w)$ and
$\mathbf{N}_w$ a $q\times(q-k)$ matrix with orthonormal columns spanning
$\ker w$, so that
$\mathbf{O}_w\mathbf{O}_w'+\mathbf{N}_w\mathbf{N}_w'=\bI_q$,
$w\mathbf{N}_w=\bzero$, and
$w=w\mathbf{O}_w\mathbf{O}_w'$ with $w\mathbf{O}_w$ of full column rank
$k$.

\begin{corollary}[Sharpness relative to the data at second order]\label{cor:sharpdata}
Let $r=2$ and maintain Assumptions~\ref{ass:model} and \ref{ass:momentstack}.  For every $\bdel^{[\le2]}\in\Theta_I$ there is a conditional distribution
of $\bD$ given $\bW$ such that
(i) $\E[\bD^{[\le2]}\mid\bW=w]=\bdel^{[\le2]}$ for
$F_W$-almost every $w$, and
(ii) the joint distribution of $(\bW,\bW\bD)$ is the observed joint
distribution of $(\bW,\bY)$.
Consequently the identified set for $\bdel^{[\le2]}$ relative to the full
joint distribution of $(\bY,\bW)$ is the spectrahedron
\eqref{eq:spectra}, and no feature of the conditional distribution of
$\bY$ given $\bW$ beyond its first two moments tightens the bounds of
Theorem~\ref{thm:spectra}.
\end{corollary}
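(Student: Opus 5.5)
We construct the conditional distribution directly. Fix $\bdel^{[\le2]}\in\Theta_I$ and write $\mu=\bdel^{[1]}$ and $\Omega=\operatorname{mat}(\bdel^{[2]})-\mu\mu'$. By Theorem~\ref{thm:spectra}, $\Omega\succeq0$. Let $\mu_0,\Omega_0$ be the corresponding objects at the truth $\bdel_0^{[\le2]}$.

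Since $\bdel^{[\le2]}-\bdel_0^{[\le2]}\in\cN_1\times\cN_2$, we have $w\mu=w\mu_0$ for every $w\in\cW$. By Lemma~\ref{lem:polar}, $x'(\operatorname{mat}\bdel^{[2]}-\operatorname{mat}\bdel_0^{[2]})x=0$ for all $x\in\rowsp(w)$. Polarizing over the subspace $\rowsp(w)$ gives $\mathbf{O}_w'\operatorname{mat}(\bdel^{[2]})\mathbf{O}_w=\mathbf{O}_w'\operatorname{mat}(\bdel_0^{[2]})\mathbf{O}_w$. Combining this with $\mathbf{O}_w'\mu=\mathbf{O}_w'\mu_0$, which follows from $w\mu=w\mu_0$ because $w\mathbf{O}_w$ has full column rank, yields $\mathbf{O}_w'\Omega\mathbf{O}_w=\mathbf{O}_w'\Omega_0\mathbf{O}_w$. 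The observed conditional distribution of $\bY$ given $\bW=w$ equals that of $w\mathbf{O}_w Z_w$, where $Z_w=\mathbf{O}_w'\bD_0$ under the true structure. Hence $Z_w$ is recoverable as a measurable function of $\bY$ and $w$, namely $Z_w=(w\mathbf{O}_w)^+\bY$. Its conditional mean is $\mathbf{O}_w'\mu$ and its conditional covariance is $\mathbf{O}_w'\Omega_0\mathbf{O}_w=\mathbf{O}_w'\Omega\mathbf{O}_w$, using Assumption~\ref{ass:momentstack} at the truth.

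Given $\bW=w$, the idea is to keep the true law of $Z_w$ and append the $\ker w$ component $V_w=\mathbf{N}_w'\bD$, chosen so that the joint first two moments of $(Z_w,V_w)$ are those of $(\mathbf{O}_w,\mathbf{N}_w)'\bD$ under $(\mu,\Omega)$. Partition $\Omega$ in the basis $(\mathbf{O}_w,\mathbf{N}_w)$ into blocks $\Omega_{OO}$, $\Omega_{NO}$, $\Omega_{NN}$. Define
\[
V_w=\mathbf{N}_w'\mu+\Omega_{NO}\Omega_{OO}^+\big(Z_w-\mathbf{O}_w'\mu\big)+\big(\Omega_{NN}-\Omega_{NO}\Omega_{OO}^+\Omega_{ON}\big)^{1/2}\varepsilon,
\]
where $\varepsilon\sim N(\bzero,\bI_{q-k})$ is independent of $Z_w$. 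The Schur complement is positive semidefinite because $\Omega\succeq0$. Positive semidefiniteness also gives $\Omega_{NO}\Omega_{OO}^+\Omega_{OO}=\Omega_{NO}$, since the column space of $\Omega_{ON}$ lies in that of $\Omega_{OO}$. This identity is exactly what makes the cross-covariance equal to $\Omega_{NO}$ even when $\Omega_{OO}$ is singular, and that singular case is the step needing care. Set $\bD=\mathbf{O}_wZ_w+\mathbf{N}_wV_w$. Then $\E[\bD\mid w]=\mu$ and $\Var(\bD\mid w)=\Omega$, which gives (i), because second raw moments are $\Omega+\mu\mu'$. For (ii), note that $w\bD=w\mathbf{O}_wZ_w$ has the observed conditional law, and $\bW$ keeps its marginal, so the joint law of $(\bW,\bW\bD)$ matches the data.

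It remains to check measurability in $w$. The matrices $\mathbf{O}_w$ and $\mathbf{N}_w$ can be chosen measurably, for instance by Gram--Schmidt on the rows and on a fixed basis, working rank by rank. The maps $w\mapsto(w\mathbf{O}_w)^+$ and the PSD square root are then measurable, so the kernel defining $\bD\mid\bW$ is a valid conditional distribution.

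For the "consequently" statement, let $\Theta_I^{\mathrm{full}}$ denote the identified set relative to the full joint distribution of $(\bY,\bW)$. Any structure that reproduces the full distribution reproduces \eqref{eq:info}, so $\Theta_I^{\mathrm{full}}\subseteq\Theta_I$. The construction above gives the reverse inclusion, and Theorem~\ref{thm:spectra} identifies $\Theta_I$ as the spectrahedron \eqref{eq:spectra}.
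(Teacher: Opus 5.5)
Your proposal is correct and follows essentially the same construction as the paper. You recover the row-space component as $(w\mathbf{O}_w)^+\bY$, add a $\ker w$ component equal to the linear projection on it plus independent Gaussian noise with the Schur-complement covariance (using $\Omega_{NO}\Omega_{OO}^+\Omega_{OO}=\Omega_{NO}$), choose everything measurably on rank strata, and conclude by the two inclusions. The differences are cosmetic: you obtain $\mathbf{O}_w'\Omega\mathbf{O}_w=\mathbf{O}_w'\Omega_0\mathbf{O}_w$ by polarizing Lemma~\ref{lem:polar} and reasoning with the true $\bD_0$, whereas the paper reads $w\bSig w'=\boldsymbol{\Omega}(w)$ off the $\operatorname{vech}$ representation and works with the observed conditional moments.
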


Conditional on $\bW=w$, the construction underlying
Corollary~\ref{cor:sharpdata} writes $\bD$ as an affine function of $\bY$
plus an auxiliary Gaussian component in $\ker w$ that is independent of
$\bY$.
At second order, positive semidefiniteness of the candidate coefficient
covariance is exactly the condition needed to choose the covariance of this
auxiliary component while preserving the observed conditional distribution
of $\bY$.
For $r\ge3$, fixing the conditional distribution of $\bY$ also restricts
higher cross-moments between the row-space and null-space components of
$\bD$.
Those restrictions are not contained in $\cM_{\le r}$, which only imposes
moment feasibility on $\bD$ itself.
This is why the full-data sharpness result stops at second order.

\begin{corollary}[Compactness and attainment at second order]\label{cor:compact}
Let $r=2$.  The set $\Theta_I$ in \eqref{eq:spectra} is compact if and only if $\cN_1=\{\bzero\}$, equivalently if and only if $\cX$ spans $\mathbb R^q$, equivalently, when $\E\|\bW\|^2<\infty$, if and only if $\E[\bW'\bW]$ is nonsingular.
In that case every linear functional $\ba'\bdel^{[\le2]}$ attains its
infimum and supremum on $\Theta_I$, each at a possibly singular Gaussian
structure, and the semidefinite-program values in
Theorem~\ref{thm:spectra} are finite.
If $\cN_1\neq\{\bzero\}$, then for every nonzero $\bv$ orthogonal to
$\cX$, $\Var(\bv'\bD)$ is unbounded above on $\Theta_I$.
\end{corollary}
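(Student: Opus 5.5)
We prove Corollary~\ref{cor:compact} by reducing compactness of the spectrahedron \eqref{eq:spectra} to boundedness of the first-order slice. Closedness is already given by Theorem~\ref{thm:spectra}, so compactness is equivalent to boundedness. First, the three equivalent forms of $\cN_1=\{\bzero\}$. By Theorem~\ref{thm:support} at $r=1$, $\cN_1\cong\cI_1(\cX)$, the space of linear forms vanishing on $\cX$, and this is trivial if and only if $\cX$ spans $\mathbb R^q$. For the Gram form, we use the identity weight: $\bR_1(w)=w$, so $\bI_T$ is admissible when $\E\|\bW\|^2<\infty$, and $\bM_{1\bI}=\E[\bW'\bW]$. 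Theorem~\ref{thm:support}(ii)$\Leftrightarrow$(iii) then gives the third equivalence.

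\emph{Sufficiency.} Suppose $\cN_1=\{\bzero\}$. Then every element of $\Theta_I$ has $\bdel^{[1]}=\bmu_0$ fixed, and only $\bdel^{[2]}$ varies, within $\bdel_0^{[2]}+\cN_2$. We bound the second moments directly. Take $\bv\in\mathbb R^q$. Since $\cX$ spans $\mathbb R^q$, we can write $\bv=\sum_{j=1}^q\beta_j\bx_j$ with $\bx_j\in\cX$. For each $\bx\in\cX$, $\bx\in\rowsp(w)$ for some $w\in\cW$, and $P_{\bc}$ vanishes on $\rowsp(w)$ for every $\bc\in\cN_2$ (Lemma~\ref{lem:polar}). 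Hence $\bx'\operatorname{mat}(\bdel^{[2]})\bx=\bx'\operatorname{mat}(\bdel_0^{[2]})\bx$ is constant on $\Theta_I$.

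The PSD constraint gives $\operatorname{mat}(\bdel^{[2]})\succeq\bmu_0\bmu_0'\succeq0$. Then Cauchy--Schwarz in the semi-inner product defined by the matrix gives
\[
\bv'\operatorname{mat}(\bdel^{[2]})\bv\le q\sum_j\beta_j^2\,\bx_j'\operatorname{mat}(\bdel_0^{[2]})\bx_j .
\]
This bounds every diagonal quadratic form, hence all entries, since a PSD matrix has off-diagonal entries bounded by its diagonal. So $\Theta_I$ is bounded and therefore compact. Attainment of the infimum and supremum of each linear functional then follows from continuity on a compact set, and the extremal points are attained by the Gaussian structures of Theorem~\ref{thm:spectra}.

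\emph{Necessity and unboundedness.} Suppose $\bv\neq\bzero$ is orthogonal to $\cX$, which happens exactly when $\cN_1\neq\{\bzero\}$. The direction $\bc$ whose matrix is $\operatorname{mat}(\bc)=\bv\bv'$ gives $P_{\bc}(\bx)=(\bv'\bx)^2$, up to the convention in \eqref{eq:Pc}. This polynomial vanishes on $\cX$, so $(\bzero,\bc)\in\cN_{\le2}$. Starting from $\bdel_0$, adding $t(\bzero,\bc)$ for $t\ge0$ keeps the first moments fixed and adds $t\bv\bv'\succeq0$ to the covariance, so the point stays in $\Theta_I$. Then $\Var(\bv'\bD)$ increases by $t\|\bv\|^4\to\infty$. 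Thus $\Theta_I$ is unbounded and not compact.

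\emph{Main obstacle.} The delicate step is the sufficiency bound. There we must make sure that the identified quadratic forms $\bx'\operatorname{mat}\bx$ for $\bx\in\cX$, together with positive semidefiniteness, control all entries, and the spanning decomposition above handles this. We also need to verify the coefficient convention linking $\bv\bv'$ to a null direction of $\bR_2$, which is checked through Lemma~\ref{lem:polar}.
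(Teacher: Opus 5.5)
Your proof is correct. The unboundedness half, which sets $\operatorname{mat}(\bc)=\bv\bv'$ for $\bv\perp\cX$ and moves the covariance to $\bSig_0+t\bv\bv'$, is exactly the paper's converse argument. One small point: $P_{\bc}(\bx)=\bx'\operatorname{mat}(\bc)\bx$ holds exactly under \eqref{eq:Pc}, not merely ``up to the convention.''

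The sufficiency half takes a genuinely different route. The paper never fixes the mean first. It uses two of Rockafellar's results: a nonempty closed convex set is compact if and only if its recession cone is trivial, and the recession cone of an intersection is the intersection of the recession cones. The zero $(1,1)$ entry of $M_0(\mathbf d)$ then forces every recession direction to have $\mathbf d_1=\bzero$ and $\operatorname{mat}(\mathbf d_2)\succeq0$. This yields the explicit cone \eqref{eq:reccone}, the positive semidefinite elements of $\cN_2$, i.e.\ the PSD matrices annihilating $\operatorname{span}(\cX)$, and both directions of the equivalence follow from that one description.

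You instead use $\cN_1=\{\bzero\}$ to pin $\bdel^{[1]}=\bmu_0$. You then bound $\operatorname{mat}(\bdel^{[2]})$ directly through a basis $\bx_1,\ldots,\bx_q$ drawn from $\cX$, on which the quadratic forms are identified, via Cauchy--Schwarz in the semi-inner product.

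Your route is more elementary, needing no convex-analysis theorems. It is also quantitative: it gives an explicit a priori bound on every second moment, and hence on the SDP values, in terms of identified quantities $\bx_j'\operatorname{mat}(\bdel_0^{[2]})\bx_j$. The paper's route is more structural: the computation that proves compactness also exhibits the full recession cone of $\Theta_I$ when $\cN_1\neq\{\bzero\}$.
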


Thus a coefficient direction orthogonal to every row space generated by the
regressor support has neither an identified mean nor a bounded variance.
When $\E\|\bW\|^2<\infty$, nonsingularity of $\E[\bW'\bW]$ therefore
determines whether the second-order identified set is compact before the
second-order semidefinite program is solved.

\begin{examplecont}{Example~\ref{ex:runningA} (continued: sharp partial identification)}
The conditional-moment equations leave $m_{23}$ unrestricted along one
linear direction.
The positive-semidefinite restriction in
Theorem~\ref{thm:spectra} restricts the admissible values along that
direction.

Since the first moments are point identified, write
\[
\bSig
=
\Var(\bD)
=
\begin{pmatrix}
\sigma_{11} & \sigma_{12} & \sigma_{13}\\
\sigma_{12} & \sigma_{22} & \sigma_{23}\\
\sigma_{13} & \sigma_{23} & \sigma_{33}
\end{pmatrix}.
\]
The two observed systems identify every entry of $\bSig$ except
$\sigma_{23}$.
Because
\[
m_{23}
=
\mu_2\mu_3+\sigma_{23},
\]
bounding $m_{23}$ is equivalent to bounding $\sigma_{23}$.

Suppose for simplicity that $\sigma_{11}>0$.
Positive semidefiniteness of $\bSig$ is equivalent to positive
semidefiniteness of the Schur complement
\[
\begin{pmatrix}
\sigma_{22}-\dfrac{\sigma_{12}^2}{\sigma_{11}}
&
\sigma_{23}-\dfrac{\sigma_{12}\sigma_{13}}{\sigma_{11}}
\\[10pt]
\sigma_{23}-\dfrac{\sigma_{12}\sigma_{13}}{\sigma_{11}}
&
\sigma_{33}-\dfrac{\sigma_{13}^2}{\sigma_{11}}
\end{pmatrix}.
\]
Hence the sharp identified interval is
\begin{align}
\sigma_{23}
\in
\Bigg[
&
\frac{\sigma_{12}\sigma_{13}}{\sigma_{11}}
-
\sqrt{
\left(
\sigma_{22}-\frac{\sigma_{12}^2}{\sigma_{11}}
\right)
\left(
\sigma_{33}-\frac{\sigma_{13}^2}{\sigma_{11}}
\right)
},
\nonumber\\
&
\frac{\sigma_{12}\sigma_{13}}{\sigma_{11}}
+
\sqrt{
\left(
\sigma_{22}-\frac{\sigma_{12}^2}{\sigma_{11}}
\right)
\left(
\sigma_{33}-\frac{\sigma_{13}^2}{\sigma_{11}}
\right)
}
\Bigg].
\label{eq:running-cov-bound}
\end{align}

The interval is nondegenerate whenever both residual variances are strictly
positive.
If either residual variance is zero, the interval collapses to a singleton.
Thus moment feasibility can produce point identification at such a degenerate observed distribution even though the conditional-moment map is
not injective under the maintained model (Remark~\ref{rem:fixedvsmodel}).

For interpretation, define the linear residuals
\[
U_2
=
D_2-\frac{\sigma_{12}}{\sigma_{11}}D_1,
\qquad
U_3
=
D_3-\frac{\sigma_{13}}{\sigma_{11}}D_1.
\]
Then
\[
\Cov(U_2,U_3)
=
\sigma_{23}
-
\frac{\sigma_{12}\sigma_{13}}{\sigma_{11}},
\qquad
\Var(U_2)
=
\sigma_{22}-\frac{\sigma_{12}^2}{\sigma_{11}},
\qquad
\Var(U_3)
=
\sigma_{33}-\frac{\sigma_{13}^2}{\sigma_{11}}.
\]
The conditional-moment equations identify the two residual variances but
not their covariance.
The interval in \eqref{eq:running-cov-bound} is therefore exactly the
Cauchy--Schwarz restriction
\[
|\Cov(U_2,U_3)|
\le
\sqrt{\Var(U_2)\Var(U_3)}.
\]

The interval is sharp.
Every value satisfying \eqref{eq:running-cov-bound} produces a
positive-semidefinite covariance matrix $\bSig$.
By Theorem~\ref{thm:spectra}, a possibly singular Gaussian distribution
with mean $\bmu$ and covariance $\bSig$ attains those moments.
Taking $\bD$ independent of $\bW$ with this distribution reproduces the
observed conditional first and second moments.
Corollary~\ref{cor:sharpdata} further provides a conditional distribution
of $\bD$ given $\bW$ that reproduces the full observed conditional
distribution of $\bY$ for every value in
\eqref{eq:running-cov-bound}.
The interval is bounded, as Corollary~\ref{cor:compact} implies, because
$L_1\cup L_2$ spans $\mathbb R^3$.

Thus the conditional-moment equations determine the affine unidentified
direction, while positive-semidefinite moment feasibility determines the
sharp subset of that direction.
\end{examplecont}

\begin{examplecont}{Example~\ref{ex:runningA} (concluded: restoring identification)}
Enlarge the regressor support by adding
\[
\bW^{(3)}
=
\begin{pmatrix}
0&1&0\\
0&0&1
\end{pmatrix},
\]
with positive probability.
Its row space is
\[
L_3
=
\rowsp(\bW^{(3)})
=
\{\bx\in\mathbb R^3:x_1=0\}.
\]
Under this regressor history,
$\bY=(D_2,D_3)'$, so
$m_{23}=\E[D_2D_3]$ is directly identified.

The support characterization gives the same conclusion.
A homogeneous quadratic that vanishes on
$L_1=\{x_3=0\}$ must be divisible by $x_3$.
If it also vanishes on $L_2=\{x_2=0\}$, it must be divisible by
$x_2x_3$.
Vanishing additionally on $L_3=\{x_1=0\}$ would require divisibility by
$x_1x_2x_3$, which is impossible for a nonzero quadratic.
Hence
\[
\cI_2(L_1\cup L_2\cup L_3)=\{0\},
\qquad
\cN_2=\{0\}.
\]
Equivalently,
\[
\rank
\begin{pmatrix}
\bR_2(\bW^{(1)})\\
\bR_2(\bW^{(2)})\\
\bR_2(\bW^{(3)})
\end{pmatrix}
=
6.
\]
This is the $q=3$, $T=2$, $r=2$ case of
Theorem~\ref{thm:hyper}: three distinct hyperplanes are necessary and
sufficient for the second-order support condition.

Under conditional nondegeneracy, the same conclusion follows from the
information matrix.
Each $\bB_2(\bW^{(j)})$ is singular with rank three, but
\[
\ker(\bB_2)
=
\bigcap_{j=1}^3
\ker\bR_2(\bW^{(j)})
=
\{0\},
\]
so $\bB_2$ is nonsingular.

The interval in \eqref{eq:running-cov-bound} therefore collapses to the
value of $\sigma_{23}$ implied by the third regressor history.
Each individual system remains underdetermined, but the three row spaces
jointly identify the complete second-moment matrix of $\bD$.
\end{examplecont}

\begin{proposition}[Binary regressor: closed-form bounds]\label{prop:binarybounds}
Let $T=1$, $\bW=(1,X)$ with $X\in\{0,1\}$, and $r=2$, and write
$s_x^2=\Var(Y\mid X=x)$, $x\in\{0,1\}$, for the observed conditional
variances.
Then $\bdel^{[1]}$ is point identified,
$\cN_2=\operatorname{span}\{(0,1,-2)'\}$ in the ordering
$(D_0^2,D_0D_1,D_1^2)$, and the identified set of
Theorem~\ref{thm:spectra} projects onto
\begin{equation}\label{eq:binaryinterval}
\Var(D_1)\in\big[(s_1-s_0)^2,\ (s_1+s_0)^2\big],
\qquad
\operatorname{sd}(D_1)\in\big[\,|s_1-s_0|,\ s_1+s_0\big],
\end{equation}
with every value attained.
These are the bounds in Proposition~2.1 of Hermann and Holzmann (2025),
obtained here from the positive-semidefinite characterization.
Their Section~2.2, equations (2.9)--(2.11), considers the corresponding
problem when one binary regressor appears among other, more richly
supported regressors; the resulting higher-dimensional affine restriction
leads to a semidefinite program.
Their Proposition~2.5 characterizes when the upper bound of that program
for the slope variance is strictly positive.
Corresponding closed-form bounds follow for $\Cov(D_0,D_1)$.
\end{proposition}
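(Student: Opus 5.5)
The plan is to specialize Theorems~\ref{thm:support}, \ref{thm:partialset} and~\ref{thm:spectra} to $T=1$, $\bW=(1,X)$, $X\in\{0,1\}$, and then make a linear change of variables. That change turns the positive-semidefinite constraint into a one-parameter correlation constraint.

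\emph{First moments.} Here $\bR_1(x)=(1,x)$. The two support points give $\E[Y\mid X=0]=\mu_0$ and $\E[Y\mid X=1]=\mu_0+\mu_1$. The stacked $2\times2$ matrix is nonsingular, so $\cN_1=\{\bzero\}$ and $\bdel^{[1]}$ is point identified. Equivalently, the lines through $(1,0)$ and $(1,1)$ span $\mathbb R^2$.

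\emph{Null space at order two.} From Example~\ref{ex:scalar}, $\bR_2(x)=(1,2x,x^2)$, so $\bR_2(0)=(1,0,0)$ and $\bR_2(1)=(1,2,1)$. A vector $\bc$ in the common kernel satisfies $c_1=0$ and $2c_2+c_3=0$, which gives $\cN_2=\operatorname{span}\{(0,1,-2)'\}$. As a cross-check with Theorem~\ref{thm:support}(iv), the associated polynomial $P_{\bc}(x_0,x_1)=2x_0x_1-2x_1^2=-2x_1(x_1-x_0)$ vanishes on both lines. It is the product of the two linear forms defining the two support lines, in line with Theorem~\ref{thm:hyper} with only $2<r+1$ hyperplanes.

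\emph{Change of variables.} By Theorem~\ref{thm:spectra} and the Schur-complement form, $\Theta_I$ consists of the vectors $\bdel_0^{[\le2]}+(\bzero,t(0,1,-2)')$ whose implied covariance matrix of $(D_0,D_1)$ is PSD. Because the means are fixed, moving along the null direction changes $\Cov(D_0,D_1)$ by $t$ and $\Var(D_1)$ by $-2t$, and leaves $\Var(D_0)$ unchanged. The two observed conditional variances pin down $\Var(D_0)=s_0^2$ and $\Var(D_0+D_1)=\Var(D_0)+2\Cov(D_0,D_1)+\Var(D_1)=s_1^2$, and both are invariant along the null direction.

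Now pass to the nonsingular linear transformation $(D_0,D_1)\mapsto(D_0,D_0+D_1)$. PSD of the covariance of $(D_0,D_1)$ is equivalent to PSD of the covariance of $(D_0,D_0+D_1)$. That matrix has known diagonal $(s_0^2,s_1^2)$ and free off-diagonal entry $\kappa=\Cov(D_0,D_0+D_1)$. The null direction moves exactly this entry: $\kappa=\Var(D_0)+\Cov(D_0,D_1)$ shifts by $t$, and $t$ ranges over all of $\mathbb R$. PSD is therefore exactly $|\kappa|\le s_0s_1$, that is, $\kappa=\rho s_0s_1$ with $\rho\in[-1,1]$.

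\emph{Reading off the bounds.} Then $\Var(D_1)=\Var\big((D_0+D_1)-D_0\big)=s_0^2+s_1^2-2\rho s_0s_1$. As $\rho$ ranges over $[-1,1]$ this sweeps $[(s_1-s_0)^2,(s_1+s_0)^2]$, and taking square roots gives the standard-deviation interval. Every value is attained because every PSD covariance with the fixed mean is realized by a possibly singular Gaussian $\bD$ independent of $X$, by Theorem~\ref{thm:spectra}. The closed form for the covariance follows the same way: $\Cov(D_0,D_1)=\rho s_0s_1-s_0^2\in[-s_0s_1-s_0^2,\ s_0s_1-s_0^2]$.

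\emph{Main obstacle.} The delicate step is confirming that the one-dimensional null direction corresponds exactly to free variation of $\kappa$ over all of $\mathbb R$, with $s_0^2$ and $s_1^2$ held fixed, so that no hidden constraint remains beyond $|\rho|\le1$. The degenerate cases $s_0=0$ or $s_1=0$, where the interval collapses to a point, should be checked separately. They fit the same formula, since PSD then forces $\kappa=0$.
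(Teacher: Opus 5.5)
Your proof is correct. Its first two steps, point identification of $\bdel^{[1]}$ and $\cN_2=\operatorname{span}\{(0,1,-2)'\}$ from the rows $(1,0,0)$ and $(1,2,1)$, are exactly the paper's. The difference lies in how you solve the semidefinite constraint.

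The paper stays in the original coordinates. It writes the candidate covariance as $\Sigma(t)$ with entries $\sigma_{00}$, $\sigma_{01}+t$, $\sigma_{11}-2t$, and solves the quadratic inequality $\det\Sigma(t)\ge0$ for the roots $t_\pm=-(\sigma_{00}+\sigma_{01})\pm s_0s_1$. It then needs two side remarks: that the diagonal condition $\sigma_{11}-2t\ge0$ is implied when $\sigma_{00}>0$, and that $s_0=0$ forces $t=0$.

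You instead apply the congruence by $\begin{pmatrix}1&0\\1&1\end{pmatrix}$, whose rows are the two support vectors. This passes to the coordinates $(D_0,D_0+D_1)$ that the two designs actually observe. The diagonal then consists of the identified quantities $s_0^2,s_1^2$, which do not move along the null direction. The null direction shifts only the off-diagonal entry $\kappa$, by $t$. Feasibility therefore reduces to the single Cauchy--Schwarz inequality $|\kappa|\le s_0s_1$, and the degenerate cases are absorbed automatically. This computation also already settles the step you flag as the main obstacle.

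Your route buys a shorter derivation and the same Cauchy--Schwarz interpretation the paper gives in Example~\ref{ex:runningA}. The paper's route, in exchange, is literally the one-dimensional instance of the semidefinite program of Theorem~\ref{thm:spectra}: parametrize the affine class and impose the linear matrix inequality. That form carries over directly to the richer-support problem mentioned in the statement, where the affine class is higher-dimensional and feasibility no longer reduces to a single correlation bound.
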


Corollary~\ref{cor:scalar} gives the corresponding panel result.
With $T=2$, positive probability of movement,
$\Pr\{X_1\neq X_2\}>0$, restores the rank condition and point
identifies moments of every fixed order.
More generally, if $\cN_{\le r}=\{\bzero\}$, the moments through order $r$
are point identified.
If $\cN_{\le r}\neq\{\bzero\}$, they are not jointly point identified under
the maintained conditional-moment restrictions, although particular linear
functionals orthogonal to $\cN_{\le r}$ remain identified.
For a given observed conditional-moment function, the sharp identified set
is \eqref{eq:idset}. When an identified residual variance is zero, the interval collapses to a point; see Example~\ref{ex:runningA} and Proposition~\ref{prop:binarybounds}.

\begin{remark}[Higher orders]\label{rem:higherorders}
For $r\ge3$, membership in $\cM_{\le r}$ is a truncated multivariate
moment problem.
Positive semidefiniteness of the moment matrix of order
$\lfloor r/2\rfloor$ formed from the available moments is necessary, and
therefore so is positive semidefiniteness of its principal truncations.
When restrictions are imposed on the support of $\bD$, the corresponding
localizing-matrix conditions are also necessary insofar as they involve only
moments of order at most $r$.
At higher orders these semidefinite restrictions generally give convex outer
approximations rather than a complete characterization of
$\cM_{\le r}$.
Flat-extension conditions can certify representability when a suitable
positive extension exists (Curto and Fialkow, 1998).
If the coefficients are restricted to a compact basic semialgebraic set and
the standard Archimedean conditions hold, Lasserre's (2001) moment--sum-of-squares hierarchy provides convergent
semidefinite bounds.
We use these results only as computational characterizations at higher
orders; the second-order characterization needed for means, variances, and
covariances is exact.
\end{remark}

\begin{remark}[Computation and estimation]\label{rem:picompute}
For finite regressor support, $\cN_s$ is the null space of the matrix
obtained by stacking $\bR_s(w)$ over the support points, so the unidentified
moment space is obtained by a finite-dimensional null-space calculation.
For general support, $\cN_s$ can instead be characterized through the
corresponding Gram matrix and estimated from its sample analogue.
At second order, the bounds in Theorem~\ref{thm:spectra} are values of
semidefinite programs; replacing the conditional moments by
estimators yields an estimated second-order identified set.
Inference for the identified set and for the resulting SDP value functions
requires additional methods for partially identified models, which we do not
develop here.
\end{remark}

\begin{remark}[Related partial-identification results]\label{rem:pilit}
Theorem~\ref{thm:partialset} combines the affine restrictions implied by the
conditional-moment equations with feasibility restrictions from the
truncated moment problem.

At second order, the resulting positive-semidefinite completion problem is
related to statistical matching, where covariances between variables that
are never jointly observed are bounded by positive-definite completion
restrictions; see, e.g., Ahfock et al. (2016) and the matrix-completion
results of Grone et al. (1984).
Dobronyi et al. (2026) use truncated-moment restrictions and semidefinite
programming in dynamic panel logit models.
Their affine restrictions arise from the model's functional form, whereas
here they arise from the common null space generated by the regressor
support; at second order, the characterization here is exact without the
general truncated-moment machinery.
Positive semidefiniteness of latent moment matrices also appears in the
errors-in-variables analysis of Klepper and Leamer (1984) and Leamer
(1987).
The model also belongs to the broader class of latent-variable moment
models studied by Schennach (2014), with the latent coefficient vector
entering linearly after applying the higher-order moment representation.

Because $\Theta_I$ is convex, maximizing a linear functional over it
evaluates its support function.
Beresteanu, Molchanov, and Molinari (2011) develop support-function methods
for sharp convex identification regions.
See Molinari (2020) for a broader review.
Freyberger and Horowitz (2015) provide a linear-programming analogue for
conditional means with discrete instruments, combining finitely many affine
moment restrictions with shape restrictions.
Stoye (2010) analyzes partially identified spread parameters, which is
relevant here because variance is a leading bounded functional.

Within random coefficient models, Lee (2026) obtains bounds under
predetermined regressors.
Gaillac and Gautier (2021) instead obtain point identification of the full
coefficient distribution under limited regressor variation by imposing
quasi-analytic restrictions on the coefficient class.
The latter approach restricts the coefficient distribution to restore point
identification; here the coefficient class is left unrestricted.
\end{remark}

\section{Discussion}\label{sec:discussion}

The support theorem gives a primitive characterization of the rank condition.
Under the maintained conditional-moment restrictions, the order-$r$
coefficient moments are identified if and only if the matrices
$\bR_r(w)$ have no nonzero common null direction over the support of
$\bW$.
There is no separate pair of conditions corresponding to
``within'' and ``between'' rank, of the kind the within and between
decompositions of panel variation might suggest.
Each regressor history contributes restrictions through $\bR_r(w)$, and
identification depends on whether these restrictions jointly have full
column rank.
The row-space characterization expresses this condition directly in terms
of the original regressors.
Full row rank of a particular regressor history maximizes the number of
order-$r$ restrictions it contributes, but is neither necessary nor
sufficient for identification.

The same matrices provide diagnostics for the empirical strength of the
rank condition.
With finite regressor support, one can stack
$\bR_r(w_j)$ over support points and examine the rank and singular values
of the resulting matrix.
With continuously distributed regressors, the sample Gram matrix
\[
\frac1N\sum_{i=1}^N
\bR_{r,i}'\bA_{r,i}\bR_{r,i}
\]
provides the corresponding sample measure of conditioning.
Because $Q_r$ increases rapidly with $r$, the smallest eigenvalue can
become small at higher orders even when the rank condition
continues to hold.
The feasible order of the moment analysis may therefore be limited by weak
conditioning of the sample analogue of the moment equations.
These singular-value and eigenvalue diagnostics depend on the normalization
of the regressors and moment coordinates and should therefore be compared
only under a fixed scaling.
At order one, under identity weighting, the Gram matrix is $\E[\bW'\bW]$, when finite.
By Corollary~\ref{cor:compact}, its nonsingularity also determines whether
the second-order identified set is compact; the smallest eigenvalue of its
sample analogue provides a natural empirical diagnostic of this condition.

Panel observations change both the number of restrictions contributed by each
regressor history and the pattern of overlap among those restrictions.
When $T=1$, each history contributes one order-$r$ restriction.
A full-row-rank $T$-equation history contributes
\[
S_r=\binom{T+r-1}{r}
\]
independent restrictions.
Restrictions from different histories need not be distinct:
Proposition~\ref{prop:twopoint} gives the exact number duplicated by the
intersection of two full-row-rank row spaces.
In the intercept specification, if there is no within-unit regressor variation,
the identification problem reduces to the corresponding cross-sectional problem.
When $T=q$, a single full-rank regressor history identifies moments of every
fixed order.

The maintained moment-homogeneity restriction can be relaxed without
changing the higher-order moment representation.
Remark~\ref{rem:crc-relax} allows the conditional coefficient moments to
depend on known functions of $\bW$ through a finite-dimensional
specification.
Identification is then determined by the rank of the expanded
conditional-moment system, which may remain deficient even under rich
regressor support.
Remark~\ref{rem:control-relax} instead imposes moment homogeneity
conditional on a control variable, in which case the support condition is
applied conditionally on the control.
The corresponding estimation problems are not developed here.

Other extensions concern inference for the partially identified sets in
Section~\ref{sec:partial}, imposing moment-feasibility restrictions in
estimation even when the moment vector is point identified, fixed-design
asymptotics for chosen regressor histories as in
Remark~\ref{rem:fixeddesign}, and support conditions for more general
collections of row spaces.
These questions build directly on the rank and identified-set
characterizations developed above.

\section{Conclusion}\label{sec:conclusion}

We study identification of moments of a random coefficient vector
when each unit provides fewer equations than coefficients.  At each fixed
order, the higher-order moment representation converts the problem into a
finite-dimensional conditional linear system.  Under moment homogeneity,
the coefficient moments are identified if and only if the matrices
$\bR_r(w)$ have no nonzero common null direction over the regressor support.
Equivalently, no nonzero homogeneous polynomial of degree $r$ vanishes on
every row space generated by the support of the regressors.

The panel structure determines how many restrictions each regressor history
contributes and how those restrictions overlap across histories.  A
full-row-rank $T$-equation history contributes
$\binom{T+r-1}{r}$ independent order-$r$ restrictions, while intersections
of row spaces determine their redundancy.  Unit-level recovery of the
random coefficients is therefore not required for identification of their
moments.

When the support condition fails, its common null space determines the
unidentified moment directions, and moment feasibility determines the
corresponding identified set.  At second order, the identified set is a
spectrahedron and is sharp relative to the full joint distribution of
outcomes and regressors; no higher feature of the conditional outcome
distribution tightens the bounds on the coefficient mean and covariance.
When the support condition holds,
weighted minimum-distance estimators are root-$N$ asymptotically normal
and, under conditional nondegeneracy, the oracle generalized-inverse
weight attains the Chamberlain efficiency bound for the maintained
conditional-moment model.  Thus
finite-order moment identification remains a finite-dimensional problem
even when recovery of the full coefficient distribution is substantially
more demanding.

\appendix
\section{Proofs}\label{app:proofs}

\begin{proof}[Proof of Lemma~\ref{lem:rep}]
For each $t$,
\[
Y_t^{a_t}
=(\bW_t'\bD)^{a_t}
=
\sum_{|\bk_t|=a_t}
\binom{a_t}{\bk_t}
\bW_t^{\bk_t}\bD^{\bk_t}
\]
by the multinomial theorem.  Multiplying over $t=1,\ldots,T$ gives
\[
\bY^{\ba}
=
\sum_{|\bk_1|=a_1}\cdots\sum_{|\bk_T|=a_T}
\left(\prod_{t=1}^T\binom{a_t}{\bk_t}\right)
\left(\prod_{t=1}^T\bW_t^{\bk_t}\right)
\bD^{\sum_t\bk_t}.
\]
Collecting terms with $\bb=\sum_t\bk_t$, and using
$|\bb|=\sum_ta_t=r$, gives exactly the coefficient in
\eqref{eq:Rentry}.  Stacking over $\ba$ yields \eqref{eq:lift}.
\end{proof}

\begin{proof}[Proof of Lemma~\ref{lem:rank}]
For nonemptiness, $K_{\ba\bb}$ is the set of $T\times q$ arrays of
nonnegative integers with row sums $(a_1,\ldots,a_T)$ and column sums
$(b_1,\ldots,b_q)$.  The row and column totals both equal $r$, so such an
array exists; for example, the northwest-corner construction recursively
allocates the minimum remaining row and column total.

For the rank claim, suppose $c'\bR_r(\bW)=0'$ for some
$c\in\mathbb{R}^{S_r}$.  Then, for every $d\in\mathbb{R}^q$, letting
$y=\bW d$,
\[
0
=
c'\bR_r(\bW)d^{[r]}
=
c'y^{[r]}
=
\sum_{|\ba|=r}c_{\ba}y^{\ba}.
\]
If $\rank(\bW)=T$, the map $d\mapsto\bW d$ is surjective onto
$\mathbb{R}^T$.  Hence the homogeneous polynomial
$\sum_{|\ba|=r}c_{\ba}y^{\ba}$ vanishes for every
$y\in\mathbb{R}^T$.  The distinct degree-$r$ products indexed by $\ba$
are linearly independent as functions on $\mathbb{R}^T$, so $c=0$.
Therefore the rows of $\bR_r(\bW)$ are linearly independent.
\end{proof}

\begin{proof}[Proof of Lemma~\ref{lem:polar}]
Fix $\bx=w'\blam$ and let $d\in\mathbb R^q$ be an auxiliary vector.
Expanding $(\bx'd)^r$ in two ways gives
\[
(\bx'd)^r
=
\sum_{|\bb|=r}\binom r\bb\bx^{\bb}d^{\bb}
\]
and
\[
(\bx'd)^r
=
\big(\blam'(wd)\big)^r
=
\sum_{|\ba|=r}\binom r\ba\blam^{\ba}(wd)^{\ba}.
\]
Apply to both expressions the linear functional $L_{\bc}$ on
degree-$r$ polynomials in $d$ defined by
$L_{\bc}(d^{\bb})=c_{\bb}$.  The first expansion gives
$P_{\bc}(\bx)$.  Using
$(wd)^{\ba}=\sum_{\bb}\bR_{r,\ba\bb}(w)d^{\bb}$, the second gives the
right-hand side of \eqref{eq:polar}.  Since the distinct degree-$r$
products $\blam^{\ba}$ are linearly independent, the resulting polynomial
in $\blam$ vanishes identically if and only if every component of
$\bR_r(w)\bc$ is zero.
\end{proof}

\begin{proof}[Proof of Theorem~\ref{thm:support}]
If a nonzero $\bc$ satisfies $\bR_r(w)\bc=\bzero$ for every
$w\in\cW$, then for every admissible weight
\[
\bc'\bM_{r\bA}\bc
=
\E\big[(\bR_r\bc)'\bA_r(\bR_r\bc)\big]
=0,
\]
so $\bM_{r\bA}$ is singular.

Conversely, suppose condition (iii) holds.  For $\bc\neq0$, the map
$w\mapsto\bR_r(w)\bc$ is continuous.  Its zero set
$\{w:\bR_r(w)\bc=\bzero\}$ is closed and, by condition (iii), does not
contain $\cW$.  Its complement therefore contains an open neighborhood of
some point in $\cW$, which has positive probability by the definition of
support.  Hence
\[
\Pr\{\bR_r(\bW)\bc\neq\bzero\}>0.
\]
Since $\bR_r(\bW)\bc$ lies in
$\operatorname{col}(\bR_r(\bW))$, on which
$\bA_r(\bW)$ is positive definite, the integrand
$(\bR_r\bc)'\bA_r(\bR_r\bc)$ is nonnegative almost surely and strictly
positive on an event of positive probability.  Thus
$\bc'\bM_{r\bA}\bc>0$ for every admissible weight.  This proves the
equivalence of (i)--(iii).

By Lemma~\ref{lem:polar},
$\bR_r(w)\bc=\bzero$ for every $w\in\cW$ if and only if
$P_{\bc}$ vanishes on every $\rowsp(w)$, equivalently on $\cX$.
Since $\bc\mapsto P_{\bc}$ is a linear bijection, (iii) and (iv) are
equivalent, and the stated isomorphism between the unidentified moment
spaces follows.
\end{proof}

\begin{proof}[Proof of Theorem~\ref{thm:hyper}]
Write the distinct hyperplanes as $H_j=\{\ell_j=0\}$.
If there are $m\le r$, the nonzero homogeneous polynomial
\[
P=\ell_1\cdots\ell_m\ell_1^{r-m}
\]
has degree $r$ and vanishes on their union, so the support condition fails.

Conversely, a polynomial that vanishes identically on the hyperplane
$\{\ell=0\}$ is divisible by $\ell$.  To see this, choose coordinates in
which $\ell=x_q$.  If
$P(x_1,\ldots,x_{q-1},0)=0$ for every
$(x_1,\ldots,x_{q-1})$, every term of $P$ contains the factor $x_q$.
Dividing by one linear form preserves vanishing on the remaining
hyperplanes: each meets the complement of the divisor's zero set in a dense
subset, and the quotient is continuous, so it vanishes there as well.
Iterating, a degree-$r$ polynomial vanishing on $r+1$ distinct hyperplanes would
be divisible by the product of $r+1$ pairwise non-associate linear forms,
which is impossible unless the polynomial is zero.
Theorem~\ref{thm:support} completes the argument.
\end{proof}

\begin{proof}[Proof of Proposition~\ref{prop:twopoint}]
By Theorem~\ref{thm:support} applied to the two-point support
$\{w^{(1)},w^{(2)}\}$, the nullity of the vertically stacked matrix
\[
\begin{pmatrix}
\bR_r(w^{(1)})\\
\bR_r(w^{(2)})
\end{pmatrix}
\]
equals the dimension of the space of degree-$r$ homogeneous polynomials
vanishing on $L_1\cup L_2$.

Choose a basis $v_1,\ldots,v_q$ of $\mathbb R^q$ such that
\[
L_1\cap L_2=\operatorname{span}(v_1,\ldots,v_t),
\]
\[
L_1=\operatorname{span}(v_1,\ldots,v_T),
\]
and
\[
L_2
=
\operatorname{span}
(v_1,\ldots,v_t,v_{T+1},\ldots,v_{2T-t}).
\]
Such a basis exists because
$\dim(L_1+L_2)=2T-t\le q$.

In the associated coordinates, a homogeneous polynomial vanishes on a
coordinate subspace $L$ if and only if every one of its terms contains at
least one variable outside the coordinates spanning $L$.  Hence a basis for
the polynomials vanishing on both $L_1$ and $L_2$ consists of the
degree-$r$ products supported neither entirely on the coordinates of
$L_1$ nor entirely on those of $L_2$.  Inclusion--exclusion therefore gives
\[
\text{nullity}
=
Q_r-\binom{T+r-1}{r}-\binom{T+r-1}{r}+\binom{t+r-1}{r},
\]
because a degree-$r$ product supported in both coordinate sets is supported
on the coordinates of $L_1\cap L_2$.  Subtracting the nullity from $Q_r$
gives \eqref{eq:twopoint}.
\end{proof}

\begin{proof}[Proof of Proposition~\ref{prop:intercept}]
Because $\bar\bx$ belongs to the affine hull, there are coefficients
$\alpha_t$ satisfying
\[
\sum_t\alpha_t=1,
\qquad
\bar\bx=\sum_t\alpha_t\bX_t.
\]
Any $\bu\in U$ can be written as
$\bu=\sum_t\beta_t\bX_t$ for coefficients satisfying
$\sum_t\beta_t=0$.  Hence, for any $s\in\mathbb R$, choosing
$\lambda_t=s\alpha_t+\beta_t$ gives
\[
\sum_t\lambda_t(1,\bX_t')'
=
(s,s\bar\bx+\bu).
\]
This proves one inclusion of \eqref{eq:row-intercept}.

Conversely, for any $\blam\in\mathbb R^T$ set $s=\sum_t\lambda_t$; then $\sum_t\lambda_t(1,\bX_t')'=(s,\ s\bar\bx+\sum_t\lambda_t(\bX_t-\bar\bx))$, and $\bX_t-\bar\bx\in U(\omega)$ for every $t$ because $\bar\bx$ lies in the affine hull, so the second component lies in $s\bar\bx+U(\omega)$.  This proves \eqref{eq:row-intercept}.

If $s\neq0$, homogeneity gives
\[
P(s,s\bar\bx+\bu)
=
s^r p(\bar\bx+\bu/s).
\]
Since $U$ is a linear space, vanishing for every $s\neq0$ and
$\bu\in U$ is equivalent to $p$ vanishing on
$\bar\bx+U$, the affine hull.

At $s=0$,
\[
P(0,\bu)=p_r(\bu).
\]
This condition is implied by vanishing of $p$ on the affine hull.
For any $\bu\in U$, $p(\bar\bx+t\bu)$ is a polynomial in $t$ of degree at
most $r$, with coefficient on $t^r$ equal to $p_r(\bu)$.  Since
$\bar\bx+t\bu$ belongs to the affine hull for every $t$, this polynomial
vanishes identically, and therefore $p_r(\bu)=0$.
Applying the argument to every support realization proves the result.
\end{proof}

\begin{proof}[Proof of Proposition~\ref{prop:toporder}]
Suppose a nonzero homogeneous polynomial $P_s$ of degree $s\le r$ vanishes
on $\cX$.  For any nonzero linear form $\ell$, the product
$\ell^{r-s}P_s$ is a nonzero homogeneous polynomial of degree $r$ that also
vanishes on $\cX$.  Hence the support condition at order $r$ implies the
support condition at every lower order.

For the stacked system, positive definiteness of an admissible stacked
weight on the column space of $\bR_{\le r}(\bW)$ implies, by the argument in
Theorem~\ref{thm:support}, that the Gram matrix is singular if
and only if there exists a nonzero stacked vector
$\bc=(\bc_1',\ldots,\bc_r')'$ such that
\[
\bR_s(w)\bc_s=\bzero
\]
for every $s=1,\ldots,r$ and every $w\in\cW$.
Such a stacked null direction exists if and only if the support condition
fails at some order $s\le r$, which by the first part is equivalent to
failure at order $r$.
\end{proof}

\begin{proof}[Proof of Theorem~\ref{thm:ident}]
By Lemma~\ref{lem:rep},
\[
\bY^{[r]}=\bR_r\bD^{[r]}.
\]
Since $\bR_r'\bA_r\bR_r$ is $\bW$-measurable,
$\bD^{[r]}$ is integrable by Assumption~\ref{ass:moment}, and
$\bR_r'\bA_r\bR_r\bD^{[r]}
=\bR_r'\bA_r\bY^{[r]}$ is integrable by admissibility,
\[
\E[\bR_r'\bA_r\bR_r\bD^{[r]}\mid\bW]
=
\bR_r'\bA_r\bR_r\,\E[\bD^{[r]}\mid\bW].
\]
Therefore, by iterated expectations and Assumption~\ref{ass:moment},
\begin{align*}
\E[\bR_r'\bA_r\bY^{[r]}]
&=
\E[\bR_r'\bA_r\bR_r\bD^{[r]}]\\
&=
\E\left[
\bR_r'\bA_r\bR_r\E[\bD^{[r]}\mid\bW]
\right]\\
&=
\E[\bR_r'\bA_r\bR_r]\bdel^{[r]}
=
\bM_{r\bA}\bdel^{[r]}.
\end{align*}
Theorem~\ref{thm:support} implies that $\bM_{r\bA}$ is nonsingular for
every admissible weight, which gives \eqref{eq:ident}.
\end{proof}

\begin{proof}[Proof of Theorem~\ref{thm:asynorm}]
Let
\[
e_{r,i}=\bY_i^{[r]}-\bR_{r,i}\bdel^{[r]}.
\]
The estimator satisfies the exact expansion
\[
\sqrt N(\widehat{\bdel}^{[r]}-\bdel^{[r]})
=
\widehat\bM_{r\bA}^{-1}
\frac1{\sqrt N}\sum_{i=1}^N
\bR_{r,i}'\bA_{r,i}e_{r,i}.
\]
By Assumption~\ref{ass:moment},
\[
\E[e_{r,i}\mid\bW_i]
=
\bR_{r,i}
\left(
\E[\bD_i^{[r]}\mid\bW_i]-\bdel^{[r]}
\right)
=0,
\]
and hence $\E[\bpsi_{r,i}]=0$.

Assumption~\ref{ass:sampling} and \eqref{eq:int1} imply a WLLN for
$\widehat\bM_{r\bA}$.  The numerator also satisfies a WLLN because
\[
\bR_r'\bA_r\bY^{[r]}
=
\bR_r'\bA_r\bR_r\bdel^{[r]}+\bpsi_r
\]
has a finite first moment under \eqref{eq:int1}--\eqref{eq:int2}.
Condition~\eqref{eq:int2} gives a finite second moment for the moment
contribution $\bpsi_r$, so the multivariate CLT applies to
$N^{-1/2}\sum_i\bpsi_{r,i}$.
Since
$\widehat\bM_{r\bA}\pto\bM_{r\bA}$ and $\bM_{r\bA}$ is nonsingular,
Slutsky's theorem gives \eqref{eq:clt}.

We next verify that $\bGam_r(\bW)$ in \eqref{eq:Gamma} is finite almost
surely under \eqref{eq:int2}.
Since $\bY^{[r]}=\bR_r\bD^{[r]}$, the residual
$e_r=\bR_r(\bD^{[r]}-\bdel^{[r]})$ lies in $\operatorname{col}(\bR_r)$.
The map $e\mapsto\bR_r'\bA_re$ is injective on
$\operatorname{col}(\bR_r)$: if $e=\bR_rc\neq\bzero$ and
$\bR_r'\bA_re=\bzero$, then $0=c'\bR_r'\bA_re=e'\bA_re$, contradicting
positive definiteness of $\bA_r$ on $\operatorname{col}(\bR_r)$.
For almost every fixed $\bW=w$, finite dimensionality therefore yields a
finite constant $C(w)$ such that $\|e\|\le C(w)\|\bR_r'\bA_re\|$ for all
$e\in\operatorname{col}(\bR_r(w))$.
Because \eqref{eq:int2} states $\E\|\bpsi_r\|^2<\infty$, the conditional
expectation $\E[\|\bpsi_r\|^2\mid\bW]$ is finite almost surely, so
\[
\E[\|e_r\|^2\mid\bW]
\le
C(\bW)^2\,\E[\|\bpsi_r\|^2\mid\bW]
<\infty
\quad\text{almost surely,}
\]
and $\bGam_r(\bW)=\Var(e_r\mid\bW)$ is finite almost surely.

Finally,
\[
\E[\bpsi_r\bpsi_r'\mid\bW]
=
\bR_r'\bA_r\Var(\bY^{[r]}\mid\bW)\bA_r\bR_r,
\]
which yields \eqref{eq:V} by iterated expectations.
Consistency follows from the same WLLN and continuous-mapping argument
without the CLT.  That argument requires only \eqref{eq:int1} and
$\E\|\bpsi_r\|<\infty$, proving the final claim.
\end{proof}

\begin{proof}[Proof of Theorem~\ref{thm:robust}]
Write
\[
\widehat e_{r,i}
=
e_{r,i}
-
\bR_{r,i}(\widehat\bdel^{[r]}-\bdel^{[r]}).
\]
Therefore
\[
\widehat\bpsi_{r,i}
=
\bpsi_{r,i}
-
\bR_{r,i}'\bA_{r,i}\bR_{r,i}
(\widehat\bdel^{[r]}-\bdel^{[r]}).
\]
Let
\[
\bH_i=\bR_{r,i}'\bA_{r,i}\bR_{r,i},
\qquad
\Delta=\widehat\bdel^{[r]}-\bdel^{[r]}=O_p(N^{-1/2}).
\]
Then
\[
\frac1N\sum_i\widehat\bpsi_{r,i}\widehat\bpsi_{r,i}'
=
\frac1N\sum_i\bpsi_{r,i}\bpsi_{r,i}'
-\frac1N\sum_i\bpsi_{r,i}\Delta'\bH_i
-\frac1N\sum_i\bH_i\Delta\bpsi_{r,i}'
+\frac1N\sum_i\bH_i\Delta\Delta'\bH_i.
\]
The leading term converges to $\bV_{r\bA}$ by the WLLN under
\eqref{eq:int2}.  For the cross terms, Cauchy--Schwarz gives
\[
\Big\|\frac1N\sum_i\bpsi_{r,i}\Delta'\bH_i\Big\|
\le
\|\Delta\|
\left(\frac1N\sum_i\|\bpsi_{r,i}\|^2\right)^{1/2}
\left(\frac1N\sum_i\|\bH_i\|^2\right)^{1/2}.
\]
The two sample averages are $O_p(1)$ by \eqref{eq:int2} and
\eqref{eq:int3}, respectively, while $\|\Delta\|=o_p(1)$.
Thus the cross terms are $o_p(1)$.  The final term satisfies
\[
\left\|
\frac1N\sum_i\bH_i\Delta\Delta'\bH_i
\right\|
\le
\|\Delta\|^2
\frac1N\sum_i\|\bH_i\|^2
=o_p(1)
\]
by \eqref{eq:int3}.  Hence \eqref{eq:Vhatcons}.
The sandwich convergence follows by continuous mapping and nonsingularity
of $\bM_{r\bA}$.
\end{proof}

\begin{proof}[Proof of Theorem~\ref{thm:efficient}]
Let
\[
e
=
\bY^{[r]}-\bR_r\bdel^{[r]}
=
\bR_r(\bD^{[r]}-\bdel^{[r]})
\]
and define
\[
u=\bR_r'\bA_r e,
\qquad
v=\bR_r'\bGam_r^{+}e.
\]
Under Assumption~\ref{ass:nondeg},
\[
\operatorname{col}(\bGam_r)
=
\operatorname{col}(\bR_r)
\quad\text{a.s.},
\]
so
$\bGam_r\bGam_r^{+}\bR_r=\bR_r$ almost surely.
Together with
$\bGam_r^{+}\bGam_r\bGam_r^{+}=\bGam_r^{+}$ and
$\E[ee'\mid\bW]=\bGam_r$, this gives
\begin{align*}
\E[uv']
&=
\E[\bR_r'\bA_r\bGam_r\bGam_r^{+}\bR_r]
=
\bM_{r\bA},\\
\E[vv']
&=
\E[\bR_r'\bGam_r^{+}\bGam_r\bGam_r^{+}\bR_r]
=
\E[\bR_r'\bGam_r^{+}\bR_r]
=
\bB_r.
\end{align*}
Both expectations are finite:
$\E\|u\|^2<\infty$ by \eqref{eq:int2}, while
$\E\|v\|^2=\operatorname{tr}\bB_r<\infty$ by
Assumption~\ref{ass:nondeg}.

Positive semidefiniteness of
\[
\E\left[
(u-\bM_{r\bA}\bB_r^{-1}v)
(u-\bM_{r\bA}\bB_r^{-1}v)'
\right]
\]
implies
\[
\bV_{r\bA}
=
\E[uu']
\succeq
\bM_{r\bA}\bB_r^{-1}\bM_{r\bA}.
\]
Pre- and post-multiplying by $\bM_{r\bA}^{-1}$ gives
\eqref{eq:effineq}.

If $\bA_r=\bGam_r^{+}$, then $u=v$, so
\[
\bV_{r\bA^*}
=
\bM_{r\bA^*}
=
\bB_r.
\]
The asymptotic covariance is therefore $\bB_r^{-1}$.
The weight $\bGam_r^{+}$ is admissible under the conditions stated in
Theorem~\ref{thm:efficient}, so Theorem~\ref{thm:asynorm} applies.

It remains to establish the Chamberlain bound.
For the conditional moment restriction \eqref{eq:cmr}, the derivative of
the conditional mean with respect to $\bdel^{[r]}$ is
$-\bR_r(\bW)$ and the conditional covariance is $\bGam_r(\bW)$.
For $F_W$-almost every $w$ with $\rank(w)=T$,
$\bGam_r(w)$ is nonsingular, and Chamberlain's (1987) optimal-instrument
formula gives conditional information
\[
\bR_r(w)'\bGam_r(w)^{-1}\bR_r(w)
=
\bB_r(w).
\]

Fix $w\in\cW$ with $\rank(w)=k<T$ at which $\bSig_r(w)$ is positive definite; Assumption~\ref{ass:nondeg} guarantees this for $F_W$-almost every $w$.
Let $w_\circ$ consist of $k$ linearly independent rows of $w$.
Since every row of $w$ is a linear combination of the rows of $w_\circ$,
there is a $T\times k$ matrix $\mathbf F$ of rank $k$ such that
\[
w=\mathbf Fw_\circ.
\]
If
$\bY_\circ=w_\circ\bD$, then
$\bY=\mathbf F\bY_\circ$ and Lemma~\ref{lem:rep} gives
\[
\bY^{[r]}
=
\bR_r(\mathbf F)\bY_\circ^{[r]},
\qquad
\bR_r(w)
=
\mathbf E\,\bR_r(w_\circ),
\]
where
\[
\mathbf E=\bR_r(\mathbf F)
\]
has dimension
$S_r\times\binom{k+r-1}{r}$.

The matrix $\mathbf E$ has full column rank.
Indeed, by Lemma~\ref{lem:polar},
$\bR_r(\mathbf F)\bc=\bzero$ if and only if
$P_{\bc}$ vanishes on
$\rowsp(\mathbf F)=\mathbb R^k$, which implies $\bc=\bzero$.
The conditional restriction \eqref{eq:cmr} at $w$ is therefore equivalent
to
\[
\E[
\bY_\circ^{[r]}
-
\bR_r(w_\circ)\bdel^{[r]}
\mid\bW=w
]
=
\bzero.
\]
Its conditional covariance is
\[
\bGam_\circ
=
\bR_r(w_\circ)\bSig_r(w)\bR_r(w_\circ)',
\]
which is nonsingular by Lemma~\ref{lem:rank} and positive definiteness of
$\bSig_r(w)$.

Chamberlain's formula for this reduced set of linearly independent
restrictions gives conditional information
\[
\bR_r(w_\circ)'\bGam_\circ^{-1}\bR_r(w_\circ).
\]
Since
\[
\bGam_r(w)
=
\mathbf E\bGam_\circ\mathbf E'
\]
with $\mathbf E$ of full column rank and $\bGam_\circ$ nonsingular,
\[
\bGam_r(w)^{+}
=
(\mathbf E')^{+}
\bGam_\circ^{-1}
\mathbf E^{+},
\qquad
\mathbf E^{+}
=
(\mathbf E'\mathbf E)^{-1}\mathbf E',
\]
and $\mathbf E^{+}\mathbf E=\bI$.  Hence
\[
\bR_r(w)'\bGam_r(w)^{+}\bR_r(w)
=
\bR_r(w_\circ)'\mathbf{E}'(\mathbf{E}')^{+}\bGam_\circ^{-1}\mathbf{E}^{+}\mathbf{E}\bR_r(w_\circ)
=
\bR_r(w_\circ)'\bGam_\circ^{-1}\bR_r(w_\circ).
\]
Thus $\bB_r(w)$ in \eqref{eq:Bcond} is the Chamberlain conditional
information matrix for $F_W$-almost every $w$.
Averaging over $\bW$ gives the information matrix $\bB_r$ and
the efficiency bound $\bB_r^{-1}$.
\end{proof}

\begin{proof}[Proof of Corollary~\ref{cor:infogeom}]
(i)
The matrix $\bB_r$ is finite by assumption and positive semidefinite.
Thus $\bc\in\ker\bB_r$ if and only if
$\bc'\bB_r\bc=0$.  For $\bc\in\mathbb R^{Q_r}$,
\[
\bc'\bB_r\bc
=
\E\big[(\bR_r\bc)'\bGam_r^{+}(\bR_r\bc)\big].
\]
The integrand is nonnegative.
Moreover, $\bGam_r(\bW)^{+}$ is positive definite on
$\operatorname{col}(\bR_r(\bW))$ almost surely because
$\operatorname{col}(\bGam_r)=\operatorname{col}(\bR_r)$ under
Assumption~\ref{ass:nondeg}.
Hence
$\bc'\bB_r\bc=0$ if and only if
$\bR_r(\bW)\bc=\bzero$ almost surely.

The map $w\mapsto\bR_r(w)\bc$ is polynomial in the entries of $w$ and
therefore continuous.  Its zero set is closed.  If that set has probability
one, it contains $\cW=\suppp(\bW)$.
Conversely, if $\bR_r(w)\bc=\bzero$ for every $w\in\cW$, then
$\bR_r(\bW)\bc=\bzero$ almost surely.
Therefore
\[
\ker\bB_r
=
\{\bc:\bR_r(w)\bc=\bzero\ \text{for every }w\in\cW\}
=
\bigcap_{w\in\cW}\ker\bR_r(w)
=
\cN_r,
\]
by \eqref{eq:Nr}.
Lemma~\ref{lem:polar} identifies $\cN_r$ with $\cI_r(\cX)$ through
$\bc\mapsto P_{\bc}$.
Theorem~\ref{thm:support} then implies that $\bB_r$ is nonsingular if and
only if the support condition holds.

(ii)
Fix $w$ such that $\bSig_r(w)$ is positive definite.
Then
\[
\operatorname{col}(\bGam_r(w))
=
\operatorname{col}(\bR_r(w)).
\]
If $\bB_r(w)\bc=\bzero$, then
\[
\bc'\bB_r(w)\bc
=
(\bR_r(w)\bc)'\bGam_r(w)^{+}(\bR_r(w)\bc)
=
0.
\]
Because $\bR_r(w)\bc$ lies in
$\operatorname{col}(\bGam_r(w))$, on which
$\bGam_r(w)^{+}$ is positive definite, this implies
$\bR_r(w)\bc=\bzero$.
The converse is immediate.
Hence
\[
\ker\bB_r(w)=\ker\bR_r(w)
\]
and
\[
\rank\bB_r(w)
=
\rank\bR_r(w)
=
\binom{\rank(w)+r-1}{r}
\]
by \eqref{eq:general-rank}.
For $\rank(w)=1$, this rank equals one.
If $T<q$, then $\rank(w)\le T$, and therefore
\[
\rank\bB_r(w)
\le
\binom{T+r-1}{r}
=
S_r
<
Q_r.
\]
Thus $\bB_r(w)$ is singular at every such $w$.
Nonsingularity of the matrix $\bB_r$ under the support condition
follows from part (i).
\end{proof}

\begin{proof}[Proof of Corollary~\ref{cor:exactrecover}]
If $q=T$ and $\rank(\bW)=T$, then $S_r=Q_r$ and
Lemma~\ref{lem:rank} implies that $\bR_r$ is invertible.
Equation~\eqref{eq:Drecover} therefore follows from \eqref{eq:lift}.

Under conditional homoskedasticity,
\[
\bGam_r
=
\bR_r\bSig_r\bR_r',
\qquad
\bGam_r^{-1}
=
\bR_r'^{-1}\bSig_r^{-1}\bR_r^{-1},
\]
so
\[
\bR_r'\bGam_r^{-1}\bR_r
=
\bSig_r^{-1}
\]
is constant across observations.
The weighted estimator therefore reduces to the sample mean of the
recovered $\bD_i^{[r]}$, and the stated limiting distribution follows from
the multivariate CLT.

For \eqref{eq:qTharmonic},
\[
\bR_r'\bGam_r^{-1}\bR_r
=
\bSig_r(\bW)^{-1}
\]
holds realization by realization without conditional homoskedasticity.
Thus
\[
\bM_{r\bA^*}
=
\bV_{r\bA^*}
=
\E[\bSig_r(\bW)^{-1}],
\]
and the sandwich covariance reduces to
\[
\big(\E[\bSig_r(\bW)^{-1}]\big)^{-1}.
\]
Since
$\E[\bD^{[r]}\mid\bW]=\bdel^{[r]}$ by
Assumption~\ref{ass:moment}, the variance decomposition gives
\[
\Var(\bD^{[r]})
=
\E[\bSig_r(\bW)],
\]
which is the asymptotic covariance of
$\overline{\bD^{[r]}}$.

For the matrix inequality, let
\[
\bB=(\E[\bSig_r])^{-1}.
\]
Almost surely,
\[
\big(\bSig_r^{-1/2}-\bSig_r^{1/2}\bB\big)'
\big(\bSig_r^{-1/2}-\bSig_r^{1/2}\bB\big)
=
\bSig_r^{-1}-2\bB+\bB\bSig_r\bB
\succeq
\bzero.
\]
Taking expectations gives
\[
\E[\bSig_r^{-1}]
\succeq
2\bB-\bB\E[\bSig_r]\bB
=
(\E[\bSig_r])^{-1}.
\]
Inverting the ordering of positive definite matrices yields
\[
(\E[\bSig_r^{-1}])^{-1}
\preceq
\E[\bSig_r].
\]
Equality implies that the expectation of the positive semidefinite matrix
above is zero and therefore
\[
\bSig_r^{-1/2}
=
\bSig_r^{1/2}\bB
\quad\text{a.s.}
\]
Thus
$\bSig_r(\bW)=\E[\bSig_r]$ almost surely.
The converse is immediate.
\end{proof}

\begin{proof}[Proof of Lemma~\ref{lem:dims}]
The number of distinct products of total degree at most $r$ in $T$
variables is $\binom{T+r}{r}$.
Appending a slack variable that absorbs the deficient degree gives a
bijection with the degree-$r$ products in $T+1$ variables, whose number is
\[
\binom{(T+1)+r-1}{r}.
\]
Removing the degree-zero product gives
\[
S_{\le r}=\binom{T+r}{r}-1.
\]
The same argument with $q$ gives $Q_{\le r}$.
Since $\bR_{\le r}$ is block diagonal and each block $\bR_s$ has row rank
$S_s$ whenever $\rank(\bW)=T$, its rank is
\[
\sum_{s=1}^rS_s=S_{\le r}
\]
almost surely.
\end{proof}

\begin{proof}[Proof of Theorem~\ref{thm:stacked}]
Apply Theorems~\ref{thm:ident}--\ref{thm:robust} to the stacked equation
\eqref{eq:stackrep}.
Assumption~\ref{ass:momentstack} gives
\[
\E[\bD^{[\le r]}\mid\bW]=\bdel^{[\le r]},
\]
while Proposition~\ref{prop:toporder} and
Theorem~\ref{thm:support} give nonsingularity of
$\bM_{\le r,\bA}$.
The covariance of the stacked moment contributions and the oracle bound
follow from the same calculations as in
Theorems~\ref{thm:asynorm} and \ref{thm:efficient}; in particular, the
argument establishing almost-sure finiteness of $\bGam_r(\bW)$ in the
proof of Theorem~\ref{thm:asynorm} applies verbatim with
$\bR_{\le r}$ and $\bA_{\le r}$ in place of $\bR_r$ and $\bA_r$, so
$\bGam_{\le r}(\bW)$ is finite almost surely under the stacked analogue
of \eqref{eq:int2}.
In particular, positive definiteness of
$\bSig_{\le r}(\bW)$ implies
\[
\operatorname{col}(\bGam_{\le r})
=
\operatorname{col}(\bR_{\le r}),
\]
so
\[
\bGam_{\le r}\bGam_{\le r}^{+}\bR_{\le r}
=
\bR_{\le r}.
\]
The reduction to linearly independent rows used in the proof of
Theorem~\ref{thm:efficient} then applies to each block of
$\bR_{\le r}$.
\end{proof}

\begin{proof}[Proof of Lemma~\ref{lem:conversion}]
The multivariate binomial theorem gives
\[
(\bD-\bmu)^{\ba}
=
\sum_{\bb\le\ba}
\binom{\ba}{\bb}
\bD^{\bb}(-\bmu)^{\ba-\bb}.
\]
Taking expectations yields \eqref{eq:rawcentral}.
The inverse follows by applying the same expansion to
$\bD=\bDt+\bmu$ and using
$\E[\bDt^{e_j}]=0$.
\end{proof}

\begin{proof}[Proof of Lemma~\ref{lem:collected}]
In \eqref{eq:rawcentral}, separate the terms with
$|\bb|=0$, $|\bb|=1$, and $|\bb|>1$.
Because
$\delta_{\bzero}=1$ and $\delta_{e_j}=\mu_j$, the first two groups are
proportional to $\bmu^{\ba}$.
Their combined coefficient is
\[
(-1)^{|\ba|}
+
\sum_{j=1}^q(-1)^{|\ba|-1}a_j
=
(-1)^{|\ba|}(1-|\ba|),
\]
which gives \eqref{eq:collected}.
\end{proof}

\begin{proof}[Proof of Lemma~\ref{lem:derivs}]
Differentiate \eqref{eq:collected}, treating $\bmu$ and the raw moments
$\delta_{\bb}$ with $|\bb|>1$ as free coordinates.
The derivative of $\bmu^{\ba-\bb}$ with respect to $\mu_k$ is
\[
(a_k-b_k)\bmu^{\ba-\bb-e_k}
\]
when $b_k\le a_k-1$, and zero otherwise.
This gives \eqref{eq:dgdm}.
Equation~\eqref{eq:dgdd} is the coefficient on $\delta_{\bb}$ in
\eqref{eq:collected}.
\end{proof}

\begin{proof}[Proof of Lemma~\ref{lem:jacobian}]
A central moment of total degree $s$ depends only on raw moments of degree
at most $s$.
Hence $\bG$ is block lower triangular when the coordinates are ordered by
degree.
Within degree $s$, the coefficient on $\delta_{\ba}$ in
$\gamma_{\ba}$ is one, while no other degree-$s$ raw moment appears because
$\bb\le\ba$ and $|\bb|=|\ba|$ imply $\bb=\ba$.
Thus every diagonal block is an identity matrix and
$\det(\bG)=1$.
The inverse polynomial map is given explicitly by
\eqref{eq:centralraw}.
\end{proof}

\begin{proof}[Proof of Theorem~\ref{thm:delta}]
The map from $\bdel^{[\le r]}$ to $\bth^{[\le r]}$ is polynomial and hence
continuously differentiable, with Jacobian $\bG$.
Applying the multivariate delta method to Theorem~\ref{thm:stacked} gives
\eqref{eq:deltaclt}.
Consistency of $\widehat\bG$ follows from continuity of $\bG$ and
consistency of $\widehat\bdel^{[\le r]}$.
Combining this result with consistency of the stacked sandwich covariance
estimator in Theorem~\ref{thm:stacked}, which holds under the stacked
analogue of \eqref{eq:int3}, gives the stated plug-in covariance
estimator.
\end{proof}

\begin{proof}[Proof of Theorem~\ref{thm:partialset}]
($\subseteq$)
Any structure satisfying the maintained model with moment vector $\bdel^{[\le r]}$ that
reproduces \eqref{eq:info} satisfies, for every $s$,
\[
\bR_s(\bW)\big(\bdel^{[s]}-\bdel_0^{[s]}\big)=\bzero
\quad\text{a.s.}
\]
The map
$w\mapsto\bR_s(w)(\bdel^{[s]}-\bdel_0^{[s]})$ is continuous, so its zero
set is closed.  Since that set has probability one, it contains
$\cW=\suppp(\bW)$.  Hence
$\bdel^{[s]}-\bdel_0^{[s]}\in\cN_s$.
Moreover, $\bdel^{[\le r]}$ is the moment vector of the coefficient
distribution under that structure, so
$\bdel^{[\le r]}\in\cM_{\le r}$.

($\supseteq$)
Let $\bdel^{[\le r]}\in\Theta_I$.
By the definition of $\cM_{\le r}$, there exists a distribution $F$ with
moments $\bdel^{[\le r]}$.
Let $\bD\sim F$ independently of $\bW$.
Then Assumption~\ref{ass:momentstack} holds and, for $F_W$-almost every
$w$,
\[
\E[\bY^{[s]}\mid\bW=w]
=
\bR_s(w)\bdel^{[s]}
=
\bR_s(w)\bdel_0^{[s]}
=
m_s(w),
\]
because
$\bdel^{[s]}-\bdel_0^{[s]}\in\cN_s$.
This proves both inclusion and the independence-attainability claim.

Convexity follows because $\cM_{\le r}$ is convex: the moment map is
linear in the underlying distribution and mixtures of probability
distributions remain probability distributions.
The affine set
$\bdel_0^{[\le r]}+\cN_{\le r}$ is also convex.

For the functional claim, if
$\ba\perp\cN_{\le r}$, then
$\ba'\bdel^{[\le r]}$ is constant on every affine set
$\bdel_0^{[\le r]}+\cN_{\le r}$ and therefore on the corresponding
identified set.

Conversely, suppose
$\ba'\mathbf n\neq0$ for some
$\mathbf n\in\cN_{\le r}$.
The stacked moment vectors of point masses
\[
\left\{
(\bx^{[1]\prime},\ldots,\bx^{[r]\prime})':
\bx\in\mathbb R^q
\right\}
\]
affinely span $\mathbb R^{Q_{\le r}}$.
Otherwise, a nontrivial affine relation among these vectors would give a
nonzero polynomial of degree at most $r$ that vanishes on all of
$\mathbb R^q$.
Consequently, one can choose $Q_{\le r}+1$ points whose stacked moment
vectors are affinely independent.
A strictly positive convex combination of those vectors lies in the
interior of their full-dimensional simplex, which is contained in
$\cM_{\le r}$.

Choose such a vector as $\bdel_0^{[\le r]}$ and let $\bD$ have the
corresponding finite-support distribution independently of $\bW$.
For sufficiently small $\varepsilon>0$,
\[
\bdel_0^{[\le r]}\pm\varepsilon\mathbf n
\in
\cM_{\le r}.
\]
Because $\mathbf n\in\cN_{\le r}$, both vectors also belong to the same
affine observational-equivalence class and hence to $\Theta_I$, while
$\ba'\bdel^{[\le r]}$ differs between them.
Thus a linear functional is identified under the maintained
conditional-moment model if and only if its coefficient vector is
orthogonal to $\cN_{\le r}$.
The stacked dimension count is \eqref{eq:hilbertstack}.
\end{proof}

\begin{proof}[Proof of Theorem~\ref{thm:spectra}]
Necessity follows because, for any distribution $F$ of $\bD$ with the
candidate moments,
\[
M(\bdel^{[\le2]})
=
\E[(1,\bD')'(1,\bD')]
\succeq0.
\]
Conversely, if
$M(\bdel^{[\le2]})\succeq0$, the Schur complement gives
\[
\operatorname{mat}(\bdel^{[2]})
-
\bdel^{[1]}(\bdel^{[1]})'
\succeq0.
\]
A Gaussian distribution with mean $\bdel^{[1]}$ and this covariance matrix
has exactly the candidate first and second moments.
Hence
\[
\cM_{\le2}
=
\{\bdel^{[\le2]}:
M(\bdel^{[\le2]})\succeq0\},
\]
and Theorem~\ref{thm:partialset} gives the spectrahedron
\eqref{eq:spectra}.

The set is closed because
$\bdel^{[\le2]}\mapsto M(\bdel^{[\le2]})$ is affine and continuous and the
positive-semidefinite cone is closed.
Optimizing a linear functional over this affine set subject to the linear
matrix inequality is therefore a semidefinite program.
Theorem~\ref{thm:partialset} makes the resulting infimum and supremum sharp,
and the Gaussian construction shows that every feasible moment vector is
attained by a structure satisfying the maintained model.
If an optimum is attained, its endpoint is therefore attained by a possibly
singular Gaussian structure.
\end{proof}

\begin{proof}[Proof of Corollary~\ref{cor:sharpdata}]
Fix $w\in\cW$ with $\rank(w)=k$, write
\[
m(w)=\E[\bY\mid\bW=w],
\qquad
\boldsymbol{\Omega}(w)=\Var(\bY\mid\bW=w),
\]
and abbreviate
$\mathbf O=\mathbf O_w$ and $\mathbf N=\mathbf N_w$.
Since $\bY=w\bD$,
$\bY\in\operatorname{col}(w)=\operatorname{col}(w\mathbf O)$ almost surely.
Because $w\mathbf O$ has full column rank,
\[
(w\mathbf O)^{+}
=
\big((w\mathbf O)'(w\mathbf O)\big)^{-1}(w\mathbf O)'
\]
and
\[
w\mathbf O(w\mathbf O)^{+}\bY
=
\bY
\quad\text{a.s.}
\]

\emph{Restrictions implied by $\Theta_I$ at $w$.}
Write a candidate as $(\bmu,\bSig)$, where
\[
\bmu=\bdel^{[1]},
\qquad
\bSig
=
\operatorname{mat}(\bdel^{[2]})-\bmu\bmu'.
\]
Membership in
$\bdel_0^{[\le2]}+\cN_{\le2}$ implies
\[
\bR_1(w)(\bmu-\bmu_0)=\bzero
\]
and
\[
\bR_2(w)(\bdel^{[2]}-\bdel_0^{[2]})=\bzero.
\]
By \eqref{eq:vech}, these conditions are equivalent to
\[
w\bmu=m(w),
\qquad
w\bSig w'=\boldsymbol{\Omega}(w).
\]
Membership in $\cM_{\le2}$ is equivalent to $\bSig\succeq0$ by
Theorem~\ref{thm:spectra}.

Partition $\bSig$ in the coordinates $(\mathbf O,\mathbf N)$:
\[
\bSig_{OO}=\mathbf{O}'\bSig\mathbf{O},
\qquad
\bSig_{NO}=\mathbf{N}'\bSig\mathbf{O},
\qquad
\bSig_{NN}=\mathbf{N}'\bSig\mathbf{N}.
\]
Since
$w=w\mathbf O\mathbf O'$, the affine restrictions imply
\[
\mathbf{O}'\bmu=(w\mathbf{O})^{+}m(w),
\qquad
\bSig_{OO}
=
(w\mathbf{O})^{+}\boldsymbol{\Omega}(w)(w\mathbf{O})^{+\prime}.
\]
Thus the row-space components of the candidate mean and covariance are fixed
by the observed conditional first and second moments, while
$\mathbf N'\bmu$, $\bSig_{NO}$, and $\bSig_{NN}$ are restricted only by
$\bSig\succeq0$.
For the partitioned covariance matrix, this is equivalent to
\begin{equation}\label{eq:albert}
\bSig_{OO}\succeq0,
\qquad
\bSig_{NO}=\bSig_{NO}\bSig_{OO}^{+}\bSig_{OO},
\qquad
\bSig_{NN}-\bSig_{NO}\bSig_{OO}^{+}\bSig_{ON}\succeq0
\end{equation}
(Albert, 1969), where $\bSig_{ON}=\bSig_{NO}'$.

\emph{Construction.}
Enlarge the probability space to carry
$\boldsymbol{\eta}\sim N(\bzero,\bI_{q-k})$ independently of
$(\bY,\bW)$, and set
\[
\bD
=
\mathbf{O}(w\mathbf{O})^{+}\bY+\mathbf{N}\mathbf{U},
\qquad
\mathbf{U}
=
\mathbf{a}+\mathbf{B}\bY+\mathbf{C}\boldsymbol{\eta},
\]
where
\[
\mathbf{B}=\bSig_{NO}\bSig_{OO}^{+}(w\mathbf{O})^{+},
\qquad
\mathbf{C}\mathbf{C}'=\bSig_{NN}-\bSig_{NO}\bSig_{OO}^{+}\bSig_{ON},
\qquad
\mathbf{a}=\mathbf{N}'\bmu-\mathbf{B}m(w).
\]
The matrix $\mathbf C$ exists by the third condition in
\eqref{eq:albert}.
Then
\[
w\bD
=
w\mathbf{O}(w\mathbf{O})^{+}\bY+w\mathbf{N}\mathbf{U}
=
\bY
\quad\text{almost surely},
\]
because $w\mathbf N=\bzero$.
Thus the construction preserves the conditional distribution of
$\bY$ given $\bW=w$.

\emph{Moments.}
Since
\[
\mathbf O'\bD=(w\mathbf O)^+\bY,
\qquad
\mathbf N'\bD=\mathbf U,
\]
we have
\[
\mathbf{O}'\E[\bD\mid\bW=w]
=
(w\mathbf{O})^{+}m(w)
=
\mathbf{O}'\bmu,
\]
and
\[
\mathbf{N}'\E[\bD\mid\bW=w]
=
\mathbf a+\mathbf Bm(w)
=
\mathbf N'\bmu.
\]
Hence
$\E[\bD\mid\bW=w]=\bmu$.

For the conditional covariance,
\begin{align*}
\Var(\mathbf{O}'\bD\mid\bW=w)
&=(w\mathbf{O})^{+}\boldsymbol{\Omega}(w)(w\mathbf{O})^{+\prime}
=\bSig_{OO},\\
\Cov(\mathbf{N}'\bD,\mathbf{O}'\bD\mid\bW=w)
&=\mathbf{B}\boldsymbol{\Omega}(w)(w\mathbf{O})^{+\prime}
=\bSig_{NO}\bSig_{OO}^{+}\bSig_{OO}
=\bSig_{NO},\\
\Var(\mathbf{N}'\bD\mid\bW=w)
&=\mathbf{B}\boldsymbol{\Omega}(w)\mathbf{B}'
+\mathbf{C}\mathbf{C}'\\
&=
\bSig_{NO}\bSig_{OO}^{+}\bSig_{OO}\bSig_{OO}^{+}\bSig_{ON}
+\mathbf{C}\mathbf{C}'
=
\bSig_{NN},
\end{align*}
where the second equality uses the second condition in
\eqref{eq:albert}, and the third uses
$\bSig_{OO}^{+}\bSig_{OO}\bSig_{OO}^{+}=\bSig_{OO}^{+}$ together with
the same condition.
Therefore
\[
\Var(\bD\mid\bW=w)=\bSig,
\]
and hence
\[
\E[\bD^{[\le2]}\mid\bW=w]
=
\bdel^{[\le2]}.
\]

The pointwise construction can be chosen measurably in $w$.
Indeed, the rank of $w$ takes finitely many values; on each rank stratum,
orthonormal bases $\mathbf O_w$ and $\mathbf N_w$ may be chosen measurably,
for example through a measurable QR or singular-value decomposition.
The Moore--Penrose inverse and the principal positive-semidefinite square
root used to define $\mathbf C$ are measurable functions of their matrix
arguments.  Thus $\mathbf O_w$, $\mathbf N_w$, $\mathbf a(w)$,
$\mathbf B(w)$, and $\mathbf C(w)$ can be chosen measurably.

Carrying out the construction for $F_W$-almost every $w$ therefore gives a
conditional distribution satisfying Assumption~\ref{ass:momentstack} and
reproducing the observed conditional distribution of $\bY$.
The identified set relative to the full joint distribution of
$(\bY,\bW)$ is contained in $\Theta_I$ because the full distribution
contains the conditional-moment information \eqref{eq:info}, and it
contains $\Theta_I$ by the construction.
\end{proof}

\begin{proof}[Proof of Corollary~\ref{cor:compact}]
By Theorem~\ref{thm:spectra}, $\Theta_I$ is closed and convex.
Hence it is compact if and only if its recession cone is
$\{\bzero\}$ (Rockafellar, 1970, Theorem~8.4).
Since
\[
\Theta_I
=
(\bdel_0^{[\le2]}+\cN_{\le2})\cap\cM_{\le2}
\]
is nonempty and $\cN_{\le2}$ is a linear subspace, the recession cone of
the intersection is the intersection of the recession cones
(Rockafellar, 1970, Corollary~8.3.3).

\emph{Recession cone of $\cM_{\le2}$.}
The map in \eqref{eq:momentmatrix} is affine:
\[
M(\bdel+t\mathbf d)
=
M(\bdel)+tM_0(\mathbf d),
\]
where
\[
M_0(\mathbf{d})
=
\begin{pmatrix}
0 & \mathbf{d}_1'\\
\mathbf{d}_1 & \operatorname{mat}(\mathbf{d}_2)
\end{pmatrix},
\qquad
\mathbf{d}=(\mathbf{d}_1',\mathbf{d}_2')'.
\]
A vector $\mathbf d$ is a recession direction of $\cM_{\le2}$ if and only
if
$M_0(\mathbf d)\succeq0$.
Sufficiency is immediate.
For necessity, divide
\[
M(\bdel)+tM_0(\mathbf d)\succeq0
\]
by $t$ and let $t\to\infty$; closedness of the positive-semidefinite cone
gives $M_0(\mathbf d)\succeq0$.

A positive-semidefinite matrix with a zero diagonal entry must have zero
off-diagonal entries in the corresponding row and column.
Therefore
$M_0(\mathbf d)\succeq0$ if and only if
\[
\mathbf d_1=\bzero,
\qquad
\operatorname{mat}(\mathbf d_2)\succeq0.
\]
Consequently,
\begin{equation}\label{eq:reccone}
\operatorname{rec}(\Theta_I)
=
\big\{(\bzero',\mathbf{d}_2')':\ \mathbf{d}_2\in\cN_2,\ \operatorname{mat}(\mathbf{d}_2)\succeq0\big\}.
\end{equation}

\emph{Characterization of \eqref{eq:reccone}.}
Under the convention \eqref{eq:Pc},
\[
P_{\mathbf d_2}(\bx)
=
\bx'\operatorname{mat}(\mathbf d_2)\bx,
\]
because $\binom{2}{\bb}$ equals one for square terms and two for cross
products.
By Theorem~\ref{thm:support},
$\mathbf d_2\in\cN_2$ if and only if
\[
\bx'\operatorname{mat}(\mathbf d_2)\bx=0
\qquad
\text{for every }\bx\in\cX.
\]
If
$\operatorname{mat}(\mathbf d_2)\succeq0$, this is equivalent to
\[
\|\operatorname{mat}(\mathbf d_2)^{1/2}\bx\|^2=0
\]
for every $\bx\in\cX$.
Hence
$\operatorname{mat}(\mathbf d_2)\bx=\bzero$ on
$\operatorname{span}(\cX)$.
If
$\operatorname{span}(\cX)=\mathbb R^q$, then
$\operatorname{mat}(\mathbf d_2)=\bzero$, so
$\operatorname{rec}(\Theta_I)=\{\bzero\}$ and $\Theta_I$ is compact.

Conversely, suppose
$\operatorname{span}(\cX)\neq\mathbb R^q$ and take a nonzero
$\bv$ orthogonal to $\cX$.
Let $\mathbf d_2$ be the coefficient vector corresponding to
$\bv\bv'$ in the ordering of $\bdel^{[2]}$.
Then
\[
\bx'\bv\bv'\bx=0
\]
for every $\bx\in\cX$ and
$\bv\bv'\succeq0$, so
$(\bzero',\mathbf d_2')'$ is a nonzero recession direction.
Along
\[
\bdel_0^{[\le2]}
+
t(\bzero',\mathbf d_2')',
\]
the candidate covariance matrix is
\[
\bSig_0+t\bv\bv',
\]
and
\[
\bv'(\bSig_0+t\bv\bv')\bv
=
\bv'\bSig_0\bv+t\|\bv\|^4
\longrightarrow\infty.
\]

\emph{Equivalent order-one conditions.}
By Theorem~\ref{thm:support} at $r=1$, $\cN_1$ is the space of linear
forms vanishing on $\cX$, hence the orthogonal complement of
$\operatorname{span}(\cX)$.
Therefore
\[
\cN_1=\{\bzero\}
\quad\Longleftrightarrow\quad
\operatorname{span}(\cX)=\mathbb R^q.
\]
Since $\bR_1(\bW)=\bW$, the identity-weight Gram matrix is
\[
\bM_{1\bI}=\E[\bW'\bW].
\]
When $\E\|\bW\|^2<\infty$, the identity weight is admissible, and
Theorem~\ref{thm:support} implies that this matrix is nonsingular if and
only if the order-one support condition holds.

\emph{Attainment.}
A linear functional on a nonempty compact set attains its extrema.
Every extremum belongs to $\Theta_I$, and by
Theorem~\ref{thm:spectra} is the moment vector of a possibly singular
Gaussian distribution with covariance
\[
\operatorname{mat}(\bdel^{[2]})
-
\bdel^{[1]}(\bdel^{[1]})'
\succeq0.
\]
\end{proof}

\begin{proof}[Proof of Proposition~\ref{prop:binarybounds}]
For the first moments, the two support points generate the one-dimensional
row spaces spanned by $(1,0)'$ and $(1,1)'$.
Together these span $\mathbb R^2$, so
$\cN_1=\{0\}$ and
$\bdel^{[1]}=(\mu_0,\mu_1)'$ is point identified by
Theorem~\ref{thm:support} at $r=1$.

For the second moments, the rows of $\bR_2(w)$ at $x=0$ and $x=1$ are
the coefficient vectors of
$(D_0+xD_1)^2$ in the ordering
$(D_0^2,D_0D_1,D_1^2)$:
\[
(1,0,0)
\qquad\text{and}\qquad
(1,2,1).
\]
Their common null space is
\[
\operatorname{span}\{(0,1,-2)'\}.
\]
Parametrize the affine class by $t\in\mathbb R$.
Because the first moments are fixed, shifting the raw second moments by
$t(0,1,-2)'$ shifts the coefficient covariance to
\[
\Sigma(t)
=
\begin{pmatrix}
\sigma_{00} & \sigma_{01}+t\\
\sigma_{01}+t & \sigma_{11}-2t
\end{pmatrix},
\]
where $\sigma_{jk}$ denote the true covariances.

By Theorem~\ref{thm:spectra}, feasibility is
$\Sigma(t)\succeq0$, that is,
$\sigma_{00}\ge0$ and
\[
\det\Sigma(t)
=
-t^2-2(\sigma_{00}+\sigma_{01})t+\sigma_{00}\sigma_{11}-\sigma_{01}^2
\ge0
\]
with $\sigma_{11}-2t\ge0$ additionally when needed; for
$\sigma_{00}>0$ the latter follows from the determinant condition.
The roots are
\[
t_{\pm}
=
-(\sigma_{00}+\sigma_{01})
\pm
\sqrt{\sigma_{00}(\sigma_{00}+2\sigma_{01}+\sigma_{11})}
=
-(\sigma_{00}+\sigma_{01})\pm s_0s_1,
\]
using
\[
s_0^2=\Var(Y\mid X=0)=\sigma_{00}
\]
and
\[
s_1^2
=
\Var(Y\mid X=1)
=
\sigma_{00}+2\sigma_{01}+\sigma_{11}.
\]
Since
$\Var(D_1)=\sigma_{11}-2t$ along the affine class, it is decreasing in
$t$.  Its identified interval is therefore
\[
\big[\sigma_{11}-2t_+,\ \sigma_{11}-2t_-\big]
=
\big[s_0^2+s_1^2-2s_0s_1,\ s_0^2+s_1^2+2s_0s_1\big]
=
\big[(s_1-s_0)^2,\ (s_1+s_0)^2\big].
\]
Every point in the interval is attained by a Gaussian structure by
Theorem~\ref{thm:spectra}.
If $s_0=0$, positive semidefiniteness implies
$\sigma_{01}=0$ and feasibility forces $t=0$, giving the singleton
$s_1^2$, as in \eqref{eq:binaryinterval}.

For the covariance,
\[
\Cov(D_0,D_1)
=
\sigma_{01}+t
\in
[-\sigma_{00}-s_0s_1,\,
 -\sigma_{00}+s_0s_1].
\]
\end{proof}

\section{Additional support geometry and partial identification}
\label{app:geometry}

\subsection{Dimension of model-level identified linear functionals}

Section~\ref{sec:partial} characterizes the identified set when the support
condition fails.
For a fixed order $r$, two candidate moment vectors
$\bdel_1^{[r]}$ and $\bdel_2^{[r]}$ generate the same conditional moments
over the regressor support if and only if their difference belongs to
$\cN_r$.
The linear functionals identified as parameters of the maintained
conditional-moment model therefore form the annihilator of
$\cN_r$ and have dimension
\begin{equation}\label{eq:hilbert}
Q_r-\dim\cI_r(\cX).
\end{equation}
For the stacked moment vector through order $r$, the corresponding dimension
is
\begin{equation}\label{eq:hilbertstack}
Q_{\le r}-\dim\cN_{\le r}
=
\sum_{s=1}^r\left[Q_s-\dim\cI_s(\cX)\right].
\end{equation}
Thus the same spaces of homogeneous polynomials that determine whether point
identification fails also determine the number of linearly independent
moment combinations that remain identified as parameters of the maintained
model; at particular degenerate distributions, moment feasibility can
identify additional functionals (Remark~\ref{rem:fixedvsmodel}).

\subsection{Simple sufficient conditions}

Each of the following is sufficient for the order-$r$ support condition:
\begin{enumerate}[label=(\roman*)]
\item the marginal support of some fixed row of $\bW$ is not contained in
the zero set of any nonzero homogeneous polynomial of degree $r$;
\item $\cX$ contains a nonempty open subset of $\mathbb R^q$
(equivalently, because $\cX$ is closed under scalar multiplication, it
contains a nonempty open cone), in which case moments of every fixed order
are identified;
\item $\cX$ contains $Q_r$ points
$\bx_1,\ldots,\bx_{Q_r}$ for which the order-$r$ product evaluation matrix
$[\bx_j^{\bb}]_{j,\bb}$ is nonsingular.
\end{enumerate}
For (i), if a homogeneous polynomial vanishes on $\cX$, then it vanishes
on each row of every $w\in\cW$ and, by continuity, on the marginal support
of each row of $\bW$.
Thus any cross-sectional polynomial-support condition satisfied by one row
is sufficient for the panel support condition.

\subsection{Generic rank for finite collections of row spaces}

Suppose the regressor support is finite with row spaces
$L_1,\ldots,L_m$.
Point identification under the maintained conditional-moment restrictions
is equivalent to full column rank of
\[
\begin{pmatrix}
\bR_r(w^{(1)})\\
\vdots\\
\bR_r(w^{(m)})
\end{pmatrix}.
\]
For fixed $(q,T,r,m)$, the set of tuples
$(w^{(1)},\ldots,w^{(m)})$ for which this matrix has rank at least $k$ is
Zariski open: at least one $k\times k$ minor must be nonzero.
Consequently, if one configuration attains full column rank, then full
column rank holds for Lebesgue-almost-every configuration in the same
ambient parameterization.
Bodmann, Ehler, and Gr\"af (2018) use the same nonempty-Zariski-open
argument to obtain generic maximal-rank results for finite collections of
projection subspaces.

The design counterpart of these rank calculations is the choice of the row
spaces themselves.
Maximizing the minimum eigenvalue of the averaged Gram matrix in
Remark~\ref{rem:fixeddesign} is an $E$-optimal design problem (Ehrenfeld,
1955; Pukelsheim, 1993).
Bodmann, Ehler, and Gr\"af (2018) give a related explicit construction:
orthogonal projection subspaces satisfying Grassmannian cubature identities
yield closed-form moment reconstruction, with
$\sum_j\omega_jP_j=(T/q)\bI_q$ at first order, where $P_j$ is the
orthogonal projector onto the $j$th $T$-dimensional row space.
Condition~\eqref{eq:fixeddesign-rank} instead applies to an arbitrary
sequence of regressor histories and requires only that the minimum
eigenvalue of the averaged Gram matrix remain bounded away from zero.
The row spaces determine whether full rank is attainable; the eigenvalue
criterion additionally depends on the normalization of the regressor
histories and on the weights, since $\bR_r(Fw)=\bR_r(F)\bR_r(w)$ for
nonsingular $F$ changes the Gram matrix without changing the row space.

When $T=2$ and $q\ge4$, the row spaces correspond projectively to lines in
$\mathbb P^{q-1}$.
The Hartshorne--Hirschowitz theorem (Hartshorne and Hirschowitz, 1982) that a generic union of lines in
$\mathbb P^n$, $n\ge3$, has good postulation implies the generic rank
\[
\min\{m(r+1),Q_r\}.
\]
For more general collections of subspaces, Derksen (2007) provides
Hilbert-polynomial and Hilbert-series tools under transversal or
combinatorially specified intersection patterns.
These results are used only to describe generic rank calculations; none is
required for Theorem~\ref{thm:support} or for the hyperplane and
two-history results in the main text.

\subsection{Finite exact-arithmetic certificates}

For a finite regressor support, direct computation of the rank of the
stacked matrix is an exact identification certificate.
If the entries of the support matrices are rational, clear denominators so
that the stacked matrix has integer entries.
If its reduction modulo some prime $p$ has column rank $Q_r$, then the
original matrix has column rank $Q_r$ over $\mathbb Q$, because a nonzero
minor modulo $p$ corresponds to a nonzero integer minor.
This provides a finite and reproducible rank certificate.

Several low-dimensional cases illustrate the rank calculation:
\begin{center}
\small
\begin{tabular}{@{}>{\raggedright\arraybackslash}p{0.34\textwidth}cccc@{}}
\toprule
Regressor support & $(q,T,r)$ & $m$ & $Q_r$ & rank\\
\midrule
one full-rank history & $(3,3,2)$ & 1 & 6 & 6\\
planes in $\mathbb R^3$ & $(3,2,2)$ & 1,2,3 & 6 & 3,5,6\\
planes in $\mathbb R^3$ & $(3,2,3)$ & 2,3,4 & 10 & 7,9,10\\
hyperplanes in $\mathbb R^4$ & $(4,3,2)$ & 2,3 & 10 & 9,10\\
$T=1$ generic row spaces & $(3,1,2)$ & 5,6 & 6 & 5,6\\
binary scalar $X$, within-unit movement & $(2,2,2)$ & 1 full-rank history & 3 & 3\\
binary scalar $X$, no within-unit movement & $(2,2,2)$ & 2 rank-one histories & 3 & 2\\
\bottomrule
\end{tabular}
\end{center}
The $(3,2,2)$ case shows why the counting condition
$mS_r\ge Q_r$ is not sufficient:
two planes supply six rows, but their one-dimensional intersection leaves
only five independent restrictions.

\section{A note on primitive moment conditions}\label{app:moments}

The main text states the integrability conditions directly in terms of the
order-$r$ moment contribution because these conditions allow dependence
between $\bD$ and $\bW$.
For bounded weights, simple primitive sufficient conditions can be given
separately for consistency and for asymptotic normality.

Since
\[
\bY^{[r]}=\bR_r\bD^{[r]},
\]
the moment contribution
\[
\bpsi_r
=
\bR_r'\bA_r\bR_r
(\bD^{[r]}-\bdel^{[r]})
\]
contains two factors of $\bR_r(\bW)$ and one factor of
$\bD^{[r]}$.
Because
\[
\|\bR_r\|\lesssim\|\bW\|^r,
\]
boundedness of $\bA_r$ implies, up to constants depending on
$\bdel^{[r]}$,
\[
\|\bpsi_r\|
\lesssim
\|\bW\|^{2r}(1+\|\bD\|^r).
\]
Hence
\begin{equation}\label{eq:primitive-cons}
\E\left[\|\bW\|^{2r}\big(1+\|\bD\|^{r}\big)\right]<\infty
\end{equation}
implies \eqref{eq:int1} and
$\E\|\bpsi_r\|<\infty$, which are the moment conditions used for
consistency in Theorem~\ref{thm:asynorm}.

Similarly,
\begin{equation}\label{eq:primitive}
\E\left[\|\bW\|^{4r}\big(1+\|\bD\|^{2r}\big)\right]<\infty
\end{equation}
implies \eqref{eq:int2}--\eqref{eq:int3}, because
\[
\|\bR_r'\bA_r\bR_r\|^2
\lesssim
\|\bW\|^{4r}
\]
and
\[
\|\bpsi_r\|^2
\lesssim
\|\bW\|^{4r}(1+\|\bD\|^{2r}).
\]
Thus the sufficient moment order required of $\bW$ is twice that required
of $\bD$ at each of these two stages.
Conditions of the symmetric form
\[
\E[\|\bD\|^{2\kappa r}+\|\bW\|^{2\kappa r}]<\infty,
\qquad
\kappa=1,2,
\]
would therefore impose a higher moment order on $\bD$ than
\eqref{eq:primitive-cons}--\eqref{eq:primitive} require; as explained
next, under Assumption~\ref{ass:moment} alone they are also not sufficient
substitutes, because marginal moment conditions do not control the required
cross moments.

Under full independence of $\bD$ and $\bW$, the joint moments factorize:
\eqref{eq:primitive-cons} follows from
\[
\E\|\bW\|^{2r}+\E\|\bD\|^r<\infty,
\]
and \eqref{eq:primitive} follows from
\[
\E\|\bW\|^{4r}+\E\|\bD\|^{2r}<\infty.
\]
Under Assumption~\ref{ass:moment} alone, these marginal moment conditions
are not sufficient.
Moment homogeneity restricts the conditional mean of
$\bD^{[r]}$ but not the relevant conditional dispersion, so it does not
control cross moments such as
\[
\E[\|\bW\|^{2r}\|\bD\|^r].
\]
For unbounded weights, including the oracle weight $\bGam_r^{+}$, the
corresponding weighted expectations are therefore best imposed directly.

\end{document}